\newtheorem{definition}{Definition}
\newtheorem{lemma}{Lemma}
\newtheorem{theorem}{Theorem}
\newcommand{\eat}[1]{}
\newcommand{\pages}[1]{} 
\newcommand{\eg}{\emph{e.g.}\xspace}
\newcommand{\ie}{\emph{i.e.}\xspace}
\newcommand{\etal}{{\em et~al.}\xspace}
\newcommand{\passiveObfs}{FEP-CPFA\xspace}
\newcommand{\bothObfs}{FEP-$x$}
\NewDocumentCommand{\activeObfs}{o}{\IfNoValueTF{#1}{FEP-CCFA\xspace}{FEP-CCFA-\tf{#1}\xspace}
}
\newcommand{\shaping}{Traffic Shaping\xspace}
\newcommand{\tf}[1]{\textsc{#1}}
\newcommand{\tok}[2][]{\texttt{#2}_{#1}}
\newcommand{\emptySt}{\epsilon}
\newcommand{\errDec}{\bot_{\tf{Dec}}}
\newcommand{\rgets}{\gets_\$}
\newcommand{\lenlen}{\ell_{\tok{len}}}
\newcommand{\cat}{\bigparallel}
\newcommand{\indcca}{IND-CCA}
\newcommand{\indcpa}{IND-CPA}
\newcommand{\indcxa}{IND-$x$}
\newcommand{\intctxt}{INT-CTXT}
\newcommand{\indlor}{LoR-IND}
\newcommand{\indccarecv}{Recv-IND}
\newcommand{\intctxtsend}{Send-INT}
\newcommand{\intctxtrecv}{Recv-INT}
\newcommand{\bothfepdgram}{FEP-$x$}
\newcommand{\fepcpa}{FEP-CPA}
\newcommand{\fepcca}{FEP-CCA}
\newcommand{\fepdgramsend}{Send-DG}
\newcommand{\fepdgramrecv}{Recv-DG}
\newcommand{\nullLength}{\ell_{\textrm{null}}}
\newcommand{\inLength}{\ell_{\textrm{in}}}
\newcommand{\outLength}{\ell_{\textrm{out}}}
\newcommand{\nonceSize}{\ell_{\textrm{Nonce}}}
\newcommand{\tagSize}{\ell_{\textrm{Tag}}}
\newcommand{\aeadOverhead}{\ell_{\textrm{Overhead}}}
\newcommand{\messageSpace}{\mathcal{M}_{\inLength}}
\newcommand{\emptyList}{[]}
\newcommand{\getsAppend}[2]{\gets \textsc{Append}(#1, #2)}
\newcommand{\nullsym}{\top}
\newif\iffull
\begin{document}

\title[Bytes to Schlep? Use a FEP: Hiding Protocol Metadata with Fully Encrypted Protocols]{Bytes to Schlep? Use a FEP:\\Hiding Protocol Metadata with Fully Encrypted Protocols}

\author{Ellis Fenske}
\orcid{0000-0003-2955-9521}
\affiliation{
	\institution{U.S. Naval Academy}
        \city{Annapolis}
        \state{Maryland}
	\country{U.S.A.}
}
\email{fenske@usna.edu}

\author{Aaron Johnson}
\orcid{0000-0002-2057-1690}
\affiliation{
	\institution{U.S. Naval Research Laboratory}
        \city{Washington}
        \state{D.C.}
	\country{U.S.A.}
}
\email{aaron.m.johnson213.civ@us.navy.mil}
\renewcommand{\shortauthors}{Ellis Fenske and Aaron Johnson}


\begin{abstract}
Fully Encrypted Protocols (FEPs) have arisen in practice as a technique to avoid network
censorship. Such protocols are designed to produce messages that appear completely random. This
design hides communications metadata, such as version and length fields, and makes it difficult to
even determine what protocol is being used. Moreover, these protocols frequently support padding to
hide the length of protocol fields and the contained message. These techniques have relevance well
beyond censorship circumvention, as protecting protocol metadata has security and privacy benefits
for all Internet communications. The security of FEP designs depends on cryptographic assumptions,
but neither security definitions nor proofs exist for them. We provide novel security definitions
that capture the metadata-protection goals of FEPs. Our definitions are given in both the datastream
and datagram settings, which model the ubiquitous TCP and UDP interfaces available to protocol
designers. We prove relations among these new notions and existing security definitions. We further
present new FEP constructions and prove their security. Finally, we survey existing FEP candidates
and characterize the extent to which they satisfy FEP security. We identify novel ways in which these protocols are identifiable, including their responses to the introduction of data errors and the sizes of their smallest protocol messages.
\end{abstract}
\begin{CCSXML}
<ccs2012>
   <concept>
       <concept_id>10002978.10002979.10002982</concept_id>
       <concept_desc>Security and privacy~Symmetric cryptography and hash functions</concept_desc>
       <concept_significance>300</concept_significance>
       </concept>
   <concept>
       <concept_id>10003033.10003039.10003040</concept_id>
       <concept_desc>Networks~Network protocol design</concept_desc>
       <concept_significance>500</concept_significance>
       </concept>
   <concept>
       <concept_id>10003033.10003039.10003050</concept_id>
       <concept_desc>Networks~Presentation protocols</concept_desc>
       <concept_significance>300</concept_significance>
       </concept>
   <concept>
       <concept_id>10003033.10003083.10011739</concept_id>
       <concept_desc>Networks~Network privacy and anonymity</concept_desc>
       <concept_significance>500</concept_significance>
       </concept>
   <concept>
       <concept_id>10003033.10003083.10003014.10003015</concept_id>
       <concept_desc>Networks~Security protocols</concept_desc>
       <concept_significance>500</concept_significance>
       </concept>
   <concept>
       <concept_id>10002978.10002991.10002995</concept_id>
       <concept_desc>Security and privacy~Privacy-preserving protocols</concept_desc>
       <concept_significance>500</concept_significance>
       </concept>
   <concept>
       <concept_id>10002978.10002979.10002983</concept_id>
       <concept_desc>Security and privacy~Cryptanalysis and other attacks</concept_desc>
       <concept_significance>300</concept_significance>
       </concept>
 </ccs2012>
\end{CCSXML}

\ccsdesc[300]{Security and privacy~Symmetric cryptography and hash functions}
\ccsdesc[500]{Networks~Network protocol design}
\ccsdesc[300]{Networks~Presentation protocols}
\ccsdesc[500]{Networks~Network privacy and anonymity}
\ccsdesc[500]{Networks~Security protocols}
\ccsdesc[500]{Security and privacy~Privacy-preserving protocols}
\ccsdesc[300]{Security and privacy~Cryptanalysis and other attacks}
\keywords{Cryptography; Network Protocols; Censorship Circumvention; Fully
Encrypted Protocols}
\maketitle

\section{Introduction}
One approach to avoid network censorship is to use a Fully Encrypted Protocol (FEP). FEPs are
designed to hide communications metadata, such as the precise length of a plaintext message, the
encryption algorithms being used, and the exact protocol being run. A goal of FEPs is that each byte
appears uniformly random (a design sometimes referred to as ``obfuscated'',
``look-like-nothing'', ``randomized'', or ``unfingerprintable''~\cite{dixon2016network,wu2023great,wang2015seeing,perrin2018noise}).
Censorship circumvention tools using FEPs include the popular Shadowsocks~\cite{shadowsocks},
obfs4~\cite{obfs4}, and VMess~\cite{vmess} systems, each of which provides its own unique FEP.
The Noise protocol framework~\cite{perrin2018noise} requires ciphertexts be indistinguishable from random to be ``censorship-resistant'' (although it fails to specify all
protocol aspects needed to satisfy that goal). While
methods to identify and block FEPs exist~\cite{wang2015seeing,wails:ndss:24}, FEPs
nonetheless continue to be effective and popular~\cite{shadowsocks,v2ray,obfs4}.

FEPs are also helpful outside of a censorship setting. Hiding protocol metadata can improve security
(\eg, preventing attacks targeted at a specific protocol or implementation) and privacy (\eg,
preventing traffic analysis of message lengths). Indeed, standard encrypted transport protocols like
TLS and QUIC have moved over time towards encrypting more protocol metadata, as it is repeatedly
observed that seemingly innocuous metadata is unexpectedly sensitive. The Pseudorandom cTLS
extension~\cite{ctls-pseudorandom} is a FEP that has recently been proposed to the IETF, citing
security and privacy as its main motivations. In addition, the use of FEPs can prevent Internet
ossification because they provide little metadata for middleboxes to operate on, a strategy similar
to the use of random extensions in GREASE~\cite{benjamin2020rfc}. Thus, we consider FEPs to be a
natural endpoint of encrypted transport protocol development.

Despite these motivations, FEPs have primarily been developed by the open-source community.
Consequently, real-world protocols have been developed without mathematical security goals, even
though their security depends on cryptographic assumptions\footnote{The Pseudorandom cTLS
proposal~\cite{ctls-pseudorandom} cites the need for such formalization: ``TODO: More precise
security properties and security proof. The goal we're after hasn't been widely considered in the
literature so far, at least as far as we can tell.''}. Also, in practice, different developers make
a variety of design errors~\cite{fifield2023flaws}, as they lack a common set of FEP techniques
and pitfalls.

Therefore, we propose novel, precise security definitions for FEPs. We formalize the goal of
producing apparently random protocol messages with a passive adversary, and then we extend that to
the goal of appearing otherwise predictable (and thus not leaking information) with respect to an
active adversary. The predictability applies to the protocol's behavior when ciphertexts are
modified and to the protocol's use of \emph{channel closures}, an observable feature of
connection-based transport protocols such as QUIC and TCP. Research has shown that channel closures
can be used to identify specific FEPs~\cite{frolov2020detecting}, and real-world measurements
indicate that censors are already doing so~\cite{beznazwy2020china}.

In addition, most FEPs (and many other encrypted protocols) include message padding as a way to
prevent traffic analysis from making inferences on traffic contents based on ciphertext lengths. For
FEPs, an effective padding strategy is particularly powerful because each byte appears random, and
so the number of bytes is the main remaining source of communications metadata. However, existing
padding mechanisms are ad hoc and have no precisely stated goals. As a consequence, they fail to
provide full control over over ciphertext lengths, a problem that has been observed by FEP
developers themselves~\cite{fifieldTordev}, which serves as the basis for state-of-the-art FEP detection
using packet lengths~\cite{wails:ndss:24}, and has been exploited in practice by
censors~\cite{beznazwy2020china}. We thus propose a powerful property called \emph{\shaping}, which
requires that protocols be capable of producing ciphertexts of arbitrary lengths on command.

We prove relations among our novel FEP security definitions and existing notions of
confidentiality and integrity. Our analysis of existing FEPs indicates that none of them satisfies
our notions, and so we construct a protocol and prove that it does. These results are
given in the \emph{datastream} setting, which models a protocol that uses an underlying
reliable, byte-oriented transport such as TCP. We subsequently give definitions and related results
in the \emph{datagram} setting, which models the use of an unreliable transport such as UDP, and we
present a provably secure FEP in it as well.

Our theoretical results are inspired by observing persistent problems in real-world FEP 
deployments and our work provides concrete guidance for future FEP development. Our security 
definitions provide explicit security goals for FEPs to satisfy, and allow FEP maintainers
to prove their protocols secure. Our \shaping notions give FEP designers a length obfuscation
capability that is maximally flexible and practical to implement. In addition our definitions of 
secure close functions highlight the importance of carefully implementing connection closures,
and explicitly provides a narrow set of possible close behaviors which do not reveal metadata unnecessarily. 

Finally, we analyze a wide variety of existing FEPs under our proposed security definitions,
including the previously mentioned Shadowsocks~\cite{shadowsocks}, obfs4~\cite{obfs4}, and
VMess~\cite{vmess} protocols. We examine their source code and documentation, and we run experiments
to measure their error responses and message sizes. Our results indicate that nearly all of them do
produce outputs that are indistinguishable from randomness, satisfying our passive FEP security
definition. However, among the datastream protocols, their channel closures in response to errors
make them identifiable, and they consequently fail our active FEP security definition. Our
experiments further uncover integrity failures in V2Ray (the system providing the VMess
protocol~\cite{v2ray}). Moreover, all of the FEPs exhibit \emph{unique minimum message sizes}, which
we both predict via source-code analysis and verify experimentally. The catalog we produce of
channel closures and minimum message sizes provides a new method by which existing FEPs can be
individually identified and thus blocked. It also supports our proposed security definitions, which
normalize channel closures and preclude minimum message sizes.

The main contributions of our paper are thus as follows. First, we present security definitions for datastream FEPs, compare them to existing security definitions, and provide a provably secure datastream FEP construction. Second, we similarly provide and analyze datagram FEP definitions, and we provide a provably secure datagram FEP. Third, we evaluate many existing FEPs against our security definitions, uncovering novel identifying features in both error-induced channel closures and minimum message sizes.

This paper improves and expands on an early version of this work~\cite{focipaper} by including
a more developed notion of close functions, security definitions and a construction in the datagram setting, analyses of relations between the new security notions and existing ones, and an analysis of and experiments on existing FEPs. 
\iffull
  The appendices of this paper also contain proofs and other details.
\else
  The full version of this work~\cite{fepfull} also contains
  appendices with proofs and additional details.
\fi

\section{Background and Preliminaries} \label{sec:aead-scheme}
We present the notation, primitives, and concepts required to describe our
constructions, definitions, and counterexamples.

We use $\emptySt$ to denote the empty byte string and $s\Vert t$ to denote the concatenation of two
byte strings $s$ and $t$. We use uppercase variables for lists, and $L_j$ denotes the $j$th item of
list $L$. Given a list of byte strings $L$ we denote its in-order concatenation by $\Vert L$. We use
$[]$ to denote the empty list. $\tf{Rand}(n)$ denotes a function that produces a uniformly random
string of $n$ bytes, and $\tf{Append}(L,x)$ returns a new list constructed by appending the element
$x$ to a list $L$.  
Given a byte string $x$, we use $x[i..j]$ to denote the substring from byte $i$ to byte $j$,
inclusive, with indexing beginning at $1$. We use the binary $\%$ operator to mean ``without the
prefix''; for example, $\text{abcd}\%\text{ab}=\text{cd}$. $\llbracket a, b \rrbracket$ denotes the
longest common prefix of byte strings $a$ and $b$. $\preceq (\prec)$ indicates (strict) string
prefixes. $|x|$ denotes the length of $x$ in bytes. $x \overset{{\scriptscriptstyle
\operatorname{R}}}{\leftarrow} S$ denotes sampling an element $x$ uniformly at random from finite
set $S$. 

We use a generic AEAD encryption scheme for our channel constructions in Sections \ref{sec:stream-construction} and \ref{sec:datagram-construction}, as well as certain counterexamples. 
\sloppy The scheme consists of a triple of algorithms $\tf{Gen}(\lambda)$, $\tf{Enc}_k(\tok{nonce},m,\tok{ad})$, $\tf{Dec}_k(\tok{nonce},c,\tok{ad})$, as defined by Rogaway~\cite{rogaway2002authenticated}, with both the AUTH and PRIV properties for integrity and confidentiality, respectively. In the remainder, we drop the associated data argument since it is not necessary for our work. We denote decryption errors with the distinguished error symbol $\errDec$. 
We assume that the scheme satisfies IND\$-CPA~\cite{rogaway2011evaluation}, meaning that its ciphertext outputs are indistinguishable from random bytes. 
We further assume that the scheme has a fixed nonce size, $\nonceSize$, a fixed authentication tag size $\tagSize$ and assume that the scheme has a fixed overhead associated with encryption in the form of an additive constant, which we refer to as $\aeadOverhead$ meaning that $|\tf{Enc}_k(\tok{nonce},m)| = |m| + \aeadOverhead$ for any valid key $k$, nonce $\tok{nonce}$, and message $m$. We call schemes with this property \emph{length additive}. 
We note that this property implies \emph{length regularity}; that is, input plaintexts of the same length produce output ciphertexts of the same length. Standard AEAD schemes, such as AES-GCM, are believed to satisfy all of the properties we assume.

\emph{Datastream} channels are intended to model the interface provided by TCP, where correctness
requires, and only requires, that the sent \emph{bytes} are all received and in the same order that
they were sent. Datastream channels are connection-based and can be \emph{closed} explicitly (\eg,
with a TCP FIN packet), observably terminating the connection. Messages sent in the datastream model
may be fragmented, with internal buffering behavior, network conditions, and adversarial
manipulation all affecting the size of the messages that are sent and received, which can be
distinct from the sizes of the messages explicitly passed to the channel interface at the level of
an application. \emph{Datagram} channels are intended to model the interface provided by the UDP and IP protocols, where messages can arrive out of order, fail to arrive at all, or arrive multiple times, and all messages have an explicit length.

Let $\mathcal{M} = \{0,1\}^*$ be the plaintext message space for a channel. We use $\bot$ as
another distinguished error symbol (\ie, $\bot\notin\mathcal{M}$), which will indicate an error in
the channel operation.

\emph{Traffic analysis}~\cite{raymond2001traffic} is used by a network adversary to infer sensitive
information about the metadata (\eg, the sender or receiver) or content of encrypted traffic. Modern
traffic analysis techniques rely on many features to make these inferences, including timing data,
plaintext message headers, protocol control messages, and message lengths. \emph{Traffic
shaping}~\cite{bocovich2016slitheen,wright2009traffic,barradas2017deltashaper,fifieldTordev}, used
in the fields of anonymous communication, network privacy, and censorship circumvention, frustrates
traffic analysis by changing the timing, number, and lengths of the messages.

\section{Data Stream Channels}
We present security notions for Fully Encrypted Protocols in the datastream setting using the model
of Fischlin et al.~\cite{DIAS}. We make two major additions to this model that are important in FEP
context. First, we allow the sender to indicate the desired length of the output ciphertext, which
enables \shaping{} for metadata protection. Second, we allow the receiver to close the channel
in response to an input ciphertext, which models this potential information leak.

\subsection{Channel Model}
A \emph{datastream channel} consists of three algorithms:
\begin{enumerate}
    \item $(\tok[S]{st}, \tok[R]{st}) \gets \tf{Init}(1^\lambda)$, which takes a security parameter $\lambda$ and returns an initial sender state $\tok[S]{st}$ and receiver state $\tok[R]{st}$.
    \item $(\tok[S]{st}', c) \gets \tf{Send}(\tok[S]{st},m,p,f)$ which takes a sender state
    $\tok[S]{st}$, a plaintext message $m$, an output length $p$, and a flush flag $f$, and returns an updated sender state $\tok[S]{st}'$ and a (possibly empty) ciphertext fragment $c$.
    \item $(\tok[R]{st}', m, \tok{cl}) \gets \tf{Recv}(\tok[R]{st},c)$ which takes a receiver state $\tok[R]{st}$ and a ciphertext fragment $c$, and produces an updated receiver state
    $\tok[R]{st}'$, a plaintext fragment $m$, and a flag $\tok{cl}$ indicating whether the channel is to be closed.
\end{enumerate}
These algorithms provide a unidirectional communication channel from the sender to the
receiver. To use a channel, $\tf{Init}$ is called to produce the initial states. States
$\tok[S]{st}$ and $\tok[R]{st}$ should be shared with the sender and receiver, respectively, using
an out-of-band process. Such states may, for example, include a shared symmetric key or
public/private keypairs for each party.

The sender calls $\tf{Send}$ to send a message $m\in\mathcal{M}$ to the receiver. Note that the
correctness requirement will not guarantee that the output ciphertext $c$ contains all of $m$ (\ie,
plaintext fragmentation is allowed), nor that it is even a full ciphertext (\ie, ciphertext
fragmentation is allowed). Thus the sender will be allowed to buffer inputs, such as for performance
reasons. Moreover, $c$ might in fact contain multiple ciphertexts, in the sense that $\tf{Recv}$ may
need to perform multiple decryption operations to obtain all the contained plaintext. The output length $p$ will be used to support traffic shaping. The flush flag
$f$, if set, forces the output of all messages provided as input to a $\tf{Send}$ call up to the
given call.

The receiver calls $\tf{Recv}$ on a received ciphertext to obtain the sent plaintext. Correctness
will require that $\tf{Recv}$ can take as its inputs a fragmentation or merging of
previous outputs of $\tf{Send}$. Thus the receiver will also be allowed to buffer inputs so that
it can produce the sent plaintext once sufficient ciphertext fragments are received.
The output message $m$ may contain errors (\ie, $m\in\{0,1,\bot\}^*$). If the close
flag $\tok{cl}$ is set, that indicates that the receiver actively closes the channel after the given
$\tf{Recv}$ call (\eg, by sending a TCP FIN).

\subsection{Correctness} \label{sec:stream-correctness}
A channel should be considered correct if it delivers the data from the sender to the receiver in
the absence of malicious interference. As given by Fischlin et al.~\cite{DIAS}, datastream
correctness requires that the plaintext message bytes produced by $\tf{Recv}$ should match the
message bytes given to $\tf{Send}$ if the ciphertext bytes are correctly delivered from $\tf{Send}$
to $\tf{Recv}$. This requirement tolerates arbitrary fragmentation of the plaintexts and
ciphertexts. However, we add to this correctness notion a requirement related to channel
closures. 

The inclusion of closures raises the possibility of a trivial correct channel that performs no
useful data transfer and instead just immediately closes the channel. However, we also want to allow
the possibility that the receiver does close the channel for a reason such as some plaintext command
being received or a limit on received data being reached. To rule out the former while allowing the
latter, we introduce the notion of a close function and use it to parameterize correctness. A close
function $\mathscr{C}$ is a randomized function that takes as input a sequence of channel operations
(\ie, function inputs and outputs starting with \tf{Init} and then including \tf{Send} and
\tf{Recv} calls) ending in a final \tf{Recv} call. It outputs a bit indicating if that last
call should set its close output. 
We use $\mathscr{C} \equiv 0$ as a default close behavior, \ie, the channel is never closed.

Our correctness requirement is as follows:
\begin{definition}
Let $S$ be a sequence of channel operations, starting with \tf{Init} and followed by (possibly
interleaved) calls to \tf{Send} and \tf{Recv}. Assume each \tf{Send} or \tf{Recv} call receives as input the state produced by its previous invocation (or the output of \tf{Init} on the first such call). Let $n_s$ be the number of \tf{Send} calls and $n_r$
be the number of \tf{Recv} calls. Let $M$, $P$, and $F$ be lists containing the inputs to the
\tf{Send} calls, $C$ be a list of the \tf{Send} outputs, $C'$ be a list of the \tf{Recv} inputs, and
$M'$ and $\tok{CL}$ be lists of the \tf{Recv} outputs. A channel satisfies \emph{correctness} with
respect to close function $\mathscr{C}$ if, for all such sequences $S$ where $\cat C' \preceq \cat
C$, the following properties hold:
\begin{enumerate}
\item \textbf{Stream Preservation}: $\cat M' \preceq \cat M$
\item \textbf{Flushing}: If $F_{n_s} = 1$ and $\cat C' = \cat C$, then $\cat M' = \cat M$. 
\item \textbf{Close Coherence}: For any $i \in [1..n_r]$, if $\tok[i]{CL} = 1$, then for any $j$ with $n_r \geq j > i$, $(M'_j, \tok[j]{CL}) = (\emptySt, 0)$. 
\item \textbf{Close Correctness}: If the last channel operation is \tf{Recv}, the
distribution of its close output, given its inputs, is identically distributed to that of
$\mathscr{C}$ on $S$.
\end{enumerate}
\end{definition}
The Stream Preservation property ensures that delivered output bytes are always some prefix of the
plaintext input bytes. The Flushing property enforces that if the flush flag is given, then all
the plaintext input bytes are output by \tf{Send}. Close Coherence ensures that a channel is closed only once and that \tf{Recv} outputs no messages afterwards.
Close Correctness enforces that the channel implements
the externally supplied close behavior $\mathscr{C}$. The first two requirements of our correctness definition coincide with the correctness definition of Fischlin \etal~\cite{DIAS}.

\subsection{Secure Close Functions} \label{sec:secure-closures}

While for correctness we make no assumptions about the desired channel-close behavior, for
security we will want to limit it in some ways. Arbitrary close behavior may leak secret
information. For example, a close function might close the channel if a certain plaintext byte
sequence is received, which would violate confidentiality.

We therefore define a \emph{secure close function}, which will prevent closures from leaking any
information beyond what is implied by the stream of sent bytes. We require that a secure close
function be able to be expressed as a function taking the following inputs: (1) $C$, a
concatenation of the list of the ciphertexts produced by \tf{Send} calls; (2) $C'$, the list of the
ciphertexts given as inputs to \tf{Recv} calls; (3) $\tok{CL}$, the list of the channel-close
outputs of \tf{Recv} calls; and (4) $c$, the final ciphertext input to \tf{Recv}. As with all close
functions, a secure close function returns a bit indicating if the channel should close. A secure
close function must also be a probabilistic polynomial-time function. With its limited inputs and
computational complexity, a secure close function not only prevents closures from leaking additional information beyond what is implied by the \tf{Send} outputs, it further restricts that leakage to that implied by their \emph{concatenation}. This choice prevents closures from revealing boundaries between \tf{Send} outputs, which can be hidden from real-world network adversaries due to fragmentation below the application layer. Note that,
although a passive adversary can observe the ordering of \tf{Send} and \tf{Recv} calls, we need not
give that information to a secure close function, as the receiver itself cannot observe that
ordering and thus could never realize any close function that depends on it. 

In addition to preventing the leakage of plaintext data, secure close functions also prevent
leaking certain metadata. For example, they preclude closing as soon as an error is detected in a
protocol with variable-length ciphertexts, which would reveal some of the internal structure of
the protocol messages, as two \tf{Send} outputs are indistinguishable to a secure close function
from a single \tf{Send} output of the same total length due to their concatenation in the input.
Such a behavior is frequently observed in real-world FEPs, such as obfs4 and one direction of
Shadowsocks (see \S~\ref{sec:analysis}).

Despite their restrictions, secure close functions can still express interesting and useful
behavior. For example, they include closures that occur after receiving some maximum amount of data.
They also allow for closures at some point after an error is introduced, for example, after a
modified byte is received and the total bytes received is a multiple of 1000 (\ie, the strategy
employed by the interMAC protocol~\cite{boldyreva2012security}).

Finally, secure close functions are
realistic. Both academic and real-world protocols have close behaviors which can, in
whole or in part, be realized by a secure close
function, such as the fixed-boundary closures of InterMAC~\cite{albrecht2019libintermac} and the
never-close behavior of Shadowsocks in one direction~\cite{shadowsocks}. We also observe a
real-world attempt, in the VMess protocol~\cite{vmess}, to obscure when ciphertext errors are
detected by randomizing the timing of a subsequent closure, which doesn't quite satisfy secure
closures and consequently leaks metadata (see \S~\ref{sec:analysis}). 

\subsection{Confidentiality and Integrity Definitions}
We adopt several datastream confidentiality and integrity definitions from Fischlin \etal~\cite{DIAS}. We use IND-CPFA/IND-CCFA for passive/active indistinguishability (\ie, against chosen plaintext/ciphertext fragment attacks). We use INT-PST/INT-CST for integrity against
plaintext/ciphertext stream manipulation. We use these notions without alteration beyond the new function signatures associated with our channel model (\ie, adding the \shaping parameter $p$ and close-flag $f$ to the inputs of \tf{Send} and outputs of \tf{Recv}, respectively).

We also introduce a new passive confidentiality notion that includes adversarial observation of
closures. The modified notion, IND-CPFA-CL, gives a receiving oracle to the adversary that only
returns channel closures. Moreover, as the adversary is passive, the definition requires that the
adversary correctly deliver to the receiving oracle the ciphertext bytestream output produced by the
sending oracle. IND-CPFA-CL is a confidentiality goal for FEPs, but we do use standard IND-CPFA as a
means to prove other properties. See
\iffull
    Appendix~\ref{app:dstream-sec}
\else
    the full version~\cite{fepfull}
\fi
for a precise definition of
IND-CPFA-CL.

\subsection{Fully Encrypted Datastream Protocols}
We introduce passive and active security notions for Fully Encrypted Protocols in the datastream
setting. The goal of these definitions is to ensure that protocols satisfying them do not reveal
protocol metadata through the observable bytes and channels closures. The passive definition is
\passiveObfs, or FEP security against a chosen plaintext fragment attack, and the active definition
is \activeObfs, or FEP security against a chosen ciphertext fragment attack. The active notion is
implicitly parameterized by a secure close function, and we use \activeObfs[$\mathscr{C}$] to explicitly indicate security with respect to the close function $\mathscr{C}$. The definition for both notions is as follows:
\begin{definition}\label{def:fep-ds-security}
    A channel satisfies \bothObfs{}, for $x\in \{\textrm{CPFA}, \textrm{CCFA}\}$ if, for any PPT adversary
    $\mathcal{A}$, $\left|P\left[\mathsf{Exp}^{\textsf{\bothObfs},b}_{\mathcal{A}}(1^{\lambda}) = 1 \big| b \overset{{\scriptscriptstyle \operatorname{R}}}{\leftarrow} \{0,1\}\right] - 1/2\right|$ is negligible in the security parameter $\lambda$.
\end{definition}

The security experiment used in this definition (Algorithm~\ref{alg:obfs-exp}) gives adaptive
access to a sending oracle (Algorithm~\ref{alg:activeObfs-send}). That oracle calls the \tf{Send}
function of the underlying channel and then returns to the adversary either the output or the same number of genuinely random bytes, depending on a secret random bit $b$. The adversary is challenged
to distinguish between observing the outputs of \tf{Send} and random byte strings of the same
lengths. This experiment captures a key goal of a passive FEP---that every byte sent should appear
random to the adversary.

In the active setting, the adversary is also given adaptive access to a receiving oracle
(Algorithm~\ref{alg:activeObfs-recv}). The behavior of the oracle depends on the secret bit $b$.
If $b=0$, then the oracle calls the channel \tf{Recv}. If the sending and receiving byte streams are
still in sync (\ie, no received byte differs from the byte in the same position in the sent byte stream), then only the close flag from \tf{Recv} is returned to the adversary. If those byte
streams are out of synchrony, then both the output message and the close flag are returned. If
the streams just become out of sync, then the in-sync and out-of-sync components are separated
before returning any message produced out-of-sync to the adversary (the logic largely follows the
similar oracle in the IND-CCFA definition of Fischlin et al.~\cite{DIAS}).
If $b=1$, the oracle simply returns the empty string
and the close flag prescribed by the close function $\mathscr{C}$.

For active security, the receiving oracle should not yield outputs distinguishing the $b=0$ and
$b=1$ cases. The $b=1$ case yields a simple behavior, which the $b=0$ case only differs from if a
non-empty message is ever output or the prescribed close behavior is not followed. The relevance
of non-empty outputs to the FEP goal is that they occur when modified messages yield valid outputs,
which may influence observable behavior of the receiver and thus reveal metadata about the integrity
features of the protocol. Because the prescribed close behavior is a secure close function,
conforming to it ensures that the closures leak no information beyond what is already revealed by
the ciphertext byte sequence. Thus, if passive security is already satisfied, active security
ensures that only the ciphertext lengths could potentially reveal protocol metadata.

\begin{algorithm}
    \caption{$\mathsf{Exp}^{\textsf{\bothObfs},b}_{\mathcal{A}}(1^{\lambda})$: FEP security experiment}    \label{alg:obfs-exp}

    \begin{algorithmic}[1]
        \State $(\tok[S]{st}, \tok[R]{st}) \gets \tf{Init}(1^\lambda)$
	\State $C_S, C_R, C_{\tok{CL}}, \tok{sync} \gets \emptyList, \emptyList, \emptyList, 1$
        \State $b' \gets \left\{
            \begin{array}{ll}
                \mathcal{A}^{\mathcal{O}^b_{\tf{Send}}()}(1^\lambda) & \textrm{if $x =$ CPFA}\\
                \mathcal{A}^{\mathcal{O}^b_{\tf{Send}}(), \mathcal{O}^b_{\tf{Recv}}()}(1^\lambda) & \textrm{if $x =$ CCFA}
            \end{array}\right.$

        \State \Return $b' = b$
    \end{algorithmic}
\end{algorithm}

\begin{algorithm}
    \caption{$\mathcal{O}^b_{\tf{Send}}(m,p,f)$: FEP sending oracle}
    	\label{alg:activeObfs-send}

    \begin{algorithmic}[1]
        \State $(\tok[S]{st}, c_0) \gets  \tf{Send}(\tok[S]{st},m,p,f)$
        \State $c_1 \gets \tf{Rand}(|c_0|)$
        \State $C_S \getsAppend{C_S}{c_b}$
        \State \Return $c_b$ to $\mathcal{A}$
    \end{algorithmic}
\end{algorithm}

\begin{algorithm}[h]
    \caption{$\mathcal{O}^b_{\tf{Recv}}(c)$: FEP receiving oracle}
        \label{alg:activeObfs-recv}

    \begin{algorithmic}[1]
        \If{$b=0$}
            \If{$\tok{sync} = 0$} \textrm{ // already out of sync with \textsc{Send}}
                \State $(\tok[R]{st}, m, \tok{cl}) \gets \tf{Recv}(\tok[R]{st},c)$
                \State \Return $(m, \tok{cl})$ to $\mathcal{A}$
            \ElsIf{$(\cat C_R) \Vert c \preceq (\cat C_S)$} \textrm{ // in sync with \textsc{Send}}
                \State $(\tok[R]{st}, m, \tok{cl}) \gets \tf{Recv}(\tok[R]{st},c)$
                \State $C_R \getsAppend{C_R}{c}$
                \State \Return $(\emptySt, \tok{cl})$ to $\mathcal{A}$
            \Else \textrm{ // either deviating or exceeding \textsc{Send} outputs}
                \If{$(\cat C_R) \prec \llbracket (\cat C_R) \Vert c, (\cat C_S) \rrbracket$} \textrm{ // partial sync}
                    \State $\widetilde{c} \gets \llbracket (\cat C_R) \Vert c, (\cat C_S) \rrbracket \textrm{ \% } (\cat C_R)$ \textrm{ // sync part}
                    \State $(\widetilde{\tok[R]{st}}, \widetilde{m}, \widetilde{\tok{cl}}) \gets \tf{Recv}(\tok[R]{st}, \widetilde{c})$
                    \State $(\tok[R]{st}, m, \tok{cl}) \gets \tf{Recv}(\tok[R]{st}, c)$
                    \State $m' \gets m \textrm{ \% } \llbracket m, \widetilde{m} \rrbracket$ \textrm{ // out-of-sync output}
                \Else \textrm{ // none of $c$ in sync with \textsc{Send} outputs}
                    \State $(\tok[R]{st}, m', \tok{cl}) \gets \tf{Recv}(\tok[R]{st},c)$
                \EndIf
                \If{$(\cat C_S) \npreceq (\cat C_R) \Vert c \textbf{ or } m'\neq\emptySt$}
                    \State $\tok{sync} \gets 0$
                \EndIf
                \State $C_R \getsAppend{C_R}{c}$
                \State \Return $(m', \tok{cl})$ to $\mathcal{A}$
            \EndIf
        \Else \textrm{ // $b=1$, return empty string and desired closure}
            \State $\tok{cl} \gets \mathscr{C}(\cat C_S, C_R, C_{\tok{CL}}, c)$ \textrm{ // produce close output}
            \State $C_R \getsAppend{C_R}{c}$
            \State $C_{\tok{CL}} \getsAppend{C_{\tok{CL}}}{\tok{cl}}$
            \State \Return $(\emptySt, \tok{cl})$ to $\mathcal{A}$
        \EndIf
    \end{algorithmic}
\end{algorithm}

\subsection{\shaping} \label{sec:obfs-defn-length-shaping}
The passive and active FEP security definitions essentially ensure that only the ciphertext lengths
can leak protocol metadata. Those lengths can also reveal message data, as the length of a message
may in some settings be related to its contents. To enable full metadata protection,
our channel definition allows that a requested length $p$ can be provided to each call to \tf{Send}.
We say that channels that provide the requested lengths satisfy \shaping{}:
\begin{definition} \label{def:length-shaping}
A channel satisfies \shaping{} if, for any state $\tok[S]{st}$, message $m$, and integer $p \geq
0$, with $(\tok[S]{st}, c)\gets \tf{Send}(\tok[S]{st}, m, p, f)$, if $f=0$ then $|c| = p$, otherwise $|c| \geq p$.
\end{definition}

Note that, in this definition, if the flush flag is set then the desired length may be exceeded,
which allows for the case that there are a large number of buffered message bytes that must be
flushed. The definition also only imposes a requirement for $p\ge 0$, which
allows channels freedom to implement alternate behaviors for negative $p$ values.

Traffic shaping enables protection of protocol metadata by setting the $p$ inputs to values that are
independent of the protocol being used. For example, the $p$ values could all be set to a constant,
which would hide the number and sizes of metadata fields as well as the length of the plaintext
messages. This is similar to the strategy used by the Tor protocol of putting all data into
fixed-size cells to prevent traffic analysis~\cite{tor-protocol-spec}.

\section{Relations Between Datastream Notions}
\label{sec:stream-relations}
In this section we present relations among our novel security notions and previous security notions for datastream channels.


While our novel FEP definitions are designed primarily to enforce that all output bytes appear indistinguishable from random, we observe in this section that these properties actually imply standard cryptographic datastream security properties. In particular, FEP-CCFA is a strong property that directly implies ciphertext stream integrity, and FEP-CPFA alongside channel length regularity (a minor property discussed below) imply passive confidentiality. Finally, we highlight as a primary conclusion from this section the observation from Figure~\ref{fig:stream-relations} that a channel satisfying FEP-CCFA for a secure close function and channel length regularity satisfies all other security properties we identify. 

\usetikzlibrary {positioning,shapes.geometric}
\begin{figure}
\resizebox{0.5\textwidth}{!}{
\begin{tikzpicture}[every edge/.style={draw, >={Stealth[length=4mm, width=2mm]}}]

\node(CH-REG) [ellipse,draw] at (5,0) {CH-REG};

\node(FEP-CCFA) [ellipse,draw] at (-2,1) {FEP-CCFA};
\node(FEP-CPFA) [ellipse,draw] at (3,1) {FEP-CPFA};
\node(IND-CPFA-CL) [ellipse,draw] at (8, 1) {IND-CPFA-CL};
\node(LRT) [blue,fill,circle,minimum size=.1cm,inner sep=0pt] at (6,1) {};

\node(INT-CST) [rectangle,draw] at (-2,-2) {INT-CST};
\node(ERR-PRED) [rectangle,draw] at (4,-2) {ERR-PRED};
\node(IND-CPFA) [rectangle,draw] at (8,-2) {IND-CPFA};

\node(IND-CCFA) [rectangle,draw] at (4,-3) {IND-CCFA};
\node(ERR-FREE) [ellipse,draw] at (4,-1) {ERR-FREE};
\node(INT-PST) [rectangle,draw] at (-2,-3) {INT-PST};
\node(GCT) [blue,fill,circle,draw,inner sep=0pt,minimum size=0.1cm] at (1,0) {};
\node(FPT) [fill,circle,minimum size=.1cm,inner sep=0pt] at (4,-2.35) {};

\path[thick]
	(INT-CST) edge[->] (INT-PST);

\path[thick]
	(INT-CST) edge (FPT);
\path[thick]
	(ERR-PRED) edge (FPT);
\path[thick]
	(IND-CPFA) edge (FPT);
\path[thick]
	(FPT) edge[->] (IND-CCFA);


\path[thick]
	(FEP-CCFA) edge[color=blue,dashed,->] node[left] {Thm \ref{prop:cst-ccfa}} (INT-CST);
\path[thick]
	(FEP-CCFA) edge[color=blue,dashed,->] node[above] {Trivial} (FEP-CPFA);

\path[thick]
    (IND-CPFA-CL) edge[color=blue,dashed,->,bend left] node[right] {Trivial} (IND-CPFA);
    
\path[thick]
    (CH-REG) edge[color=blue,dashed,->] (LRT);
\path[thick]
    (FEP-CPFA) edge[color=blue,dashed,->] (LRT);
\path[thick]
(LRT) edge[color=blue,dashed,->] node[above left] {Thm \ref{prop:lenreg-fep-ind}} (IND-CPFA-CL);
    
\path[thick,color=blue]
	(IND-CPFA) edge[dashed,bend left,->] node[pos=.5,left] {Thm \ref{prop:corr-cpfa-cl}} (IND-CPFA-CL);

\path[thick,color=blue]
	(FEP-CPFA) edge[dashed,bend left=10,->]  (GCT) 
	(INT-CST) edge[dashed,->] (GCT)
	(ERR-FREE) edge[dashed,->] (GCT)
	(GCT) edge[dashed,->,bend right=5] node[right,pos=.7] {Thm \ref{thm:new-general-comp}} (FEP-CCFA.south east);
	
\path[thick]
	(FEP-CCFA) edge[color=blue,dashed,->,bend right=10] node[below,pos=.75] {Trivial} (ERR-PRED);
\path[thick]
	(ERR-FREE) edge[color=blue,dashed,->] (ERR-PRED);

\end{tikzpicture}
}
\caption{Relations between notions for correct datastream channels that realize a secure close function $\mathcal{C}$. Rectangles and solid arrows are results from Fischlin~\etal~\cite{DIAS}; ellipses and dashed arrows are novel notions and relations} \label{fig:stream-relations}
\end{figure}

Figure \ref{fig:stream-relations} summarizes our results relating datastream security notions. First we include relevant notions and results from Fischlin~\etal~\cite{DIAS} with solid lines and rectangles, which establishes the result that the passive confidentiality notion (IND-CPFA) and the integrity of ciphertexts (INT-CST), together with a new notion ERR-PRED, which establishes the predictability of error symbols that will be produced by \tf{Recv}, imply their active confidentiality notion IND-CCFA. CH-REG refers to Datastream Channel Length Regularity (Definition \ref{def:length-regular-channel}). Since in our construction we do not use error symbols, we for convenience include the notion ERR-FREE, which refers to the property that \tf{Recv} never produces in-band error symbols and trivially implies ERR-PRED. 

Below we give theorem statements for each relation in Figure~\ref{fig:stream-relations}, with the proofs in
\iffull
	Appendix~\ref{appendix:stream-relations}.
\else
	the full version~\cite{fepfull}.
\fi

While our notions do not imply confidentiality, this is only because of the issue of ciphertext lengths, and we formalize this intuition below. First, we define a length-regularity property for a channel, where we require that the length of the outputs of \textsc{Send} do not depend on the content of the messages. 

\begin{definition}\label{def:length-regular-channel} Let $M^0$ and $M^1$ be
$n$-length lists of messages such that, for all $i$, $|M^0_i| = |M^1_i|$. Let $P$ and $F$ be
an $n$-length integer sequence and an $n$-length boolean sequence. Let $(\tok[S]{st}^0, C^0_i) \gets
\tf{Send}(\tok[S]{st}^0, M^0_i, F_i, P_i)$ and $(\tok[S]{st}^1, C^1_i) \gets
\tf{Send}(\tok[S]{st}^1, M^1_i, F_i, P_i)$, where in both cases $\tf{Send}$ is initialized
with $\tf{Init}$ and is then called sequentially as $i=1..n$, updating its state with each call. A
datastream channel is \emph{length regular} if, for any such $M^0$, $M^1$, $F$ and $P$,
and for all $i$, $|C^0_i| = |C^1_i|$. 
\end{definition} 

Next, Theorem~\ref{prop:lenreg-fep-ind} shows that a length-regular channel
satisfying \passiveObfs provides confidentiality.

\begin{theorem} \label{prop:lenreg-fep-ind} Suppose that a channel satisfies \passiveObfs and further that the \textsc{Send} function is \textit{length regular} in the sense of Definition \ref{def:length-regular-channel}. Then that channel satisfies IND-CPFA. \end{theorem}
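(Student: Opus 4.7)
The plan is a three-hop hybrid argument. Starting from the left side of the IND-CPFA game (where every sent message is $m^0_i$), I would first apply \passiveObfs to replace each real \tf{Send} output with a uniformly random string of the same length, then use length regularity to swap those lengths to the ones that would arise from the right-side messages $m^1_i$, and finally apply \passiveObfs a second time to turn those random strings back into real \tf{Send} outputs of $m^1_i$.

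Fix a PPT IND-CPFA adversary $\mathcal{A}$ whose $i$-th query is $(m^0_i, m^1_i, p_i, f_i)$ with $|m^0_i|=|m^1_i|$. I would define four hybrids: $H_0$ answers query $i$ with $c^0_i \gets \tf{Send}(\tok[S]{st}, m^0_i, p_i, f_i)$; $H_1$ answers with $\tf{Rand}(|c^0_i|)$, computing $c^0_i$ internally only to read off its length; $H_2$ answers with $\tf{Rand}(|c^1_i|)$ where $c^1_i \gets \tf{Send}(\tok[S]{st}, m^1_i, p_i, f_i)$; and $H_3$ returns $c^1_i$ directly. The distances $|\Pr[H_0{=}1]-\Pr[H_1{=}1]|$ and $|\Pr[H_2{=}1]-\Pr[H_3{=}1]|$ each collapse to the \passiveObfs advantage of a direct reduction that forwards the relevant side's queries through its \passiveObfs sending oracle and relays the response back to $\mathcal{A}$. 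The middle step is the equality $\Pr[H_1{=}1]=\Pr[H_2{=}1]$, which follows from length regularity (Definition~\ref{def:length-regular-channel}): since $|m^0_i|=|m^1_i|$ and the shaping/flush sequence $(p_i,f_i)$ is shared, length regularity forces $|c^0_i|=|c^1_i|$ for every $i$ (as random variables with a coupled state coin), so the length sequences fed to \tf{Rand} are identically distributed and $\mathcal{A}$'s view coincides.

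The main subtle point is applying length regularity soundly inside an adaptive experiment. Definition~\ref{def:length-regular-channel} is stated for two fixed, non-adaptive message sequences $M^0,M^1$, while $\mathcal{A}$ chooses $m^j_i$ based on the prior transcript. I would handle this by induction on $i$: through round $i-1$ the two hybrids feed $\mathcal{A}$ random bytes of equal lengths (by the inductive hypothesis), so the adversary's state and its $i$-th query pair $(m^0_i, m^1_i, p_i, f_i)$ are identically distributed in $H_1$ and $H_2$, whereupon length regularity applies pointwise to that round. Summing the three bounds gives the IND-CPFA advantage of $\mathcal{A}$ as at most twice the maximum \passiveObfs advantage of the two reductions, which is negligible in $\lambda$.
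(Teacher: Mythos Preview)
Your proof is correct and follows essentially the same idea as the paper's: both are the standard IND\$-to-IND reduction, using length regularity to argue that the random-bytes world carries no information distinguishing the two sides. The paper packages this as a single reduction (pick a random bit $b_0$, forward $m_{b_0}$ to the \passiveObfs oracle, and test whether $A_{\text{CPFA}}$ guesses $b_0$), whereas you unroll it into an explicit three-hop hybrid; your treatment of the adaptivity issue in the middle step is more careful than the paper's, which simply asserts that the random outputs have ``content-independent length.''
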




Theorem~\ref{prop:corr-cpfa-cl} shows that if a channel is correct and provides
IND-CPFA (the standard confidentiality notion), then it provides the similar
confidentiality notion with closures, IND-CPFA-CL.
\begin{theorem} \label{prop:corr-cpfa-cl} If a channel satisfies correctness for a given secure close function $\mathcal{C}$, and IND-CPFA, then it satisfies IND-CPFA-CL. 
\end{theorem}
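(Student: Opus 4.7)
The plan is to reduce to IND-CPFA via a direct simulation. Given a PPT adversary $\mathcal{A}$ against IND-CPFA-CL, I build an IND-CPFA adversary $\mathcal{B}$ that runs $\mathcal{A}$ internally: $\mathcal{B}$ forwards $\mathcal{A}$'s $\tf{Send}$ queries to its own sending oracle and relays the resulting ciphertexts, and answers $\mathcal{A}$'s $\tf{Recv}$ queries by computing the close flag itself via a sequential application of the secure close function $\mathcal{C}$. To do so, $\mathcal{B}$ locally maintains a transcript of the concatenated $\tf{Send}$ outputs $C$, the list of $\tf{Recv}$ inputs $C'$, and the list of previously produced close outputs $\tok{CL}$, and on each query passes $(C, C', \tok{CL}, c)$ into $\mathcal{C}$. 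Finally $\mathcal{B}$ outputs whatever $\mathcal{A}$ outputs.

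I would argue that the simulation is perfectly faithful by two observations. First, Close Correctness guarantees that in the real IND-CPFA-CL experiment the close flag produced by $\tf{Recv}$ on any given call is distributed identically to $\mathcal{C}$ applied to the operation sequence up to that call. Second, because $\mathcal{C}$ is a \emph{secure} close function, its output depends only on $(C, C', \tok{CL}, c)$, none of which require access to plaintexts, keys, or internal receiver state, so $\mathcal{B}$ can evaluate $\mathcal{C}$ exactly from what it observes through the IND-CPFA interface. Since the IND-CPFA-CL $\tf{Recv}$ oracle reveals only close flags---the plaintext output is suppressed and $\mathcal{A}$ is required to stay in sync with $\tf{Send}$---these are the only oracle outputs $\mathcal{B}$ must simulate, and by the two observations above it does so with the correct joint distribution. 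Hence $\mathcal{A}$'s view in the simulation is identical to its view in the real IND-CPFA-CL game for either value of the challenge bit, so $\mathcal{B}$'s IND-CPFA advantage equals $\mathcal{A}$'s IND-CPFA-CL advantage, and the conclusion follows from the IND-CPFA hypothesis.

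The main obstacle is conceptual rather than computational: the reduction goes through precisely because the definition of a secure close function restricts its inputs to observable ciphertext material and previously released close flags, all of which $\mathcal{B}$ can track without any secret state. Were $\mathcal{C}$ permitted to depend on plaintexts or internal receiver state, $\mathcal{B}$ could not evaluate it through the IND-CPFA interface and the reduction would break; the proof thus reinforces why the restrictions placed on close functions in \S\ref{sec:secure-closures} are exactly what is needed for closure observations not to strengthen a passive adversary.
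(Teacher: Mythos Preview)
Your proposal is correct and takes essentially the same approach as the paper: the paper's proof is a one-line remark that ``since the close function is secure, it can be evaluated by an adversary for the IND-CPFA game, which allows for a trivial reduction,'' and you have simply spelled out that reduction in full detail. Your explicit use of Close Correctness together with the restricted-input form of a secure close function is exactly the mechanism the paper's sentence is pointing to.
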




We show in Theorem~\ref{prop:cst-ccfa} that \activeObfs by itself implies the 
strong integrity notion INT-CST.
\begin{theorem} \label{prop:cst-ccfa} If a channel satisfies \activeObfs, then it satisfies INT-CST. \end{theorem}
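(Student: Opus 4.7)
The plan is to prove the contrapositive via a reduction: given any PPT adversary $\mathcal{A}$ breaking INT-CST, construct a PPT adversary $\mathcal{B}$ breaking \activeObfs with closely related advantage. The key observation driving the reduction is that, inspecting Algorithm~\ref{alg:activeObfs-recv}, whenever $b=1$ the receiving oracle returns the empty string as its plaintext output on every query, while when $b=0$ the oracle returns a non-empty plaintext (specifically the ``out-of-sync'' residue $m'$) precisely in the situations that correspond to a successful INT-CST attack. So any non-empty plaintext observed at the receiving oracle is a certificate that $b=0$.

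First I would set up $\mathcal{B}$ to run $\mathcal{A}$ internally, forwarding every \tf{Send} query of $\mathcal{A}$ to $\mathcal{B}$'s own sending oracle and every \tf{Recv} query to $\mathcal{B}$'s own receiving oracle, with $\mathcal{B}$ relaying back the responses (projecting away the close bit if INT-CST's interface doesn't include it). When $\mathcal{A}$ halts (or as soon as any receiving-oracle response contains a non-empty plaintext component), $\mathcal{B}$ outputs $b' = 0$ if it has ever observed a non-empty plaintext from the receiving oracle and $b' = 1$ otherwise. In the $b=1$ branch of $\mathsf{Exp}^{\textsf{FEP-CCFA},b}_\mathcal{B}$, the receiving oracle returns $\emptySt$ on every query by construction, so $\mathcal{B}$ always outputs $1$; hence $\Pr[\mathcal{B} \text{ wins} \mid b=1] = 1$.

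Next I would analyze the $b=0$ case. Here the sending oracle returns genuine \tf{Send} outputs, so $\mathcal{A}$'s view inside the simulation is identical to its view in the real INT-CST experiment up to the projection of close flags. Whenever $\mathcal{A}$ succeeds against INT-CST, it has forced the receiver to deliver some plaintext that is not a prefix of the sent plaintext stream. Tracing this through the oracle's logic, such a ciphertext input either arrives in the ``already out of sync'' branch (returning the full \tf{Recv} output $m$) or triggers the ``partial sync'' / ``none in sync'' branches, where the residue $m'$ is returned. In every case, the portion of plaintext witnessing the INT-CST break appears verbatim in $m'$ (or $m$), so it is non-empty; thus $\mathcal{B}$ outputs $0$ with probability at least the INT-CST advantage of $\mathcal{A}$. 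Combining, $\mathcal{B}$'s advantage is at least half the INT-CST advantage of $\mathcal{A}$, which is non-negligible, contradicting \activeObfs.

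The main obstacle is checking carefully that every way of winning INT-CST really manifests as a non-empty plaintext output of the FEP-CCFA receiving oracle rather than being masked by the partial-sync bookkeeping or by the subtraction $m \% \llbracket m, \widetilde{m} \rrbracket$. I would verify this by a short case analysis on where the first deviating ciphertext byte falls relative to $\cat C_S$: if the deviation is the very first output byte of $\mathcal{A}$'s ciphertext, the oracle runs \tf{Recv} on the whole fragment and returns $m'$ directly; if the deviation occurs mid-fragment, the subtraction removes exactly the in-sync prefix of $\widetilde{m}$ from $m$, leaving any genuinely new plaintext bytes intact in $m'$. In both cases the output is non-empty whenever \tf{Recv} produces a non-prefix, non-error plaintext byte, which is what an INT-CST winner guarantees.
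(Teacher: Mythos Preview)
Your proposal is correct and takes essentially the same approach as the paper: run the INT-CST adversary inside the \activeObfs{} experiment, and use any non-empty plaintext returned by the receiving oracle as a certificate that $b=0$, yielding advantage $\epsilon/2$. The paper's proof is terser than yours---it simply asserts that for $b=0$ the \activeObfs{} oracles are ``by construction, defined to be identical'' to the INT-CST oracles (which is how Fischlin~\etal{} designed the INT-CST receiving oracle: it already suppresses in-sync plaintext), whereas you plan an explicit case analysis on where the first deviating byte lands; both arrive at the same conclusion.
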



We give in Theorem~\ref{thm:new-general-comp} a general result showing
sufficient conditions for passive FEP security (\passiveObfs) to imply active
FEP security (\activeObfs{}). Note that the ERR-FREE condition means that
$\tf{recv}$ does not produce in-band errors.

\begin{theorem} \label{thm:new-general-comp} If a channel satisfies correctness for a given secure close function $\mathscr{C}$, \passiveObfs, ERR-FREE, INT-CST, then it satisfies FEP-CCFA-$\mathscr{C}$.  \end{theorem}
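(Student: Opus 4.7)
The plan is a standard hybrid argument that transforms the $b=0$ experiment into the $b=1$ experiment in three stages, one for each assumption.

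First, I introduce a hybrid $H_1$ that is identical to the $b=0$ experiment except that the Recv oracle always produces its returned close flag by invoking $\mathscr{C}(\cat C_S, C_R, C_{\tok{CL}}, c)$ rather than taking the flag from $\tf{Recv}$. When the delivered ciphertext stream is in sync with the Send oracle, this replacement preserves the joint distribution of messages and close flags by the Close Correctness clause of correctness, which stipulates that the close output of $\tf{Recv}$ is identically distributed to $\mathscr{C}$ on the session transcript. When the stream is out of sync, the justification is more delicate: because $\mathscr{C}$ is a secure close function computed purely from the ciphertext transcript, the close output of $\tf{Recv}$ can be realized by the same ciphertext-only computation, so the substitution remains distribution-preserving. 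Carefully handling this out-of-sync case is the main obstacle of the proof, and is where the ``secure close function'' assumption does most of its work.

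Second, I introduce $H_2$ that additionally overwrites the message component returned by the Recv oracle in the partial-sync and full out-of-sync branches with $\emptySt$. By Stream Preservation (from correctness) together with INT-CST, any event in which $\tf{Recv}$ emits a non-empty output after the received ciphertext stream deviates from $\cat C_S$ would constitute a successful ciphertext-stream forgery, which occurs with negligible probability; ERR-FREE rules out the alternative that the non-empty output consists of in-band error symbols. Hence $H_1$ and $H_2$ are computationally indistinguishable, with the gap bounded by the INT-CST advantage of a straightforward reduction that embeds the FEP-CCFA adversary into the INT-CST game.

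At this point the Recv oracle of $H_2$ coincides exactly with the $b=1$ oracle, and the only remaining difference from the $b=1$ experiment is that Send still returns genuine ciphertexts rather than uniformly random bytes. To close this gap I give a reduction to \passiveObfs: a distinguishing adversary $\mathcal{B}$ simulates the Recv oracle locally using $\mathscr{C}$ (which is PPT by the secure close function assumption) together with the accumulated $C_S$ built from the bytes returned by its own Send oracle, while forwarding every Send query of the FEP-CCFA adversary to its own \passiveObfs challenge oracle. The simulation is perfect, $\mathcal{B}$ inherits the distinguishing advantage, and by \passiveObfs this advantage is negligible, completing the argument.
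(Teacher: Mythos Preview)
Your three-stage hybrid --- replace the close flag by $\mathscr{C}$, replace the message output by $\emptySt$ using INT-CST together with ERR-FREE, then reduce to \passiveObfs by simulating the receive oracle locally with $\mathscr{C}$ --- is exactly the decomposition the paper uses (its $E_1$, $E_2$, and the final reduction $A_1$). The second and third steps are argued the same way and are fine.

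Where you go beyond the paper is in the first hybrid. You correctly notice that Close Correctness in the correctness definition is stated only for sequences with $\cat C' \preceq \cat C$, so the out-of-sync case is not automatically covered. But your resolution does not go through: that $\mathscr{C}$ is a \emph{secure} close function constrains what $\mathscr{C}$ may take as input; it says nothing about what $\tf{Recv}$ actually outputs as its close bit once the received byte stream deviates from $\cat C_S$. From ``$\mathscr{C}$ depends only on the ciphertext transcript'' you cannot conclude that ``the close output of $\tf{Recv}$ can be realized by the same ciphertext-only computation'' --- those are statements about two different functions. The paper's own proof handles this step in one line (``by correctness the close output of $\tf{Recv}$ is identically distributed to the output of $\mathscr{C}$'') without distinguishing the in-sync and out-of-sync cases, so at the level of detail in the paper your proposal is faithful; but the additional justification you supply for the out-of-sync case is a non sequitur rather than a repair.
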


Finally, we give some negative results showing that, similar to IND\$-CPA in the atomic setting, our datastream FEP security notions do not imply and are not implied by confidentiality notions. 

We observe that \passiveObfs does not imply IND-CPFA.
Even if a channel satisfies \passiveObfs, it may still embed plaintext information in the length of ciphertext fragments it produces, violating confidentiality. We give an explicit counterexample in
\iffull
	Appendix \ref{appendix:noCPFA}.
\else
	the full version~\cite{fepfull}.
\fi

We similarly observe that IND-CCFA and IND-CPFA do not imply \passiveObfs.
To show that neither of these datastream confidentiality notions implies \passiveObfs, we simply observe that the AEAD-based construction of Fischlin \etal~\cite{DIAS} satisfies both of them but includes unencrypted length fields to delimit ciphertext block boundaries, clearly failing to satisfy \passiveObfs.

\section{A Datastream Fully Encrypted Protocol}
In this section we give a construction for a Fully Encrypted Protocol in the datastream setting, and we prove that it satisfies all of the desired security properties. The key challenge in designing the construction is to simultaneously achieve \shaping, correctness, and security.

\subsection{The Construction}
\label{sec:stream-construction}

\begin{figure*}[!ht]

\caption{\label{fig:stream-construction}A Datastream Fully Encrypted Protocol}
\makebox[\textwidth][c]{

\fbox{
    \begin{minipage}[t]{.5\textwidth}
		$\tf{Init}(1^{\lambda})$:
        \begin{algorithmic}[1]
				\State $k \rgets \mathcal{K}$
                \State $\tok[S]{st} = (k, 0, \emptySt, \emptySt)$
                \State $\tok[R]{st} = (k, 0, \emptySt, 0)$
                \State \textbf{return} $(\tok[S]{st}, \tok[R]{st})$
        \end{algorithmic}
        \hrule
        \vspace{1mm}
        $\tf{Send}(\tok[S]{st}, m, p, f)$:
		\begin{algorithmic}[1]
            \State $(k, \tok{seqno}, \tok{buf}, \tok{obuf}) \gets \tok[S]{st}$
            \State \label{line:tcp-buf} $\tok{buf} \gets \tok{buf} \Vert m$

            \If{$(|\tok{obuf}| \geq p) \land ((f=0) \lor (\tok{buf} = \emptySt))$}
                \State $c \gets \tok{obuf}[1 .. \tf{Max}(p,f*|\tok{obuf}|)]$
                \State $\tok{obuf} \gets \tok{obuf} \% c$
                \State \textbf{return} $(\tok[S]{st}, c)$
            \EndIf
            \State $o \gets \tf{Min}(|\tok{buf}|, \tok{il})$ \Comment Plaintext Length
	    \State $\ell_p \gets 0$ \Comment L8-12 calculates $\ell_p$, $\ell_c$
            \State $\ell_c \gets |\tf{Enc}_k(\tok{seqno},0^{(2 + o)})|$
            \While{$(|\tok{obuf}| + \ell_c + \lenlen < p) \land (\ell_c \neq \tok{ol})$}
                \State $\ell_p \gets \ell_p + 1$
                \State $\ell_c \gets |\tf{Enc}_k(\tok{seqno},0^{(2 + o + \ell_p)})|$
            \EndWhile
            \State $c \gets \tf{Enc}_k(\tok{seqno}, \ell_c)$ \Comment Length Block
            \State $\tok{seqno} \gets \tok{seqno} + 1$
            \State $m' \gets \ell_p \Vert (0^{\ell_p}) \Vert \tok{buf}[1..o]$
                \State $\tok{buf} \gets \tok{buf} \% \tok{buf}[1..o]$
                \State $c \gets c \Vert \tf{Enc}_k(\tok{seqno}, m')$ \Comment Payload Block
                \State $\tok{seqno} \gets \tok{seqno} + 1$
            \State $\tok{obuf} \gets \tok{obuf} \Vert c$
            \State \textbf{return} $\tf{Send}(\tok[S]{st},\emptySt, p, f)$

		\end{algorithmic}
	\end{minipage}
    \begin{minipage}[t]{.5\textwidth}
        $\tf{Recv}(\tok[R]{st},c)$:
        \begin{algorithmic}[1]
        \State $(k, \tok{seqno}, \tok{buf}, \tok{fail}) \gets \tok[R]{st}$
        \If{$\tok{fail} = 1$}
            \State \textbf{return} $(\tok[R]{st}, \emptySt, 0)$
        \EndIf
        \State $\tok{buf} \gets \tok{buf} \Vert c$
        \State $m \gets \emptySt$
        \While{$|\tok{buf}| \geq \lenlen$}
        \State $\ell_c \gets \tf{Dec}_k(\tok{seqno},\tok{buf}[1  .. \lenlen])$ \Comment Decrypt Length Block
        \If{$\ell_c  = \bot_\textsc{Dec}$}
                \State $\tok{fail} \gets 1$
                \State \textbf{return} $(\tok[R]{st}, \emptySt, 0)$
        \ElsIf{$|\tok{buf}| \geq  \lenlen + \ell_c$}
        \State $c' \gets \tok{buf}[\lenlen+1 .. \lenlen+\ell_c]$
        \State $\tok{buf} \gets \tok{buf} \% \tok{buf}[1 .. 1+\lenlen + \ell_c]$
                \State $m' \gets \tf{Dec}_k(\tok{seqno}+1,c')$ \Comment Decrypt Payload Block
                \State $\tok{seqno} \gets \tok{seqno} + 2$

                    \If{$m' = \bot_\textsc{Dec}$}
                    \State $\tok{fail} \gets 1$
                    \State \textbf{return} $(\tok[R]{st}, \emptySt, 0)$
                    \EndIf
                    \State $\ell_p \gets \tf{min}(m'[1,2],|m'|-2)$ \Comment Calculate Padding Length
                    \State $m \gets m \Vert m'[3 + \ell_p..]$
         \Else
         \State $\textbf{break}$
            \EndIf
        \EndWhile
        \State \textbf{return} $(\tok[R]{st}, m,0)$

	\end{algorithmic}
	\end{minipage}
}
}
\end{figure*}
We give a construction that satisfies all our security notions for the secure close function
$\mathcal{C} \equiv 0$, presented in Figure \ref{fig:stream-construction}. Our approach is inspired by Shadowsocks, and is designed around producing pairs of ciphertexts, the first of which is a ``length block'' which has a fixed length, denoted $\lenlen = 2 + \tagSize$, and contains the length of the subsequent ``payload block'', which is limited to a length that can be represented within two bytes. The payload block also contains a two byte padding length field to denote internal (plaintext) padding. We denote the largest payload block length as $\tok{ol} = 2^{16}-1$ and set $\tok{il}$ as the largest plaintext length that can fit within a payload block ($2^{16} - 3 - \tagSize$). 

The \tf{Send} function contains an input buffer $\tok{buf}$ for plaintext and produces an output buffer $\tok{obuf}$ of ciphertext block pairs, outputting fragments from this buffer as requested by the caller. When \tf{Send} is called, it first checks to determine whether it can return the required number of bytes (determined by the conditional in Line 3 of \tf{Send}, which checks if $p$ output bytes are available and $f=0$, or if $f=1$ and the plaintext buffer is empty). If possible, \tf{Send} outputs the appropriate number of bytes from the output buffer ($p$ or $|\tok{obuf}|$, depending on $f$). Otherwise, \tf{Send} constructs a pair of ciphertext blocks according to our scheme in Lines 7--18 and calls itself which will result in termination or a further extension of $\tok{obuf}$ and another recursive call. The ciphertexts are constructed by encrypting as much of $\tok{buf}$ as possible, represented by the variable $o$, and then filling in padding bytes if needed so that the block pair has size $p$ or is the maximum size of $\lenlen + \tok{ol}$. 

The \tf{Recv} function keeps a buffer of received ciphertext bytes and waits to receive $\lenlen$ bytes, then decrypts and parses the length block to recover the length of the subsequent payload block. It then waits until the full payload block has been received before decrypting and returning the plaintext, stripping off the padding length and any padding bytes. \tf{Recv} checks for decryption errors and enters a fail state if either block type fails to decrypt. Our construction does not produce channel closures or error symbols.

We note that our construction generalizes in some ways.
Our buffering and padding approach can be easily adapted for other FEP designs, such as obfs4~\cite{obfs4} and InterMAC~\cite{boldyreva2012security}, so that they satisfy \shaping. Further, although
timing is outside of our model, an implementation could close the connection based on a timeout,
where the timeout must depend only on how much time has elapsed since the last ciphertext fragment was
received.

\subsection{Properties}

We show that our construction satisfies all of the desired properties in Theorem \ref{thm:datastream-construction-all}, with the proofs contained in
\iffull
    Appendix \ref{appendix:datastream-construction-proofs}.
\else
    the full version~\cite{fepfull}.
\fi
Our general approach is to first show the channel satisfies the three non-cryptographic properties: correctness, \shaping, and channel length regularity. We then establish individually each property required to apply Theorem \ref{thm:new-general-comp}: FEP-CPFA, INT-CST, and ERR-FREE (which is trivial). Theorem \ref{thm:new-general-comp} then implies the channel satisfies FEP-CCFA, which (along with channel length regularity) implies IND-CCFA, so we arrive at all of the desired datastream channel properties established in the previous sections. 

\begin{theorem}
	\label{thm:datastream-construction-all} If the AEAD scheme (\tf{Gen}, \tf{Enc}, \tf{Dec}) in the construction in Figure \ref{fig:stream-construction} satisfies IND\$-CPA, INT-CTXT, and is length additive (see Sec \ref{sec:aead-scheme}) then the channel construction satisfies correctness for the secure close function $\mathscr{C} \equiv 0$, \shaping, FEP-CCFA,  INT-CST, and IND-CCFA. 
\end{theorem}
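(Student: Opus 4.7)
The plan is to follow the decomposition sketched immediately before the theorem statement: first discharge the three non-cryptographic properties (correctness, \shaping, and channel length regularity) by direct argument about the pseudocode, then reduce the two genuine cryptographic properties (\passiveObfs and INT-CST) to the assumed properties of the underlying AEAD, and finally invoke the already-established relations to obtain \activeObfs and IND-CCFA.

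For correctness with respect to $\mathscr{C} \equiv 0$, I would argue by induction on the number of channel operations. The key bookkeeping invariant is that at any point the concatenation $\cat M$ of all plaintext inputs so far equals (sent but still buffered bytes in $\tok{buf}$) $\Vert$ (bytes already encoded into block pairs). Each recursive call to \tf{Send} either emits exactly $p$ bytes of $\tok{obuf}$ (when the output buffer is full enough and $f=0$), or constructs one more length/payload block pair via the $\ell_p$-loop, which strictly increases $|\tok{obuf}|$, so the recursion terminates after boundedly many iterations once $\tok{buf}$ is small. The $\ell_p$-loop is well-defined because $\ell_c$ only takes values $\lenlen + (2 + o + \ell_p) + \aeadOverhead$, which is monotone in $\ell_p$, so the loop either reaches the target $p$ or hits the maximum block size $\tok{ol}$. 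Stream preservation follows because the plaintext in each payload block is $\tok{buf}[1..o]$ after stripping the padding-length field and $\ell_p$ padding bytes, which is exactly what \tf{Recv} reconstructs. Flushing follows because when $f=1$ the guard in Line 3 forces \tf{Send} to keep recursing until $\tok{buf} = \emptySt$. Close Coherence and Close Correctness are immediate because \tf{Recv} always outputs $\tok{cl} = 0$.

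\shaping follows by inspection of the same \tf{Send} logic: when $f = 0$, the only return path emits exactly $p$ bytes (or recurses after extending $\tok{obuf}$ by a block pair whose length is always at most $\lenlen + \tok{ol}$, so eventually $|\tok{obuf}| \geq p$); when $f = 1$ the final return emits at least $p$ bytes. Channel length regularity (Definition \ref{def:length-regular-channel}) follows because the lengths $\ell_c$ and $|c|$ computed in \tf{Send} depend only on $|\tok{buf}|$, $p$, $f$, and the state size variables (via the length-additive AEAD), never on the message contents; the invariants $|\tok{buf}^0| = |\tok{buf}^1|$ and $|\tok{obuf}^0| = |\tok{obuf}^1|$ are preserved by every call, since $|M^0_i| = |M^1_i|$.

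For \passiveObfs, I would reduce to IND\$-CPA by a standard hybrid over the (at most $2q$) encryption queries the sending oracle makes, replacing each ciphertext with uniform random bytes of the correct length (which is well-defined by length additivity). Because the sender never re-uses a nonce (the sequence number is strictly monotone), the reduction is tight and straightforward. For INT-CST I would reduce to INT-CTXT: any adversarial ciphertext stream that causes \tf{Recv} to output a plaintext not a prefix of $\cat M$ must either decrypt a length block to a different value or decrypt a payload block that was never produced by \tf{Send} at the corresponding sequence number, either of which yields an INT-CTXT forgery. ERR-FREE is immediate from the pseudocode (on $\errDec$ the receiver sets $\tok{fail}$ and returns $\emptySt$, never $\bot$). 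With these three in hand, Theorem \ref{thm:new-general-comp} delivers \activeObfs; IND-CCFA then follows by combining \passiveObfs with channel length regularity via Theorem \ref{prop:lenreg-fep-ind} (yielding IND-CPFA, hence IND-CPFA by the trivial implication), together with INT-CST and the ERR-PRED consequence of ERR-FREE, through the Fischlin \etal{} composition shown in Figure \ref{fig:stream-relations}.

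The main obstacle I expect is not any single reduction but the careful correctness argument for \tf{Send}: proving termination and the buffer invariant in the face of the interaction between the output-length target $p$, the flush flag $f$, the recursion, and the $\ell_p$ padding loop. Once that bookkeeping is nailed down, the rest is essentially a sequence of standard reductions and invocations of earlier theorems.
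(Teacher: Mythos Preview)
Your proposal is correct and follows essentially the same decomposition as the paper: establish correctness, \shaping, and channel length regularity directly from the pseudocode; reduce \passiveObfs to IND\$-CPA and INT-CST to the AEAD integrity property; observe ERR-FREE; then invoke Theorem~\ref{thm:new-general-comp} for \activeObfs and the Fischlin~\etal composition in Figure~\ref{fig:stream-relations} for IND-CCFA. The paper's correctness proof uses the same buffer-invariant idea you describe (formalized there as a single invariant $I$ equal to the concatenation of all plaintext and ``recoverable'' ciphertext buffers), and its INT-CST reduction is slightly more explicit than yours about the case split on whether the first deviation lands in a length block versus a payload block, but the underlying argument is the same.
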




\section{Fully Encrypted Datagram Transport Protocols}
\label{sec:datagram-channel-model}
In the datagram setting, messages are transmitted atomically (i.e. without fragmentation). Implied
in this property is that the length of each message can be determined by the receiver, unlike in the
datastream model where messages can be arbitrarily fragmented or merged. Even in benign
circumstances, messages in the datagram model may be delivered out of order or dropped. This setting
models the UDP transport protocol.

Let $\inLength$ be the maximum number of bytes in an input message. This number may vary with
the channel (i.e. with the protocol), but it is assumed that $\inLength > 0$. Let
$\messageSpace = \cup_{0\le i\le \inLength} \{0,1\}^{8i}$ be the space of
valid input messages. Thus, any message $m$ with $|m| \le \inLength$ is a valid input while
sending. Let $\emptySt$ be the ``empty''
message, that is, the message of length zero. Note that
$\emptySt\in\messageSpace$. We assume that there is a maximum number of bytes
$\outLength$ in deliverable outputs, which is typically larger than $\inLength$.
$\outLength$ does not vary with the channel and models the limitations of the underlying
delivery mechanism (\eg, the maximum size of a UDP datagram). Let $\bot$ be a distinguished error
symbol (i.e., an item distinct from other outputs), which can be produced when sending or receiving
a message, such as when the message to send is too long or the received message is malformed.

We also introduce the distinguished symbol $\nullsym$ to indicate a ``null'' message. This symbol is
different from the empty message $\emptySt$ in that it indicates no message rather than an empty
message. We introduce this symbol to distinguish ``chaff'' messages, which are those sent purely to
fool a network observer, from messages that are intended to be delivered to the receiver. A special
case of this is when the sender requests an output length that is too small to contain an
authentication tag, which we want to support to provide \shaping. $\nullsym$ can be used as both an
input while sending and an output while receiving, and so to avoid ambiguity it is not in
$\messageSpace$. $\nullsym$ can
be used as an input to produce a chaff message. If it is produced as an output while receiving,
the receiver should react as if no message was received.
This design choice is driven by our desire to ensure that if the channel operates normally, the receiver does not produce an error symbol. This combined with the requirement that messages can be produced of arbitrary size by an honest sender (because of \shaping), and the requirement that messages may be dropped or re-arranged without error means that in certain cases, malicious messages or errors introduced by an active adversary will be indistinguishable from valid chaff messages, which required changes to the $\tf{Recv}$ oracle in our security definitions, presented in Algorithm \ref{alg:fepdgram-recv}. 	

The datagram model offers different challenges to defining and achieving Fully Encrypted Protocols.
Atomic messaging avoids the necessity of communicating message length. However, the lack of delivery
guarantees means that each message must be individually decryptable, which, for example, rules out
using certain block-cipher modes across messages. At the same time, however, the \shaping
requirement will require sending messages so small they cannot even contain authentication tags, as
mentioned.

\subsection{Channel Model}
The datagram channel model consists of the following algorithms, which provide a unidirectional channel:
\begin{enumerate}
    \item $(\tok[R]{st}, \tok[S]{st}) \gets \tf{Init}(1^\lambda)$, which takes a security parameter
    $\lambda$ and generates the sender and receiver state. 
    \item $(\tok[S]{st}', c) \gets \tf{Send}(\tok[S]{st}, m, p)$, which takes a sender state and a plaintext message $m$, and a desired output length $p$. Its output $\tok[S]{st}'$ is an updated sender state, and its output $c$ is either a ciphertext or an error.
    \item $(\tok[R]{st}', m) \gets \tf{Recv}(\tok[R]{st}, c)$, which takes a receiver state and a ciphertext $c$. Its output $\tok[R]{st}'$ is an updated receiver state, and its output $m$ is either a plaintext message or an error.
\end{enumerate}

\subsection{Correctness}
The correctness requirement differs significantly from the datastream setting due to the lack of
reliable in-order delivery. Instead, the requirement should simply enforce that a message $m$ sent
through $\tf{Send}$ and then $\tf{Recv}$ produces $m$ again. This basic idea is complicated by the
possibility that $\tf{Send}$ may yield an error. We allow for such errors in the following
cases: (1) $m$ is too large (i.e., $|m| > \inLength$), (2) $p$ is too small for
$m$, and (3) $p$ is too large  (i.e., $p > \outLength)$.
Definition~\ref{def:datagram-correctness} gives the correctness requirement for a datagram channel.

\begin{definition} \label{def:datagram-correctness}
Fix any sequence of datagram-channel operations where the first call is to \tf{Init}, the remaining are (possibly interleaved) calls to $\tf{Send}$ and $\tf{Recv}$, and the sender (receiver) state outputs are correctly provided as inputs to the subsequent $\tf{Send}$ ($\tf{Recv}$) call.

The datagram channel satisfies \emph{correctness} if, for all such sequences, the following properties hold:
\begin{enumerate}
\item \textbf{Message Acceptance}: For every $m \in \messageSpace \cup \{\nullsym\}$, there exists some $p_m \leq \outLength$ such that with $(\tok[S]{st}, c) \gets \tf{Send}(\tok[S]{st}, m,p)$, if $p \geq p_m$ or $p < 0$, $c \neq \bot$. 
\item \textbf{Message Length}: For every $m \in \messageSpace \cup \{\nullsym\}$, with $(\tok[S]{st}, c) \gets \tf{Send}(\tok[S]{st},m,p)$, if $c \neq \bot$, $|c| \leq \outLength$.
\item \textbf{Message Delivery}: For any $\tf{Send}$ call with input $m\in\messageSpace\cup\{\nullsym\}$ and output $(\tok[S]{st}, c)$, where $c \neq\bot$, any subsequent $\tf{Recv}$ call with input $c$ must have output either $m$ or $\bot$, and the first such call must have output $m$.

\end{enumerate}
\end{definition}

\subsection{\shaping}
As with datastream channels, we would like our datagram-based protocol to support \shaping{} by
changing the length of its output messages. The desired output length is specified by the
$\tf{Send}$ parameter $p$. We require \tf{Send} to output $p$ bytes as long as $p$ is large enough
to accommodate the desired message and does not exceed the maximum datagram size $\outLength$.
In particular, for the null message $\nullsym$, values $0\le p\le \outLength$ should yield an 
output ciphertext of length $p$ ($c$ might be meaningless randomness if $p$ is too small to
accommodate ciphertext metadata such as a tag or nonce).
Definition~\ref{def:length-shaping-datagram} gives our precise \emph{\shaping} notion in the
datagram context.

\begin{definition} \label{def:length-shaping-datagram}
A datagram channel satisfies \emph{\shaping} if, for any state $\tok[S]{st}$ produced by
$\tf{Init}$ or a subsequent $\tf{Send}$ call, any message $m \in \messageSpace \cup\nullsym$, and any
integer $p
\geq 0$, the following hold for $(\tok[S]{st}', c) \gets \tf{Send}(\tok[S]{st}, m, p)$:
\begin{enumerate}
    \item If $c\neq \bot$ and $p \geq 0$, then $|c| = p$, and
    \item If $m=\nullsym$ and $p \leq \outLength$, then $c\neq \bot$.
\end{enumerate}
\end{definition}
Note that this definition places no constraints on the channel output when $p$ is negative.
Therefore, a client may turn off \shaping by using $p<0$.

\subsection{Security Definitions}
Providing security definitions for fully encrypted protocols is simpler for datagram protocols than
for datastream protocols, largely because the correctness requirement does not require tolerating
plaintext and ciphertext fragmentation. As with datastream protocols, our new
definitions for FEPs are distinct from existing security notions for confidentiality and integrity, and we therefore present and discuss those existing notions as well. All our security notions are
with respect to a probabilistic polynomial-time (PPT) adversary.

\subsubsection{Confidentiality and Integrity definitions}
We adopt confidentiality and integrity definitions from Bellare and
Namprempre~\cite{bellare2008authenticated}. We require only
slight modifications to adapt them to our stateful communication channel. Details appear in
\iffull
    Appendix~\ref{app:dgram-sec}.
\else
    the full version~\cite{fepfull}.
\fi

For confidentiality, we use the \indcpa{} and \indcca{} definitions, which provide security with
respect to a passive and active adversary, respectively.
In both definitions, the adversary
is given adaptive access to a sending oracle, which makes use of the $\tf{Send}$ function of the underlying
channel. In \indcca{} security, the adversary also has
adaptive access to a receiving oracle, which uses the channel's $\tf{Recv}$ function.
The sending oracle requires the adversary to submit pairs of equal-length messages, and it outputs the encryption of one of them, depending on a secret random bit $b$. The receiving oracle only returns a decryption of the given ciphertext if it was not an
output of the sending oracle and if the decryption is neither $\bot$ nor $\nullsym$. The channel
provides confidentiality if the adversary cannot guess the secret bit $b$ with probability
non-negligibly different than random guessing.

For integrity, we use \intctxt{}, which challenges the adversary to produce an unseen ciphertext
that successfully decrypts. The adversary is given adaptive access to a sending oracle and a
receiving oracle, which make use of the channel's $\tf{Send}$ and $\tf{Recv}$ functions,
respectively. The channel provides integrity if the adversary cannot send the receiving oracle a
ciphertext that was not produced by the sending oracle but does successfully decrypt to a non-null output, except with negligible probability.

\subsubsection{Fully Encrypted Datagram Protocols}

Our novel security definitions in the datagram setting are \fepcpa{} and \fepcca{}, which define
security for Fully Encrypted Protocols against a chosen plaintext attack (i.e. a passive adversary)
and against a chosen ciphertext attack (i.e. an active adversary), respectively:
\begin{definition}\label{def:fep-dg-security}
    A channel satisfies \bothfepdgram{}, for $x\in \{\textrm{CPA}, \textrm{CCA}\}$ if, for a security parameter $\lambda$ and PPT adversary
    $\mathcal{A}$, $\left|P\left[\mathsf{Exp}^{\textsf{\bothfepdgram},b}_{\mathcal{A}}(1^{\lambda}) = 1 \big| b \overset{{\scriptscriptstyle \operatorname{R}}}{\leftarrow} \{0,1\}\right] - 1/2\right|$ is negligible in $\lambda$.
\end{definition}

In the related security experiment (Algorithm~\ref{alg:fep-dgram}), the adversary is given adaptive
access to a sending oracle (Algorithm~\ref{alg:fepdgram-send}). That oracle either faithfully
returns the output of the channel $\tf{Send}$ operation or replaces that output with the same number
of uniformly random bytes, depending on a randomly selected bit $b$. The adversary is thus
challenged to distinguish between the outputs of $\tf{Send}$ and random messages of the same
length.

In the active security definition, \fepcca{}, adaptive access to a receiving oracle is added
(Algorithm~\ref{alg:fepdgram-recv}). That oracle also depends on the secret bit $b$. If $b=0$, it
returns the output of $\tf{Recv}$ called on the given ciphertext, except if the
ciphertext was an output of $\tf{Send}$ or if it fails to decrypt to a non-null plaintext. If $b=1$,
the oracle always returns $\bot$. The active adversary can thus also attempt to get $\tf{Recv}$ to
produce anything but $\bot$, which can only happen if $b=0$ and if the adversary submits a novel,
valid, and non-null ciphertext. This definition models potential information leaks through
observable behavior of the recipient that would receive those outputs. Note that null outputs are
excepted because, in general, the output of $\tf{Send}$ may be required to be of a size too small to
contain an authentication tag, making small outputs forgeable.

\begin{algorithm}
    \caption{$\mathsf{Exp}^{\textsf{\bothfepdgram},b}_{\mathcal{A}}(1^{\lambda})$} \label{alg:fep-dgram}
    \begin{algorithmic}[1]
        \State $(\tok[S]{st}, \tok[R]{st}) \gets \tf{Init}(1^\lambda)$
        \State $C \gets \emptyset$
        \State $b' \gets \left\{
            \begin{array}{ll}
                \mathcal{A}^{\mathcal{O}^b_{\tf{\fepdgramsend}}()}(1^\lambda) & \textrm{if $x = $ CPA}\\
                \mathcal{A}^{\mathcal{O}^b_{\tf{\fepdgramsend}}(), \mathcal{O}^b_{\tf{\fepdgramrecv}}()}(1^\lambda) & \textrm{if $x =$ CCA}
            \end{array}\right.$

        \State \Return $b' = b$
    \end{algorithmic}
\end{algorithm}

\begin{algorithm}[H]
    \caption{$\mathcal{O}^b_{\tf{\fepdgramsend}}(m,p)$} \label{alg:fepdgram-send}
    \begin{algorithmic}[1]
        \State $(\tok[S]{st}, c_0) \gets  \tf{Send}(\tok[S]{st},m,p)$
        \If{$c_0 = \bot$}
        \State \Return $\bot$
        \EndIf
        \State $c_1 \gets \tf{Rand}(|c_0|)$

        \State $C \gets C \cup \{c_b\}$
        \State \Return $c_b$
    \end{algorithmic}
\end{algorithm}

\begin{algorithm}[H]
    \caption{$\mathcal{O}^b_{\tf{\fepdgramrecv}}(c)$} \label{alg:fepdgram-recv}
    \begin{algorithmic}[1]
        \If{$b = 0$}
            \State $(\tok[R]{st}, m) \gets \tf{Recv}(\tok[R]{st}, c)$
            \If{$c \notin C \land m \neq \bot \land m \neq \nullsym$}
                \State \textbf{return} $m$
            \Else
                \State \textbf{return} $\bot$
            \EndIf

        \EndIf
        \State \textbf{return} $\bot$
        
    \end{algorithmic}
\end{algorithm}

\section{Relations Between Datagram Notions}
\label{sec:datagram-relations}
In this section we present relations between our security notions for datagram channels in Theorems \ref{prop:datagram-cca-integrity}, \ref{prop:datagram-composition}, and \ref{prop:datagram-fepcca-indcca}. Proofs of these relations appear in
\iffull
  Appendix~\ref{appendix:datagram-relations-proofs}.
\else
  the full version~\cite{fepfull}.
\fi
As with the datastream notions, our datagram  security definitions do not imply and are not implied by confidentiality properties because plaintext values may be leaked via the ciphertext lengths.

\begin{theorem} \label{prop:datagram-cca-integrity} If a Datagram channel satisfies FEP-CCA, it satisfies INT-CTXT. \end{theorem}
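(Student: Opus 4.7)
My plan is a direct reduction. Assume a PPT adversary $\mathcal{A}$ has non-negligible advantage against INT-CTXT; I will construct a distinguisher $\mathcal{B}$ with non-negligible advantage against FEP-CCA. The central observation is that the $b=0$ branch of the FEP-CCA receiving oracle (Algorithm~\ref{alg:fepdgram-recv}) returns a non-$\bot$ value if and only if the submitted ciphertext $c$ is not among the ciphertexts previously produced by the sending oracle and $\tf{Recv}(c) \notin \{\bot, \nullsym\}$. This is exactly the INT-CTXT winning condition.

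$\mathcal{B}$ will run $\mathcal{A}$ and simulate the INT-CTXT experiment by forwarding $\mathcal{A}$'s sending queries directly to $\mathcal{O}^b_{\tf{\fepdgramsend}}$ and $\mathcal{A}$'s receiving queries directly to $\mathcal{O}^b_{\tf{\fepdgramrecv}}$, returning each response verbatim to $\mathcal{A}$. At the end of $\mathcal{A}$'s execution (or immediately on seeing one), $\mathcal{B}$ outputs $b'=0$ if any receiving query returned a non-$\bot$ response and $b'=1$ otherwise. In the $b=1$ branch of FEP-CCA the receiving oracle is always $\bot$, so $\mathcal{B}$ deterministically outputs $b'=1$; in the $b=0$ branch, by the central observation, a non-$\bot$ response is returned exactly on INT-CTXT-winning queries. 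Hence $\mathcal{B}$'s FEP-CCA advantage is at least half the probability that a winning query occurs inside the simulation, and if this is non-negligible we contradict FEP-CCA.

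The main obstacle is arguing that $\mathcal{A}$'s INT-CTXT winning probability is preserved under $\mathcal{B}$'s simulation. The sending-oracle responses and the winning indicator match the real INT-CTXT experiment exactly, but on non-winning receiving queries $\mathcal{B}$'s simulation returns $\bot$ where a more informative INT-CTXT oracle might have returned a plaintext, specifically on replayed ciphertexts $c \in C$ (where correctness guarantees that $\tf{Recv}(c)$ returns either the original plaintext or $\bot$) and on fresh ciphertexts whose decryption happens to be $\nullsym$. I plan to resolve this by noting that the Bellare--Namprempre framework invoked by the paper admits a verification-style INT-CTXT oracle under which those cases collapse to a validity bit equivalent to ``$\mathcal{O}^b_{\tf{\fepdgramrecv}}$ returned non-$\bot$''; alternatively, $\mathcal{B}$ can synthesize the more informative responses internally for $c \in C$ (using the plaintexts it supplied to the sending oracle) and treat $\nullsym$-valued decryptions as non-winning by definition. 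Under either treatment, $\mathcal{A}$'s view matches the real INT-CTXT experiment up to and including its first winning query, so the reduction preserves $\mathcal{A}$'s advantage and FEP-CCA is broken.
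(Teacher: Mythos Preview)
Your approach is essentially the same as the paper's: run the \intctxt{} adversary inside the \fepcca{} experiment and base the guess on whether any receiving-oracle query ever returns a non-$\bot$ value, which can happen only when $b=0$ and exactly on an \intctxt{}-winning query. The paper's proof in fact asserts outright that the $b=0$ receiving oracle is ``identical'' to the \intctxt{} one and does not address the replay/$\nullsym$ discrepancy you flag, so your treatment is, if anything, more careful than the original.
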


\begin{theorem} \label{prop:datagram-composition} If a Datagram channel satisfies FEP-CPA and INT-CTXT, it satisfies FEP-CCA. 
\end{theorem}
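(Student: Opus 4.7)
The plan is to adapt the classical Bellare--Namprempre composition argument (IND-CPA $+$ INT-CTXT $\Rightarrow$ IND-CCA) to the datagram FEP setting via a single intermediate hybrid. Let $H_0$ and $H_2$ denote the $b=0$ and $b=1$ worlds of the \fepcca{} experiment, respectively, and define $H_1$ to be identical to $H_0$ except that the receiving oracle, on every query, still invokes $\tf{Recv}$ (so the receiver state evolves exactly as in $H_0$) but always returns $\bot$ to the adversary. For any PPT \fepcca{} adversary $\mathcal{A}$, it suffices to bound $\bigl|\Pr[H_0 \to 1] - \Pr[H_2 \to 1]\bigr|$ by a sum of an \intctxt{} advantage and a \fepcpa{} advantage, both of which are negligible by hypothesis.

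For the $H_0 \to H_1$ step I would construct an \intctxt{} adversary $\mathcal{B}$ that simulates $H_1$ for $\mathcal{A}$: $\mathcal{B}$ forwards $\mathcal{A}$'s $\tf{Send}$ queries to its own sending oracle (recording the set $C$ of returned ciphertexts), forwards $\tf{Recv}$ queries to its own receiving oracle, and always replies to $\mathcal{A}$ with $\bot$. By inspection of Algorithm \ref{alg:fepdgram-recv}, the views of $\mathcal{A}$ in $H_0$ and $H_1$ are identically distributed up to the first $\tf{Recv}$ query on which the real oracle would have returned a non-$\bot$ value; this happens exactly when $\mathcal{A}$ submits some $c \notin C$ whose $\tf{Recv}$ output is neither $\bot$ nor $\nullsym$, which is precisely $\mathcal{B}$'s \intctxt{} winning condition. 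Hence $\bigl|\Pr[H_0 \to 1] - \Pr[H_1 \to 1]\bigr| \le \mathbf{Adv}^{\intctxt}(\mathcal{B})$. For the $H_1 \to H_2$ step I would construct a \fepcpa{} adversary $\mathcal{D}$ that forwards $\mathcal{A}$'s $\tf{Send}$ queries to its own sending oracle and answers every $\tf{Recv}$ query with $\bot$; when $\mathcal{D}$'s secret bit is $0$ the simulation matches $H_1$, and when it is $1$ it matches $H_2$, since the $b=1$ world of \fepcca{} combines exactly these two behaviors (random sending outputs, $\bot$ on every receive). Outputting $\mathcal{A}$'s guess yields $\mathbf{Adv}^{\fepcpa}(\mathcal{D}) = \bigl|\Pr[H_1 \to 1] - \Pr[H_2 \to 1]\bigr|$, and the triangle inequality closes the argument.

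The main subtlety I expect lies in the receiver-state bookkeeping across hybrids: $H_1$ invokes the real $\tf{Recv}$ so its state mutates, whereas $\mathcal{D}$'s simulation never invokes $\tf{Recv}$ at all, and similarly $H_2$ matches $\mathcal{D}$'s simulation only because the $\bot$-returning oracle in the $b=1$ world never touches receiver state either. This discrepancy is harmless because the channel is unidirectional: $\mathcal{A}$'s view is a deterministic function of the $\tf{Send}$ outputs and the $\bot$ replies, and the sender state driving $\tf{Send}$ is independent of the receiver state, so identical sender-side randomness and identical adversarial queries yield identical views. A secondary point to verify is that the \intctxt{} notion adapted from Bellare--Namprempre treats only non-null, non-$\bot$ decryptions as successful forgeries, matching the filtering performed by the \fepcca{} receiving oracle; this alignment is exactly the ``non-null output'' clause highlighted in Section \ref{sec:datagram-channel-model} and is what makes $\mathcal{B}$'s winning event coincide with the $H_0$ versus $H_1$ divergence event.
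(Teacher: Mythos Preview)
Your proposal is correct and follows essentially the same route as the paper's proof: a single hybrid in which the receiving oracle is replaced by one that always returns $\bot$, with the gap to the real $b=0$ world bounded by the \intctxt{} advantage and the gap to the $b=1$ world absorbed into the \fepcpa{} advantage. The paper phrases the hybrid over the full experiment with random $b$ (its $E_0\to E_1$) rather than separating the $b=0$ and $b=1$ worlds as you do, but this is cosmetic; your treatment is in fact more explicit, and your discussion of the receiver-state bookkeeping covers a point the paper's sketch leaves implicit.
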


To obtain confidentiality from our datagram FEP notions, we require a property that ensures message lengths do not leak information about the plaintext content. We define Datagram Channel Length Regularity in Definition \ref{def:length-regularity-datagram}. 

\begin{definition}\label{def:length-regularity-datagram} Let $M^0$ and $M^1$ be
$n$-length sequences of elements in $\messageSpace \cup \{\nullsym\}$ such that, for all $i$,
either $|M^0_i| = |M^1_i|$ or $M^0_i = M^1_i = \nullsym$. Let $P$ be
an $n$-length integer sequence. Let  $(\tok[S]{st}^0, C^0_i) \gets
\tf{Send}(\tok[S]{st}^0, P_i, M^0_i)$ and $(\tok[S]{st}^1, c^1_i) \gets
\tf{Send}(\tok[S]{st}^1, P_i, M^1_i)$, where in both cases $\tf{Send}$ is initialized
with $\tf{Init}$ and is then called sequentially as $i=1..n$, updating its state with each call. A
datagram channel is \emph{length regular} if, for any such $M^0$, $M^1$, and $P$,
and for all $i$, either $|C^0_i| = |C^1_i|$ or $C^0_i = C^1_i = \bot$.
\end{definition} 

\begin{theorem} \label{prop:datagram-fepcca-indcca} If a Datagram channel satisfies FEP-CCA and Datagram Channel Length Regularity, it satisfies IND-CCA. \end{theorem}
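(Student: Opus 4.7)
The plan is to bound the IND-CCA advantage by a three-hybrid argument, using FEP-CCA to move between real and random ciphertexts at each endpoint and using Datagram Channel Length Regularity to identify the two middle hybrids. Let $H_0$ denote the IND-CCA experiment with the secret bit set to $0$, and let $H_3$ be the analogous experiment with bit $1$. Define intermediate hybrids $H_1$ (respectively $H_2$) in which every sending-oracle response is replaced with $\tf{Rand}(|\tf{Send}(m_0,p)|)$ (respectively $\tf{Rand}(|\tf{Send}(m_1,p)|)$), and every receiving-oracle response is replaced with $\bot$; sending queries for which $\tf{Send}$ would output $\bot$ still return $\bot$.

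To bound the distinguishing advantage between $H_0$ and $H_1$, I would construct a FEP-CCA adversary $\mathcal{B}$ that runs the IND-CCA adversary $\mathcal{A}$ and simulates its oracles as follows. On a sending query $(m_0,m_1,p)$, $\mathcal{B}$ forwards $(m_0,p)$ to its FEP-CCA sending oracle $\mathcal{O}^b_{\fepdgramsend}$ and relays the output to $\mathcal{A}$. On a receiving query $c$, $\mathcal{B}$ forwards $c$ to $\mathcal{O}^b_{\fepdgramrecv}$ and relays the output. When $\mathcal{B}$'s FEP-CCA bit is $0$, the sender returns a genuine $\tf{Send}(m_0,p)$ and the receiver returns $\tf{Recv}(c)$ exactly when $c$ is not a prior sender output and the plaintext is neither $\bot$ nor $\nullsym$, which matches the IND-CCA oracle semantics with bit $0$; the simulation is therefore identical to $H_0$. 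When the FEP-CCA bit is $1$, the sender returns $\tf{Rand}(|\tf{Send}(m_0,p)|)$ and the receiver always returns $\bot$, which is exactly $H_1$. Hence the distinguishing advantage between $H_0$ and $H_1$ is at most the FEP-CCA advantage of $\mathcal{B}$, and is therefore negligible. The bound on the distance between $H_2$ and $H_3$ is obtained symmetrically by a reduction that forwards $m_1$ instead of $m_0$.

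To show that $H_1$ and $H_2$ are identically distributed, I appeal to Datagram Channel Length Regularity. Because $\mathcal{A}$ is required to submit pairs with $|m_0|=|m_1|$, Definition~\ref{def:length-regularity-datagram} guarantees that for every query either $|\tf{Send}(m_0,p)|=|\tf{Send}(m_1,p)|$ or both outputs equal $\bot$. In the first case the uniform samples $\tf{Rand}(|\tf{Send}(m_0,p)|)$ and $\tf{Rand}(|\tf{Send}(m_1,p)|)$ are drawn from identical distributions, and in the second case both hybrids return $\bot$; the receiving oracle is identically $\bot$ in both. So $\mathcal{A}$'s view is identical across the two hybrids. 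A triangle-inequality bound on the three consecutive hybrid distances then yields an IND-CCA advantage of at most twice the FEP-CCA advantage of $\mathcal{B}$, which is negligible.

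The main subtlety I anticipate is verifying that the IND-CCA and FEP-CCA receiving-oracle semantics truly coincide in the FEP-CCA bit-$0$ case, so that no additional hybrid is needed to handle decryption queries on honestly generated ciphertexts. The agreement hinges on three conventions being identical in both experiments: both oracles must exclude ciphertexts already returned by the sender, both must suppress $\bot$ outputs, and both must suppress $\nullsym$ outputs. Provided these are specified identically, the simulation is exact, and the only remaining bookkeeping is that $\mathcal{B}$ faithfully relays $\bot$ sending-oracle outputs (automatic under the reduction).
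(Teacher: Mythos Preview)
Your proof is correct, but it takes a different route from the paper. The paper argues indirectly: it observes that FEP-CCA trivially yields FEP-CPA and, via Theorem~\ref{prop:datagram-cca-integrity}, also yields INT-CTXT; it then invokes the analogue of Theorem~\ref{prop:lenreg-fep-ind} to get IND-CPA from FEP-CPA plus length regularity, and finally appeals to the Bellare--Namprempre generic composition theorem to conclude IND-CCA from IND-CPA and INT-CTXT. Your argument is a direct four-hybrid reduction that never separates out INT-CTXT or IND-CPA as intermediate properties, exploiting the fact that the FEP-CCA receiving oracle (Algorithm~\ref{alg:fepdgram-recv}) in the $b=0$ branch is syntactically identical to the IND-CCA receiving oracle (Algorithm~\ref{alg:ind-cca-recv}), so a single FEP-CCA reduction handles both the sending and receiving oracles simultaneously. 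Your approach is more elementary and yields an explicit bound of twice the FEP-CCA advantage; the paper's approach is more modular and reuses established machinery. The only bookkeeping your reduction $\mathcal{B}$ should make explicit is that it enforces the $\mathcal{O}^b_{\tf{\indlor}}$ input constraints (rejecting queries with $|m_0|\neq|m_1|$ or with exactly one of $m_0,m_1$ equal to $\nullsym$) before forwarding, so that the simulated view matches $H_0$ exactly; this is trivial.
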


\section{A Fully Encrypted Datagram Construction}
In the following section we introduce our datagram construction of a Fully Encrypted Protocol, and we prove it satisfies satisfies correctness, the desired security properties, and \shaping.

\subsection{The Construction}
\label{sec:datagram-construction}

We give a simple construction of a protocol that satisfies our two obfuscation security definitions and \shaping for datagram channels. To satisfy most of our security and obfuscation properties, we can simply use atomic encryption and decryption since datagram messages are atomic by construction. However, the \shaping property presents a challenge: not all lengths are possible outputs of an AEAD scheme, and since the messages are not ordered, they must each include a nonce as well. 

In our construction, we assume the same AEAD scheme from Section \ref{sec:aead-scheme}. For simplicity, we further assume that \tf{Enc} includes in its outputs each nonce as a prefix of the ciphertext when it is called, and that \tf{Dec} extracts the first $\nonceSize$ bytes from its input as the nonce to use for decryption. Let $\aeadOverhead = \nonceSize + \tagSize$ represent the encryption overhead since our construction uses fresh nonces for each ciphertext. We give an explicit encoding scheme for our message space into bytes as follows: we encode the null message $\nullsym$ as a single zero byte, and prefix all other messages (including the empty string $\emptySt$) with a leading $1$ byte. We set $\outLength = 65507$ (the maximum size UDP payload length), and we include in our message space all byte strings of length up to $\inLength = \outLength - \aeadOverhead - 3$, reserving two bytes for the plaintext length and one for the message type. We denote the length of a ciphertext containing the null message as $\nullLength = 1 + \aeadOverhead$. We abuse notation and write $|m|$ to mean both the size of $m$ and the two byte unsigned integer that represents that size, depending on the context.

Our protocol is stateless, so each function simply returns the symmetric key $k$ as the new state after each call. The \tf{Send} function generates a fresh nonce, and if \shaping is off, directly applies \tf{Enc} to produce the datagram. If the desired output length $p$ is too small for an authenticated message, we produce random bytes (Lines 6--7). We produce arbitrary length chaff messages (Line 9) or padded plaintext messages (Line 13), and return with an error if $p$ is too large to fit in a datagram or $m$ is too large to fit within $p\geq 0$ (Line 10). \tf{Recv} interprets all small messages as chaff, and then decrypts larger messages, returning the plaintext, chaff symbol, or error where appropriate. 

\begin{figure*}[ht!]

\makebox[\textwidth][c]{

\fbox{
    \begin{minipage}[t]{.55\textwidth}
        $\tf{Send}(k, m, p)$:
		\begin{algorithmic}[1]
		    \State $\tok{nonce} = \tf{Rand}(\nonceSize)$

            	\If{$p < 0$ and $m = \nullsym$}
            			\State \textbf{return} $(k, \tf{Enc}_k(\tok{nonce},0))$
            	\EndIf
            		\If{$p < 0$ and $|m| \leq \inLength$}
            			\State \textbf{return} $(k, \tf{Enc}_k(\tok{nonce},1\Vert |m| \Vert m))$
            		\EndIf

        		\If{$m = \nullsym$ and $p < \nullLength$}
            \State \textbf{return} $(k, \tf{Rand}(p))$
			\EndIf
            
            \If{$m = \nullsym$ and  $\nullLength \leq p \leq \outLength$}
            \State \textbf{return} $(k,\tf{Enc}_k(\tok{nonce},0\Vert 0^{p - \nullLength}))$
            \EndIf
            \If{$p > \ell_{\tok{out}}$ or $|m| >  p - \aeadOverhead - 3$}
            \State \textbf{return} $(k,\bot)$
            \EndIf

			\State $\tok{pt} = 1\Vert |m|\Vert 0^{(p - |m| - \aeadOverhead - 3)}\Vert m$
            \State \textbf{return} $(k,\tf{Enc}_k(\tok{nonce},\tok{pt}))$ 

		\end{algorithmic}

	\end{minipage}
    \begin{minipage}[t]{.45\textwidth}
        $\tf{Recv}(k,c)$:
		\begin{algorithmic}[1]
		\If{$|c| < \nullLength$}
		\State \textbf{return} $\nullsym$
		\EndIf
		\State $\tok{buf} \gets \tf{Dec}_k(c)$
		\If{$\tok{buf} = \errDec$}
		\State \textbf{return} $\bot$
		\EndIf
		\If{$\tok{buf}[1] = 0$}
		\State \textbf{return} $\nullsym$
		\EndIf
		\State $i \gets |\tok{buf}| - \tok{buf}[2,3] + 1$ \Comment Two-byte unsigned int
		\State \textbf{return} $(k,\tok{buf}[i..])$
		\end{algorithmic}
		\hrule
		\vspace{.5mm}
		$\tf{Init}(1^{\lambda})$:
        \begin{algorithmic}[1]
		\State $k \gets \tf{Gen}(1^{\lambda})$
        \State \textbf{return} $(k, k)$
        \end{algorithmic}

	\end{minipage}
}
}

\caption{\label{fig:construction-datagram}Our datagram channel construction}
\end{figure*}

\subsection{Properties}

We prove in Theorem \ref{thm:datagram-construction-all} that our datagram construction satisfies all of the desired properties presented in Section \ref{sec:datagram-channel-model}. The proofs are in
\iffull
	Appendix \ref{appendix:datagram-construction-proofs}.
\else
	the full version~\cite{fepfull}.
\fi
Our approach is to establish correctness, \shaping, datagram channel length regularity, FEP-CPA, and INT-CTXT, and then apply the theorems from Section \ref{sec:datagram-relations} to establish the remaining properties. 

\begin{theorem}
	\label{thm:datagram-construction-all} If the AEAD scheme (\tf{Gen}, \tf{Enc}, \tf{Dec}) in the channel construction in Figure \ref{fig:construction-datagram} satisfies IND\$-CPA, INT-CTXT, and length additivity (see Sec \ref{sec:aead-scheme}) then the channel construction satisfies correctness for Datagram channels, \shaping, FEP-CCA, INT-CTXT, and IND-CCA. 
\end{theorem}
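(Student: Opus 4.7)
The plan is to prove each property of the construction separately, establishing the non-cryptographic properties (correctness, \shaping, datagram channel length regularity) by direct case analysis, then proving the base cryptographic properties (\fepcpa{} and \intctxt) by reduction to the AEAD assumptions, and finally invoking Theorems~\ref{prop:datagram-composition} and~\ref{prop:datagram-fepcca-indcca} to obtain \fepcca{} and \indcca{}.

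For correctness, I would perform a case analysis on the branches of \tf{Send}. For Message Acceptance, I would set $p_m = \nullLength$ when $m = \nullsym$ and $p_m = |m| + \aeadOverhead + 3$ otherwise, verifying each branch returns a non-$\bot$ ciphertext. Message Length follows by direct inspection: every non-error branch either returns $p$ bytes (bounded by $\outLength$ via the explicit length check) or, when $p<0$, returns an AEAD encryption of a plaintext whose length is bounded by $\inLength + 3$, so the ciphertext has length at most $\inLength + 3 + \aeadOverhead = \outLength$ by length additivity. Message Delivery follows because for any successfully produced non-random ciphertext, \tf{Dec} returns the encoded plaintext (by AEAD correctness), and the type byte plus length field allow \tf{Recv} to recover $m$ or $\nullsym$ exactly. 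For \shaping, the two required conditions correspond directly to branches of the \tf{Send} code: whenever $p \geq 0$ and no error is returned, each branch explicitly pads the plaintext (or produces random bytes) so that length additivity gives $|c| = p$; the null-message case at $p \leq \outLength$ is handled by the two dedicated branches. Length regularity follows similarly: two plaintexts $m^0,m^1$ of equal length on the same $p$ fall into the same branch and produce equal-length outputs.

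For \fepcpa{}, I would reduce to IND\$-CPA of the AEAD. Given a \fepcpa{} adversary $\mathcal{A}$, the reduction simulates $\mathcal{O}_{\tf{Send}}^0$ by calling the AEAD challenge oracle with the plaintext constructed by \tf{Send} on each query, and returns either the real ciphertext or a uniformly random string of the same length. The only subtlety is that our \tf{Send} generates fresh random nonces per call rather than using distinct nonces by counter, so two calls could collide; however, by a standard birthday argument over $q$ queries the collision probability is $O(q^2/2^{8\nonceSize})$, which is negligible, and conditioned on no collision the AEAD's IND\$-CPA property makes every AEAD output indistinguishable from random. The branches that already emit $\tf{Rand}(p)$ bytes are trivially identical between $b=0$ and $b=1$. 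The $p<0$ branches still invoke \tf{Enc} on a well-formed plaintext, so IND\$-CPA applies identically.

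For \intctxt{}, I would reduce to the AUTH property of the AEAD, again using a birthday bound for nonce collisions among honestly generated ciphertexts. Since a forgery must be a ciphertext $c\notin C$ that \tf{Recv} accepts as a non-null, non-error plaintext, and such acceptance requires $|c|\ge \nullLength$ and $\tf{Dec}_k(c)\ne\errDec$ with leading byte $1$, any successful adversary produces a nontrivial AEAD forgery on the embedded nonce, which violates the AEAD AUTH assumption. The main obstacle I expect is cleanly handling the nonce-collision accounting across both reductions in a way that treats the two special $p<0$ branches and the padding branches uniformly; once that bookkeeping is settled, Theorem~\ref{prop:datagram-composition} gives \fepcca{} from \fepcpa{} and \intctxt, and Theorem~\ref{prop:datagram-fepcca-indcca} combined with datagram channel length regularity yields \indcca{}, completing the theorem.
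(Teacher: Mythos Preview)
Your proposal is correct and follows essentially the same approach as the paper: establish correctness, \shaping{}, and datagram channel length regularity by case analysis on the branches of \tf{Send}, prove \fepcpa{} and \intctxt{} by reduction to IND\$-CPA and AUTH of the AEAD (handling random-nonce collisions via a birthday bound), and then invoke Theorems~\ref{prop:datagram-composition} and~\ref{prop:datagram-fepcca-indcca} to conclude \fepcca{} and \indcca{}. Your treatment is, if anything, slightly more explicit than the paper's on the nonce-collision bookkeeping and on the $\tf{Rand}(p)$ branch in the \fepcpa{} reduction.
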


\section{Existing Fully Encrypted Protocols}
\label{sec:analysis}
\newcommand{\greencheck}{{\color{green} \checkmark}}
\newcommand{\redx}{{\color{red} X}}
\newcommand{\yellowcirc}{{\color{orange} $\sim$}}
\newcommand{\keepalive}{\dagger}
\newcommand{\serverspecific}{\textsc{s}}
\begin{table*}
	\resizebox{\textwidth}{!}{%
\begin{tabular}{|l|c|c|c|c|l|}\hline
\textbf{Datastream Protocol} & \textbf{Close Behavior} & \textbf{FEP-CPFA} & \textbf{FEP-CCFA} & \textbf{Length Obfuscation} & \textbf{Min Size}\\\hline\hline
Shadowsocks-libev (request) & Never &\greencheck & $\greencheck^*$ & None & $35$\\\hline
Shadowsocks-libev (response) &  Auth Fail &\greencheck & \redx & None & $35$\\\hline
V2Ray-Shadowsocks (request) & $\text{Drain}^*$ &\greencheck & \redx & None & $35$\\\hline
V2Ray-Shadowsocks (response) & Auth Fail &\greencheck & \redx & None & $35$\\\hline
V2Ray-VMess & $\text{Drain}^*$ & \greencheck & \redx & Padding & $18^{\keepalive}$ \\\hline
Obfs4 & Auth Fail & \greencheck &\redx & Padding & $44^*$\\\hline
OpenVPN-XOR & Auth Fail & \redx &\redx &None & $42^{\keepalive}$\\\hline
Obfuscated OpenSSH & Auth Fail & \redx & \redx &None& 16\\\hline
Obfuscated OpenSSH-PSK & Auth Fail & \greencheck & \redx & None & 16\\\hline
kcptun & Never & \greencheck & \redx & None & $52^{\keepalive}$\\\hline
Construction (Sec \ref{sec:stream-construction}) & Never & \greencheck & \greencheck & \shaping & 1\\\hline\hline
\textbf{Datagram Protocol} &  & \textbf{FEP-CPA} & \textbf{FEP-CCA} & \textbf{Length Obfuscation} & \textbf{Min Size}\\\hline\hline
Shadowsocks-libev & &  \greencheck & \greencheck & None & $55^*$\\\hline
Wireguard-SWGP (paranoid) & & \greencheck & \greencheck & Padding & $75^{\keepalive *}$\\\hline
OpenVPN-XOR & & \redx & \redx & None & $40^{\keepalive}$\\\hline
Construction (Sec \ref{sec:datagram-construction}) & & \greencheck & \greencheck & \shaping & 0\\\hline
\end{tabular}%
}
\caption{Our definitions applied to FEPs. *See discussion for significant nuances. $\dagger$ The smallest message is a keepalive. \greencheck\xspace We have strong evidence the protocol satisfies the definition from documentation, code, and our experiments (properties are proven for our constructions). \redx\xspace We have demonstrated by experiment that the definition is not satisfied, or that fact is clear from the protocol design.}
\label{tab:feps}

\end{table*}



Since our work is motivated the lack of security definitions for existing Fully Encrypted Protocols, we identified a set of protocols that make an attempt to appear fully encrypted above the transport layer for analysis. Our goal in analyzing these protocols is to determine if these protocols satisfy our definitions, and to what extent they have identifying features that our security definitions are designed to address. We excluded closed-source protocols but note that many existing implementations of FEPs that do not appear in our list nonetheless adapt the approach of one of the protocols we analyze~\cite{outlinevpn,psiphon,tunnelblick}. We did not perform a thorough audit of the security engineering of the protocols' implementations, which we consider outside the scope of our paper.

\subsection{Methodology}

In each case we analyzed the protocol source code and documentation (if available) in order to identify the protocol behavior. To determine whether the protocols satisfy passive FEP security, we examined if the protocol outputs were all either random bytes or pseudorandom ciphertext. For active FEP security, we identified whether these fragments or messages were also authenticated, and for datastream protocols, whether the close behavior of the protocol satisfied our definition of a secure close function.

Protocol source code and documentation also informed our analysis of techniques for length obfuscation and of the minimum-length messages of the protocols. We examined these protocol aspects to understand how well \shaping{} was satisfied. Existing FEPs typically accomplish length obfuscation with padding, where extra bytes are added to messages. Padding can only increase the length of the output for a given input, and thus we expect to find that existing protocols possess minimum message lengths that can serve an undesirable identifying feature of the protocol. We determined whether padding was included in the protocols and how it was added by the code, and we confirmed the presence of padding in the experiments described below. In addition, we identified protocol message formats and layouts from these sources, which we used to form hypotheses on the minimum message lengths for each protocol, which we verified via experimentation. We note that the minimum lengths we determine technically apply to the amount of data written to the socket buffer by a \tf{Send} call, and the network stack may further fragment the message. However, such fragmentation is not common, and the amount written to the buffer is typically the amount sent in a network packet (\eg, we observed no fragmentation in our experiments).

We observed that channel-close behavior was rarely documented and in practice varied significantly between implementations, and so we relied almost entirely on our experiments to determine when protocols terminated a connection. We note that many protocols also include time-based close behavior, which we did not analyze. 

Open-source FEPs are designed to be flexible, and so all protocols in our list can be run under many configurations. Our primary goal was to identify the intended behavior of the protocol designers under recommended or default settings. Therefore, in our experiments we selected defaults, used recommended settings, and followed official examples to configure each protocol. 

Many assessed protocols are designed around a request-response architecture, and thus their message format, cryptographic structure, and general behavior may differ depending on whether the messages are outgoing ($\text{Client} \to \text{Server}$) or incoming ($\text{Server} \to \text{Client}$). Our protocol framework is unidirectional, and so we model each of these directions as a distinct protocol, combining them only when our results for both directions are identical. 

\subsection{Results}
Our results appear in Table \ref{tab:feps}. For padding approaches, we include only padding for the purpose of countering traffic analysis (\eg, not padding required to align a plaintext message to a block boundary before encryption), and padding that could apply to all messages (\eg, not padding only in the initial handshake phase).  We highlight the following results:
\begin{enumerate}
  \item Nearly all existing FEPs satisfy passive FEP security, but no datastream FEP satisfies active FEP security (Shadowsocks does in one direction). Also, the datastream FEPs are identifiable to an active attacker based on their close behavior. 
  \item No existing protocol, whether datastream or datagram, satisfies \shaping, and all protocols are identifiable by their minimum message length.
\end{enumerate}

We summarize our results in general and highlight specific results of interest below, providing the detailed descriptions of our methodology, analysis, and experiments in
\iffull
  Appendix \ref{appendix:experiments}.
\else
  the full version~\cite{fepfull}.
\fi

We first describe the eight FEP implementations studied: (1) Shadowsocks~\cite{shadowsocks} is a fully encrypted SOCKS5 proxy with many implementations and the ability to proxy both datastream and datagram traffic. We examined the Shadowsocks-libev~\cite{shadowsockslibev} implementation in both TCP and UDP configurations. (2) We also studied a Shadowsocks implementation in the censorship-circumvention software suite V2Ray~\cite{v2ray}. (3) VMess~\cite{vmess} is a custom FEP also implemented inside V2Ray for censorship circumvention. (4) Obfs4~\cite{obfs4} is a FEP designed for Tor~\cite{dingledine2004tor}. (5) OpenVPN~\cite{openvpn} is open-source VPN software that functions in datastream or datagram mode, and there is a well-known patch available designed to obfuscate its traffic called the XOR patch~\cite{xor}. (6) Obfuscated SSH is modification of OpenSSH that fully encrypts its handshake messages, and it can be configured with a Pre-Shared Key. (7) kcptun is a FEP designed to be reliable and fast over very noisy networks. (8) Wireguard-SWGP~\cite{swgp} is a proxy for Wireguard~\cite{donenfeld2017wireguard} that fully encrypts its traffic. 

Datastream protocols largely terminate the connection immediately when an authentication tag failed to validate (``Auth Fail'' in Table \ref{tab:feps}). This behavior cannot be realized by any secure close function when the protocol ciphertext lengths are hidden (\eg, with a variable-length ciphertext and an encrypted length field, as is the case for Shadowsocks, VMess, Obfs4, and Obfuscated OpenSSH) because this means the close function, without access to encryption keys, must be able to identify a ciphertext boundary within the concatenation of all ciphertexts fragments from the sender. 

V2Ray protocols additionally include a ``Drain'' initial behavior (the protocol changes this behavior later in the connection) which upon decryption error delays terminating the connection until a pre-determined number of total bytes have been received, randomized based on a user and per connection. The drain approach does not satisfy our notion of a secure close function, as it does not apply if the pre-determined amount has been exceeded before the error. Thus, the V2Ray protocols cannot satisfy FEP-CCFA.

Most datastream protocols satisfy FEP-CPFA. Only Obfuscated OpenSSH, which transmits symmetric key material in the clear, and the OpenVPN XOR patch, which doesn't satisfy confidentiality~\cite{xue2022openvpn}, do not. On the other hand, only Shadowsocks-libev in one direction has a close behavior compatible with FEP-CCFA: the server holds the connection open indefinitely upon error as an intentional design feature. It does report errors to the application layer, though, technically violating the FEP-CCFA definition. However, this behavior could plausibly satisfy FEP-CCFA with a reasonable and minor modification to the notion (allowing errors) or to the transport protocol (suppressing errors).

Most datagram protocols satisfy FEP-CCA, since there are no closures in this setting. Only OpenVPN-XOR does not, and it also violates FEP-CPA in this setting as well.

Few protocols include padding of any sort (only Obfs4, VMess, and Wireguard-SWGP). The Shadowsocks datastream protocol had the same minimum message length in both implementations. However, other protocols varied significantly, with several (kcptun, OpenVPN-XOR, VMess, and Wireguard-SWGP) sending minimum-length messages as keepalives, which are thus transmitted whenever the application is quiescent. Such keepalives represent a novel identifying feature for FEPs, and they highlight the value of \shaping{}. Datastream FEPs typically had higher minimum message lengths because they all included a per-message nonce.

Three determinations of the minimum length require some care: Obfs4 servers upon initialization select a random set of random possible output lengths, which persists. Thus, each Obfs4 server \emph{has its own minimum message length}, with 44 bytes as the absolute minimum across all such setups. Shadowsocks-libev in the datagram setting can send messages of length 52 if directed to produce messages with invalid addresses\iffull(see Appendix \ref{appendix:experiments})\fi.
Wireguard-SWGP has a minimum length that depends on the configured MTU of the protocol, which the documentation suggests be set carefully, but the minimum message length is 75 with the default MTU of 1500.  

We also experimentally observed that the protocols in the V2Ray framework fail to satisfy datastream integrity (and thus cannot satisfy FEP-CCFA), dropping isolated messages silently when certain ciphertext bytes were modified in transit. We reported this as a security vulnerability to the project maintainers on January 4, 2024 and the issue was resolved in version 5.14.1.\footnote{The bug was caused by a failure to propogate errors properly when decryption fails during the data transport phase.} 

\section{Discussion}
\paragraph{Modeling limitations}
Our modeling choices limit our results in some ways. We follow existing models for stateful
communication channels, which, for simplicity, provide unidirectional rather than bidirectional
communication and omit time. Bidirectional communication can be achieved within our model by
composing two unidirectional channels, one in each direction. However, such channels are not allowed
to share state, including keys and other initialization parameters, and not observing this
restriction can violate the security guarantees. Such a limitation reduces potential efficiency over
the truly bidirectional setting. Similarly, omitting time from our model precludes some convenient
protocol features, such as timeouts leading to channel closures and time-based replay protections.
This limits both the design of new protocols and the analysis of existing ones. However, security
models have been designed that include time~\cite{backes2014tuc,canetti2017universally}, and our
models and constructions can be extended to work within such a setting.

\paragraph{FEP indistinguishability}
The ability of an adversary to distinguish FEPs from one another (as opposed to distinguishing them from non-FEPs) is an important security concern. Identifying the use of a specific FEP, implementation, or software version enables an adversary to deploy and tailor exploits. It can also enable user profiling, which may be particularly sensitive in the context of censorship circumvention, where software distribution may be performed via physical or social networks, and so the use of a specific implementation or software version can imply membership in such a network. 

Unfortunately, as we have shown in Section \ref{sec:analysis}, existing FEPs and their implementations can be effectively distinguished in practice because of a failure to satisfy one or more of the security definitions we present in this work. We consider our definitions a major step towards articulating and achieving this goal for FEPs, but using them to guarantee FEP indistinguishability requires some further determinations to be made that we consider outside the scope of our paper. In particular, a secure close function must be selected and a \shaping schedule or distribution must be determined. Further, some variability in these selections may be necessary among FEPs for practical functionality and efficiency reasons. 

\paragraph{Secure close functions}
Secure close functions can present practical concerns since they constrain protocol behavior. For example, they may require channels to leave connections open indefinitely on error (\eg, $\mathcal{C} \equiv 0$ as in Shadowsocks and our construction), deploy a less-efficient message format (\eg, the fixed-length ciphertexts of InterMAC~\cite{boldyreva2012security}), or terminate connections even if errors are not introduced into the ciphertext stream (\eg, if connections are closed after a fixed number of bytes or after a fixed byte sequence appears in the random ciphertext stream). However, some developers have already chosen in practice to make these tradeoffs in favor of security, both in Shadowsocks~\cite{shadowsockslibev}, which is the most widely deployed FEP in use, and libInterMAC~\cite{albrecht2019libintermac}, which has been thoroughly evaluated for efficiency. Additionally, the secure use of timeouts (which are outside our current model) to close connections may alleviate these concerns in some cases.

\paragraph{\shaping{}}
The \shaping property presents a tradeoff between security against traffic analysis and efficiency, to be set as appropriate for a given application. A thorough analysis of how best to use \shaping in particular use cases is important but requires the use of very different tools than we apply in this work. One simple option would be to send fixed-size messages at a constant rate, requiring predictable bandwidth and providing a maximum message latency. Alternately, one could sample message times and sizes from a known distribution for a target application (the transported application to maximize efficiency, or an uncensored application in the context of censorship circumvention). The main goal of our \shaping definition is to formalize a maximally flexible functionality \emph{at the transport layer}, providing a powerful interface that can be used to make traffic from any two distinct applications appear as similar as efficiency goals allow. 

\paragraph{Recommendations for FEP designers}
We intend our results to inform the design of existing and future FEPs. Many of the existing FEPs
could satisfy our novel security definitions with straightforward changes to the message
formats and protocol implementation. We make the following specific recommendations regarding the
two novel protocol features we introduce, secure close functions and \shaping{}:
\begin{enumerate}[leftmargin=*]
\item We suggest ideally adopting the close function of InterMAC, where the receiver closes after
error only at a multiple of $n$ bytes, for constant $n$, by adding a MAC at those positions. These
MACs would be in addition to any MACs needed to support variable-length records and could occur at
relatively low frequency, where such frequency should be chosen to be acceptable across applications
and protocols to maximize FEP indistinguishability (\eg{} every 10K bytes). We further suggest
closing the connection after a certain amount of time has passed with no ciphertext received, which
is technically outside our model. The combination of these close behaviors would permit connections
to eventually close after an error is observed without adding too much bandwidth overhead. An
alternative suggestion is simply to adopt a connection timeout after not receiving any ciphertext.
This behavior does permit an active adversary to keep a connection open by continually sending
traffic, but it is a more minor change to implement.
\item To implement \shaping, datastream FEPs should be extended to buffer content that does not fit
within the prescribed output length $p$ and to produce padding when $p$ exceeds the amount necessary
for the message. Datagram FEPs should be extended to include the use of null messages and to refuse
to send messages that cannot fit within the prescribed length. FEP developers should thus implement
\shaping, although existing applications need not necessarily use it and instead await more work on
appropriate traffic patterns.
\end{enumerate}

\section{Related Work}
We build on the model of Fischlin~\etal~\cite{DIAS} to study data streams in the presence of fragmentation, a problem also been studied elsewhere~\cite{boldyreva2012security,albrecht2009plaintext,albrecht2019libintermac,albrecht2016surfeit}. Fragmentation is a problem for FEPs, as plaintext length fields cannot be used. The notion of Boundary Hiding~\cite{boldyreva2012security} is related to but does not imply our FEP notions because, for one, it does not enforce random-appearing outputs.

Some aspects of Fully Encrypted Protocols have been studied, such as active probing~\cite{frolov2020detecting,beznazwy2020china}, and attacks on confidentiality~\cite{ji2022security,shadowsocksredirect}. Much research has focused on detecting FEPs~\cite{wang2015seeing,wu2023great,frolov2019use,xue2022openvpn} using a variety of approaches. Fifield~\cite{fifield2023flaws} outlined a series of implementation weaknesses in existing FEPs. Bernstein~\etal~\cite{bernstein2013elligator} developed a method to encode elliptic curve points as uniformly random strings. 

On channel closures, Boyd and Hale~\cite{boyd2017secure} consider channels with in-band intentional termination signals that generate channel closures, and Marson and Poettering~\cite{marson2017security} provide security definitions for fully bidirectional channels. Hansen~\cite{hansen2020cryptographic} and Albrecht~\etal~\cite{albrecht2019libintermac} discuss similar issues to the channel closures we discuss in our work around the difficulty of realizing active boundary hiding, where applications themselves leak information to an adversary through their behavior. 

Fischlin~\etal~\cite{robust} analyze dTLS~\cite{rescorla2022rfc} and QUIC~\cite{langley2017quic} in the datagram setting, considering especially robustness to message drops. Our datagram model can also apply to other encrypted transport protocols like IPSec~\cite{doraswamy2003ipsec}. Stateful
encryption models similar to our datagram model are given by Bellare et al.~\cite{bellare2002authenticated} and Kohno et al.~\cite{cryptoeprint:2003/177}.

Some FEPs designs have come from elsewhere than the open-source community. Obfs4~\cite{obfs4} is based on Scramblesuit~\cite{winter2013scramblesuit} and has been forked under the new name Lyrebird~\cite{lyrebird}. InterMAC, which is a near-FEP, has been both analyzed~\cite{boldyreva2012security} and implemented~\cite{albrecht2019libintermac}. The IETF proposal for a Pseudorandom Extension of cTLS~\cite{ctls-pseudorandom} suggest a FEP encoding of TLS to improve protocol security and privacy rather than for censorship circumvention.

\section{Conclusion and Future Work}
A natural enhancement to our constructions would be a fully encrypted key exchange for forward
secrecy. Obfs4~\cite{obfs4} uses Elligator~\cite{bernstein2013elligator} for this purpose, but it
remains to prove its security as well as investigate other techniques. Relatedly, one could
formulate a notion of forward metadata secrecy, where the FEP properties may still be preserved even
if a long-term key is compromised. In addition, FEP notions should be explored in models where
securely creating shared state is not trivial but bidirectional communication is possible.

There are also other practical concerns to consider. Active probing attacks are a common problem for real-world FEPs, which we do not address in this work. We also do not optimize efficiency in our FEP protocol constructions. Finally, while we introduce the \shaping capability, we do not answer the distinct and important question of how traffic \emph{should be shaped}. 

\section*{Acknowledgments}

This work was supported by the Office of Naval Research
(ONR) and the Defense Advanced Research Projects Agency
(DARPA). We would also like to thank Ben Robinson, who 
helped perform the Wireguard experiments, and David Fifield and the
anonymous referees who reviewed this paper for their 
helpful suggestions and advice.

\bibliographystyle{style/ACM/ACM-Reference-Format}
\balance
\bibliography{references}


\begin{thebibliography}{63}


\ifx \showCODEN    \undefined \def \showCODEN     #1{\unskip}     \fi
\ifx \showDOI      \undefined \def \showDOI       #1{#1}\fi
\ifx \showISBNx    \undefined \def \showISBNx     #1{\unskip}     \fi
\ifx \showISBNxiii \undefined \def \showISBNxiii  #1{\unskip}     \fi
\ifx \showISSN     \undefined \def \showISSN      #1{\unskip}     \fi
\ifx \showLCCN     \undefined \def \showLCCN      #1{\unskip}     \fi
\ifx \shownote     \undefined \def \shownote      #1{#1}          \fi
\ifx \showarticletitle \undefined \def \showarticletitle #1{#1}   \fi
\ifx \showURL      \undefined \def \showURL       {\relax}        \fi
\providecommand\bibfield[2]{#2}
\providecommand\bibinfo[2]{#2}
\providecommand\natexlab[1]{#1}
\providecommand\showeprint[2][]{arXiv:#2}

\bibitem[Albrecht et~al\mbox{.}(2016)]%
        {albrecht2016surfeit}
\bibfield{author}{\bibinfo{person}{Martin~R Albrecht},
  \bibinfo{person}{Jean~Paul Degabriele}, \bibinfo{person}{Torben~Brandt
  Hansen}, {and} \bibinfo{person}{Kenneth~G Paterson}.}
  \bibinfo{year}{2016}\natexlab{}.
\newblock \showarticletitle{A surfeit of SSH cipher suites}. In
  \bibinfo{booktitle}{\emph{ACM Conference on Computer and Communications
  Security (CCS)}}.
\newblock


\bibitem[Albrecht et~al\mbox{.}(2019)]%
        {albrecht2019libintermac}
\bibfield{author}{\bibinfo{person}{Martin~R Albrecht},
  \bibinfo{person}{Torben~Brandt Hansen}, {and} \bibinfo{person}{Kenneth~G
  Paterson}.} \bibinfo{year}{2019}\natexlab{}.
\newblock \showarticletitle{libInterMAC: Beyond Confidentiality and Integrity
  in Practice}.
\newblock \bibinfo{journal}{\emph{IACR Transactions on Symmetric Cryptology}}
  (\bibinfo{year}{2019}).
\newblock


\bibitem[Albrecht et~al\mbox{.}(2009)]%
        {albrecht2009plaintext}
\bibfield{author}{\bibinfo{person}{Martin~R Albrecht},
  \bibinfo{person}{Kenneth~G Paterson}, {and} \bibinfo{person}{Gaven~J
  Watson}.} \bibinfo{year}{2009}\natexlab{}.
\newblock \showarticletitle{Plaintext recovery attacks against SSH}. In
  \bibinfo{booktitle}{\emph{IEEE Symposium on Security and Privacy (S\&P)}}.
\newblock


\bibitem[Alice et~al\mbox{.}(2020)]%
        {beznazwy2020china}
\bibfield{author}{\bibinfo{person}{Alice}, \bibinfo{person}{Bob},
  \bibinfo{person}{Carol}, \bibinfo{person}{Jan Beznazwy}, {and}
  \bibinfo{person}{Amir Houmansadr}.} \bibinfo{year}{2020}\natexlab{}.
\newblock \showarticletitle{{How China detects and blocks Shadowsocks}}. In
  \bibinfo{booktitle}{\emph{ACM Internet Measurement Conference (IMC)}}.
\newblock


\bibitem[Backes et~al\mbox{.}(2014)]%
        {backes2014tuc}
\bibfield{author}{\bibinfo{person}{Michael Backes}, \bibinfo{person}{Praveen
  Manoharan}, {and} \bibinfo{person}{Esfandiar Mohammadi}.}
  \bibinfo{year}{2014}\natexlab{}.
\newblock \showarticletitle{TUC: time-sensitive and modular analysis of
  anonymous communication}. In \bibinfo{booktitle}{\emph{IEEE Computer Security
  Foundations Symposium (CSF)}}.
\newblock


\bibitem[Barradas and Santos(2017)]%
        {barradas2017deltashaper}
\bibfield{author}{\bibinfo{person}{Diogo Barradas} {and} \bibinfo{person}{Nuno
  Santos}.} \bibinfo{year}{2017}\natexlab{}.
\newblock \showarticletitle{DeltaShaper: Enabling Unobservable
  Censorship-resistant TCP Tunneling over Videoconferencing Streams.}
\newblock \bibinfo{journal}{\emph{Proceedings on Privacy Enhancing Technologies
  (PoPETs)}} \bibinfo{volume}{2017}, \bibinfo{number}{4}
  (\bibinfo{year}{2017}).
\newblock


\bibitem[Bellare et~al\mbox{.}(2002)]%
        {bellare2002authenticated}
\bibfield{author}{\bibinfo{person}{Mihir Bellare}, \bibinfo{person}{Tadayoshi
  Kohno}, {and} \bibinfo{person}{Chanathip Namprempre}.}
  \bibinfo{year}{2002}\natexlab{}.
\newblock \showarticletitle{Authenticated encryption in SSH: provably fixing
  the SSH binary packet protocol}. In \bibinfo{booktitle}{\emph{ACM Conference
  on Computer and Communications Security (CCS)}}.
\newblock


\bibitem[Bellare and Namprempre(2008)]%
        {bellare2008authenticated}
\bibfield{author}{\bibinfo{person}{Mihir Bellare} {and}
  \bibinfo{person}{Chanathip Namprempre}.} \bibinfo{year}{2008}\natexlab{}.
\newblock \showarticletitle{Authenticated encryption: Relations among notions
  and analysis of the generic composition paradigm}.
\newblock \bibinfo{journal}{\emph{Journal of cryptology}} \bibinfo{volume}{21},
  \bibinfo{number}{4} (\bibinfo{year}{2008}).
\newblock


\bibitem[Benjamin(2020)]%
        {benjamin2020rfc}
\bibfield{author}{\bibinfo{person}{David Benjamin}.}
  \bibinfo{year}{2020}\natexlab{}.
\newblock \bibinfo{booktitle}{\emph{RFC 8701: Applying Generate Random
  Extensions And Sustain Extensibility (GREASE) to TLS Extensibility}}.
\newblock \bibinfo{type}{{T}echnical {R}eport}. \bibinfo{institution}{Internet
  Engineering Task Force}.
\newblock


\bibitem[Bernstein et~al\mbox{.}(2013)]%
        {bernstein2013elligator}
\bibfield{author}{\bibinfo{person}{Daniel~J Bernstein}, \bibinfo{person}{Mike
  Hamburg}, \bibinfo{person}{Anna Krasnova}, {and} \bibinfo{person}{Tanja
  Lange}.} \bibinfo{year}{2013}\natexlab{}.
\newblock \showarticletitle{Elligator: elliptic-curve points indistinguishable
  from uniform random strings}. In \bibinfo{booktitle}{\emph{ACM conference on
  Computer and Communications Security (CCS)}}.
\newblock


\bibitem[Bocovich and Goldberg(2016)]%
        {bocovich2016slitheen}
\bibfield{author}{\bibinfo{person}{Cecylia Bocovich} {and} \bibinfo{person}{Ian
  Goldberg}.} \bibinfo{year}{2016}\natexlab{}.
\newblock \showarticletitle{Slitheen: Perfectly imitated decoy routing through
  traffic replacement}. In \bibinfo{booktitle}{\emph{ACM Conference on Computer
  and Communications Security (CCS)}}.
\newblock


\bibitem[Boldyreva et~al\mbox{.}(2012)]%
        {boldyreva2012security}
\bibfield{author}{\bibinfo{person}{Alexandra Boldyreva},
  \bibinfo{person}{Jean~Paul Degabriele}, \bibinfo{person}{Kenneth~G Paterson},
  {and} \bibinfo{person}{Martijn Stam}.} \bibinfo{year}{2012}\natexlab{}.
\newblock \showarticletitle{Security of symmetric encryption in the presence of
  ciphertext fragmentation}. In \bibinfo{booktitle}{\emph{Annual International
  Conference on the Theory and Applications of Cryptographic Techniques}}.
\newblock


\bibitem[Boyd and Hale(2017)]%
        {boyd2017secure}
\bibfield{author}{\bibinfo{person}{Colin Boyd} {and} \bibinfo{person}{Britta
  Hale}.} \bibinfo{year}{2017}\natexlab{}.
\newblock \showarticletitle{Secure channels and termination: The last word on
  TLS}. In \bibinfo{booktitle}{\emph{International Conference on Cryptology and
  Information Security in Latin America}}.
\newblock


\bibitem[Canetti et~al\mbox{.}(2017)]%
        {canetti2017universally}
\bibfield{author}{\bibinfo{person}{Ran Canetti}, \bibinfo{person}{Kyle Hogan},
  \bibinfo{person}{Aanchal Malhotra}, {and} \bibinfo{person}{Mayank Varia}.}
  \bibinfo{year}{2017}\natexlab{}.
\newblock \showarticletitle{A universally composable treatment of network
  time}. In \bibinfo{booktitle}{\emph{IEEE Computer Security Foundations
  Symposium (CSF)}}.
\newblock


\bibitem[Dingledine and Mathewson(2023)]%
        {tor-protocol-spec}
\bibfield{author}{\bibinfo{person}{Roger Dingledine} {and}
  \bibinfo{person}{Nick Mathewson}.} \bibinfo{year}{2023}\natexlab{}.
\newblock \bibinfo{booktitle}{\emph{Tor Protocol Specification}}.
\newblock \bibinfo{type}{{T}echnical {R}eport}.
\newblock
\newblock
\shownote{\url{https://raw.githubusercontent.com/torproject/torspec/main/tor-spec.txt}}.


\bibitem[Dingledine et~al\mbox{.}(2004)]%
        {dingledine2004tor}
\bibfield{author}{\bibinfo{person}{Roger Dingledine}, \bibinfo{person}{Nick
  Mathewson}, {and} \bibinfo{person}{Paul~F Syverson}.}
  \bibinfo{year}{2004}\natexlab{}.
\newblock \showarticletitle{Tor: The second-generation onion router}. In
  \bibinfo{booktitle}{\emph{USENIX Security Symposium}}.
\newblock


\bibitem[Dixon et~al\mbox{.}(2016)]%
        {dixon2016network}
\bibfield{author}{\bibinfo{person}{Lucas Dixon}, \bibinfo{person}{Thomas
  Ristenpart}, {and} \bibinfo{person}{Thomas Shrimpton}.}
  \bibinfo{year}{2016}\natexlab{}.
\newblock \showarticletitle{Network traffic obfuscation and automated internet
  censorship}.
\newblock \bibinfo{journal}{\emph{IEEE Symposium on Security and Privacy
  (S\&P)}} (\bibinfo{year}{2016}).
\newblock


\bibitem[Docker Compose.(2024)]%
        {docker-compose}
Docker Compose. \bibinfo{year}{2024}\natexlab{}.
\newblock \bibinfo{howpublished}{\url{https://docs.docker.com/compose/}}.
\newblock


\bibitem[Donenfeld(2017)]%
        {donenfeld2017wireguard}
\bibfield{author}{\bibinfo{person}{Jason~A Donenfeld}.}
  \bibinfo{year}{2017}\natexlab{}.
\newblock \showarticletitle{Wireguard: next generation kernel network tunnel}.
  In \bibinfo{booktitle}{\emph{Network and Distributed System Security
  Symposium (NDSS)}}.
\newblock


\bibitem[Doraswamy and Harkins(2003)]%
        {doraswamy2003ipsec}
\bibfield{author}{\bibinfo{person}{Naganand Doraswamy} {and}
  \bibinfo{person}{Dan Harkins}.} \bibinfo{year}{2003}\natexlab{}.
\newblock \bibinfo{booktitle}{\emph{IPSec: the new security standard for the
  Internet, intranets, and virtual private networks}}.
\newblock \bibinfo{publisher}{Prentice Hall Professional}.
\newblock


\bibitem[Fenske and Johnson(2023)]%
        {focipaper}
\bibfield{author}{\bibinfo{person}{Ellis Fenske} {and} \bibinfo{person}{Aaron
  Johnson}.} \bibinfo{year}{2023}\natexlab{}.
\newblock \showarticletitle{{Security notions for Fully Encrypted Protocols}}.
\newblock \bibinfo{journal}{\emph{Free and Open Communications on the Internet
  (FOCI)}} (\bibinfo{year}{2023}).
\newblock


\bibitem[Fifield(2017)]%
        {fifieldTordev}
\bibfield{author}{\bibinfo{person}{David Fifield}.}
  \bibinfo{year}{2017}\natexlab{}.
\newblock \bibinfo{title}{The limits of timing obfuscation in obfs4}.
\newblock
\newblock
\newblock
\shownote{https://lists.torproject.org/pipermail/tor-dev/2017-June/012310.html.
  Accessed October 7, 2022}.


\bibitem[Fifield(2023)]%
        {fifield2023flaws}
\bibfield{author}{\bibinfo{person}{David Fifield}.}
  \bibinfo{year}{2023}\natexlab{}.
\newblock \bibinfo{title}{Comments on certain past cryptographic flaws
  affecting fully encrypted censorship circumvention protocols}.
\newblock \bibinfo{howpublished}{Cryptology ePrint Archive}.
\newblock
\urldef\tempurl%
\url{https://eprint.iacr.org/2023/1362}
\showURL{%
\tempurl}


\bibitem[Fischlin et~al\mbox{.}(2015)]%
        {DIAS}
\bibfield{author}{\bibinfo{person}{Marc Fischlin}, \bibinfo{person}{Felix
  G{\"u}nther}, \bibinfo{person}{Giorgia~Azzurra Marson}, {and}
  \bibinfo{person}{Kenneth~G Paterson}.} \bibinfo{year}{2015}\natexlab{}.
\newblock \showarticletitle{Data is a stream: Security of stream-based
  channels}. In \bibinfo{booktitle}{\emph{Annual Cryptology Conference
  (Crypto)}}.
\newblock


\bibitem[Fischlin et~al\mbox{.}(2020)]%
        {robust}
\bibfield{author}{\bibinfo{person}{Marc Fischlin}, \bibinfo{person}{Felix
  Günther}, {and} \bibinfo{person}{Christian Janson}.}
  \bibinfo{year}{2020}\natexlab{}.
\newblock \bibinfo{title}{Robust Channels: Handling Unreliable Networks in the
  Record Layers of QUIC and DTLS 1.3}.
\newblock \bibinfo{howpublished}{Cryptology ePrint Archive}.
\newblock
\urldef\tempurl%
\url{https://eprint.iacr.org/2020/718}
\showURL{%
\tempurl}


\bibitem[Frolov et~al\mbox{.}(2020)]%
        {frolov2020detecting}
\bibfield{author}{\bibinfo{person}{Sergey Frolov}, \bibinfo{person}{Jack
  Wampler}, {and} \bibinfo{person}{Eric Wustrow}.}
  \bibinfo{year}{2020}\natexlab{}.
\newblock \showarticletitle{Detecting Probe-resistant Proxies}. In
  \bibinfo{booktitle}{\emph{Network and Distributed System Security Symposium
  (NDSS)}}.
\newblock


\bibitem[Frolov and Wustrow(2019)]%
        {frolov2019use}
\bibfield{author}{\bibinfo{person}{Sergey Frolov} {and} \bibinfo{person}{Eric
  Wustrow}.} \bibinfo{year}{2019}\natexlab{}.
\newblock \showarticletitle{The use of TLS in Censorship Circumvention}. In
  \bibinfo{booktitle}{\emph{Network and Distributed System Security Symposium
  (NDSS)}}.
\newblock


\bibitem[GitHub(2010)]%
        {obfsSSH}
\bibfield{author}{\bibinfo{person}{GitHub}.} \bibinfo{year}{2010}\natexlab{}.
\newblock \bibinfo{title}{Obfuscated OpenSSH}.
\newblock
  \bibinfo{howpublished}{\url{https://github.com/brl/obfuscated-openssh}.
  Accessed September 29, 2022.}.
\newblock
\newblock
\shownote{Accessed September 29, 2022}.


\bibitem[Hansen(2020)]%
        {hansen2020cryptographic}
\bibfield{author}{\bibinfo{person}{Torben~Brandt Hansen}.}
  \bibinfo{year}{2020}\natexlab{}.
\newblock \emph{\bibinfo{title}{Cryptographic Security of SSH Encryption
  Schemes}}.
\newblock \bibinfo{thesistype}{Ph.\,D. Dissertation}. \bibinfo{school}{Royal
  Holloway, University of London}.
\newblock


\bibitem[Ji et~al\mbox{.}(2022)]%
        {ji2022security}
\bibfield{author}{\bibinfo{person}{Qingbing Ji}, \bibinfo{person}{Zhihong Rao},
  \bibinfo{person}{Man Chen}, {and} \bibinfo{person}{Jie Luo}.}
  \bibinfo{year}{2022}\natexlab{}.
\newblock \showarticletitle{{Security analysis of shadowsocks (R) protocol}}.
\newblock \bibinfo{journal}{\emph{Security and Communication Networks}}
  (\bibinfo{year}{2022}).
\newblock


\bibitem[KCP-Go project repository.(2024)]%
        {kcp-go}
KCP-Go project repository. \bibinfo{year}{2024}\natexlab{}.
\newblock
\newblock
\newblock
\shownote{\url{https://github.com/xtaci/kcp-go}}.


\bibitem[kcptun(2024)]%
        {kcptun}
kcptun \bibinfo{year}{2024}\natexlab{}.
\newblock \bibinfo{title}{kcptun Protocol}.
\newblock
\newblock
\newblock
\shownote{\url{https://github.com/xtaci/kcptun}}.


\bibitem[Kohno et~al\mbox{.}(2003)]%
        {cryptoeprint:2003/177}
\bibfield{author}{\bibinfo{person}{Tadayoshi Kohno}, \bibinfo{person}{Adriana
  Palacio}, {and} \bibinfo{person}{John Black}.}
  \bibinfo{year}{2003}\natexlab{}.
\newblock \bibinfo{title}{Building Secure Cryptographic Transforms, or How to
  Encrypt and MAC}.
\newblock \bibinfo{howpublished}{Cryptology ePrint Archive}.
\newblock
\urldef\tempurl%
\url{https://eprint.iacr.org/2003/177}
\showURL{%
\tempurl}


\bibitem[Langley et~al\mbox{.}(2017)]%
        {langley2017quic}
\bibfield{author}{\bibinfo{person}{Adam Langley}, \bibinfo{person}{Alistair
  Riddoch}, \bibinfo{person}{Alyssa Wilk}, \bibinfo{person}{Antonio Vicente},
  \bibinfo{person}{Charles Krasic}, \bibinfo{person}{Dan Zhang},
  \bibinfo{person}{Fan Yang}, \bibinfo{person}{Fedor Kouranov},
  \bibinfo{person}{Ian Swett}, \bibinfo{person}{Janardhan Iyengar},
  {et~al\mbox{.}}} \bibinfo{year}{2017}\natexlab{}.
\newblock \showarticletitle{{The QUIC transport protocol: Design and
  Internet-scale deployment}}. In \bibinfo{booktitle}{\emph{Conference of the
  ACM special interest group on data communication (SIGCOMM)}}.
\newblock


\bibitem[Lonvick and Ylonen(2006)]%
        {ssh-rfc}
\bibfield{author}{\bibinfo{person}{Chris~M. Lonvick} {and}
  \bibinfo{person}{Tatu Ylonen}.} \bibinfo{year}{2006}\natexlab{}.
\newblock \bibinfo{title}{{The Secure Shell (SSH) Transport Layer Protocol}}.
\newblock \bibinfo{howpublished}{RFC 4253}.
\newblock
\urldef\tempurl%
\url{https://doi.org/10.17487/RFC4253}
\showDOI{\tempurl}


\bibitem[Lyrebird.(2024)]%
        {lyrebird}
Lyrebird. \bibinfo{year}{2024}\natexlab{}.
\newblock
\newblock
\newblock
\shownote{\url{https://gitlab.torproject.org/tpo/anti-censorship/pluggable-transports/lyrebird}.
  Accessed October 14, 2022}.


\bibitem[Marson and Poettering(2017)]%
        {marson2017security}
\bibfield{author}{\bibinfo{person}{Giorgia~Azzurra Marson} {and}
  \bibinfo{person}{Bertram Poettering}.} \bibinfo{year}{2017}\natexlab{}.
\newblock \showarticletitle{Security notions for bidirectional channels}.
\newblock \bibinfo{journal}{\emph{Cryptology ePrint Archive}}
  (\bibinfo{year}{2017}).
\newblock


\bibitem[obfs4 (The obfourscator).(2024)]%
        {obfs4}
obfs4 (The obfourscator). \bibinfo{year}{2024}\natexlab{}.
\newblock
\newblock
\newblock
\shownote{\url{https://github.com/Yawning/obfs4}. Accessed October 14, 2022}.


\bibitem[OpenVPN Network Protocol Documentation.(2024)]%
        {openvpn-network}
OpenVPN Network Protocol Documentation. \bibinfo{year}{2024}\natexlab{}.
\newblock
  \bibinfo{howpublished}{\url{https://build.openvpn.net/doxygen/network_protocol.html}}.
\newblock
\newblock
\shownote{Accessed February 6, 2024}.


\bibitem[OpenVPN Project.(2024)]%
        {openvpn}
OpenVPN Project. \bibinfo{year}{2024}\natexlab{}.
\newblock \bibinfo{howpublished}{\url{https://openvpn.net/}}.
\newblock
\newblock
\shownote{Accessed April 24, 2024}.


\bibitem[OpenVPN XORPatch.(2018)]%
        {xor}
OpenVPN XORPatch. \bibinfo{year}{2018}\natexlab{}.
\newblock
\newblock
\newblock
\shownote{\url{https://github.com/clayface/openvpn_xorpatch}}.


\bibitem[Outline Client.(2024)]%
        {outlinevpn}
Outline Client. \bibinfo{year}{2024}\natexlab{}.
\newblock
\newblock
\newblock
\shownote{\url{https://github.com/Jigsaw-Code/outline-client}. Accessed October
  14, 2022}.


\bibitem[Peng(2020)]%
        {shadowsocksredirect}
\bibfield{author}{\bibinfo{person}{Zhiniang Peng}.}
  \bibinfo{year}{2020}\natexlab{}.
\newblock \bibinfo{title}{Redirect Attack on Shadowsocks Stream Ciphers}.
\newblock
\newblock
\newblock
\shownote{https://github.com/edwardz246003/shadowsocks}.


\bibitem[Perrin(2018)]%
        {perrin2018noise}
\bibfield{author}{\bibinfo{person}{Trevor Perrin}.}
  \bibinfo{year}{2018}\natexlab{}.
\newblock \bibinfo{title}{The noise protocol framework}.
\newblock
\newblock
\urldef\tempurl%
\url{https://noiseprotocol.org}
\showURL{%
\tempurl}


\bibitem[Project V.(2024)]%
        {v2ray}
Project V. \bibinfo{year}{2024}\natexlab{}.
\newblock
\newblock
\newblock
\shownote{\url{https://www.v2ray.com/en/}. Accessed October 14, 2022}.


\bibitem[Psiphon.(2024)]%
        {psiphon}
Psiphon. \bibinfo{year}{2024}\natexlab{}.
\newblock
\newblock
\newblock
\shownote{\url{https://psiphon.ca/}}.


\bibitem[ptadapter Project Page.(2020)]%
        {ptadapter}
ptadapter Project Page. \bibinfo{year}{2020}\natexlab{}.
\newblock \bibinfo{howpublished}{\url{https://pypi.org/project/ptadapter/}}.
\newblock
\newblock
\shownote{Accessed February 6, 2024}.


\bibitem[Raymond(2001)]%
        {raymond2001traffic}
\bibfield{author}{\bibinfo{person}{Jean-Fran{\c{c}}ois Raymond}.}
  \bibinfo{year}{2001}\natexlab{}.
\newblock \showarticletitle{Traffic analysis: Protocols, attacks, design
  issues, and open problems}. In \bibinfo{booktitle}{\emph{Designing Privacy
  Enhancing Technologies: International Workshop on Design Issues in Anonymity
  and Unobservability}}.
\newblock


\bibitem[Rescorla et~al\mbox{.}(2022)]%
        {rescorla2022rfc}
\bibfield{author}{\bibinfo{person}{E Rescorla}, \bibinfo{person}{H Tschofenig},
  {and} \bibinfo{person}{N Modadugu}.} \bibinfo{year}{2022}\natexlab{}.
\newblock \bibinfo{title}{RFC 9147: The Datagram Transport Layer Security
  (DTLS) Protocol Version 1.3}.
\newblock
\newblock


\bibitem[Rogaway(2002)]%
        {rogaway2002authenticated}
\bibfield{author}{\bibinfo{person}{Phillip Rogaway}.}
  \bibinfo{year}{2002}\natexlab{}.
\newblock \showarticletitle{Authenticated-encryption with associated-data}. In
  \bibinfo{booktitle}{\emph{ACM Conference on Computer and Communications
  Security (CCS)}}.
\newblock


\bibitem[Rogaway(2011)]%
        {rogaway2011evaluation}
\bibfield{author}{\bibinfo{person}{Phillip Rogaway}.}
  \bibinfo{year}{2011}\natexlab{}.
\newblock \showarticletitle{Evaluation of some blockcipher modes of operation}.
\newblock \bibinfo{journal}{\emph{Cryptography Research and Evaluation
  Committees (CRYPTREC) for the Government of Japan}} (\bibinfo{year}{2011}).
\newblock


\bibitem[Schwartz and Patton(2022)]%
        {ctls-pseudorandom}
\bibfield{author}{\bibinfo{person}{Benjamin~M. Schwartz} {and}
  \bibinfo{person}{Christopher Patton}.} \bibinfo{year}{2022}\natexlab{}.
\newblock \bibinfo{booktitle}{\emph{{The Pseudorandom Extension for cTLS}}}.
\newblock \bibinfo{type}{Internet-Draft} draft-cpbs-pseudorandom-ctls-01.
  \bibinfo{institution}{Internet Engineering Task Force}.
\newblock
\urldef\tempurl%
\url{https://datatracker.ietf.org/doc/draft-cpbs-pseudorandom-ctls/01/}
\showURL{%
\tempurl}
\newblock
\shownote{Work in Progress}.


\bibitem[Shadowsocks.(2022)]%
        {shadowsocks}
Shadowsocks. \bibinfo{year}{2022}\natexlab{}.
\newblock
\newblock
\newblock
\shownote{\url{https://shadowsocks.org/}}.


\bibitem[Simple Wireguard Proxy.(2024)]%
        {swgp}
Simple Wireguard Proxy. \bibinfo{year}{2024}\natexlab{}.
\newblock
\newblock
\newblock
\shownote{\url{https://github.com/database64128/swgp-go}}.


\bibitem[Tunnelblick.(2023)]%
        {tunnelblick}
Tunnelblick. \bibinfo{year}{2023}\natexlab{}.
\newblock
\newblock
\newblock
\shownote{https://tunnelblick.net/}.


\bibitem[VMess.(2024)]%
        {vmess}
VMess. \bibinfo{year}{2024}\natexlab{}.
\newblock
\newblock
\newblock
\shownote{\url{https://www.v2ray.com/chapter_02/protocols/vmess.html}. Accessed
  October 14, 2022}.


\bibitem[Wails et~al\mbox{.}(2024)]%
        {wails:ndss:24}
\bibfield{author}{\bibinfo{person}{Ryan Wails}, \bibinfo{person}{George~Arnold
  Sullivan}, \bibinfo{person}{Micah Sherr}, {and} \bibinfo{person}{Rob
  Jansen}.} \bibinfo{year}{2024}\natexlab{}.
\newblock \showarticletitle{On Precisely Detecting Censorship Circumvention in
  Real-World Networks}. In \bibinfo{booktitle}{\emph{Network and Distributed
  System Security Symposium (NDSS)}}.
\newblock


\bibitem[Wang et~al\mbox{.}(2015)]%
        {wang2015seeing}
\bibfield{author}{\bibinfo{person}{Liang Wang}, \bibinfo{person}{Kevin~P Dyer},
  \bibinfo{person}{Aditya Akella}, \bibinfo{person}{Thomas Ristenpart}, {and}
  \bibinfo{person}{Thomas Shrimpton}.} \bibinfo{year}{2015}\natexlab{}.
\newblock \showarticletitle{Seeing through network-protocol obfuscation}. In
  \bibinfo{booktitle}{\emph{ACM Conference on Computer and Communications
  Security (CCS)}}.
\newblock


\bibitem[Windy et~al\mbox{.}(2020)]%
        {shadowsockslibev}
\bibfield{author}{\bibinfo{person}{Clow Windy}, \bibinfo{person}{Max Lv}, {and}
  \bibinfo{person}{Linus Yang}.} \bibinfo{year}{2020}\natexlab{}.
\newblock \bibinfo{title}{Shadowsocks-libev}.
\newblock
\newblock
\newblock
\shownote{\url{https://github.com/shadowsocks/shadowsocks-libev/}}.


\bibitem[Winter et~al\mbox{.}(2013)]%
        {winter2013scramblesuit}
\bibfield{author}{\bibinfo{person}{Philipp Winter}, \bibinfo{person}{Tobias
  Pulls}, {and} \bibinfo{person}{Juergen Fuss}.}
  \bibinfo{year}{2013}\natexlab{}.
\newblock \showarticletitle{ScrambleSuit: A polymorphic network protocol to
  circumvent censorship}. In \bibinfo{booktitle}{\emph{ACM Workshop on Privacy
  in the Electronic Society (WPES)}}.
\newblock


\bibitem[Wright et~al\mbox{.}(2009)]%
        {wright2009traffic}
\bibfield{author}{\bibinfo{person}{Charles~V Wright}, \bibinfo{person}{Scott~E
  Coull}, {and} \bibinfo{person}{Fabian Monrose}.}
  \bibinfo{year}{2009}\natexlab{}.
\newblock \showarticletitle{Traffic Morphing: An Efficient Defense Against
  Statistical Traffic Analysis.}. In \bibinfo{booktitle}{\emph{Network and
  Distributed System Security Symposium (NDSS)}}.
\newblock


\bibitem[Wu et~al\mbox{.}(2023)]%
        {wu2023great}
\bibfield{author}{\bibinfo{person}{Mingshi Wu}, \bibinfo{person}{Jackson
  Sippe}, \bibinfo{person}{Danesh Sivakumar}, \bibinfo{person}{Jack Burg},
  \bibinfo{person}{Peter Anderson}, \bibinfo{person}{Xiaokang Wang},
  \bibinfo{person}{Kevin Bock}, \bibinfo{person}{Amir Houmansadr},
  \bibinfo{person}{Dave Levin}, {and} \bibinfo{person}{Eric Wustrow}.}
  \bibinfo{year}{2023}\natexlab{}.
\newblock \showarticletitle{How the Great Firewall of China detects and blocks
  fully encrypted traffic}. In \bibinfo{booktitle}{\emph{USENIX Security
  Symposium}}.
\newblock


\bibitem[Xue et~al\mbox{.}(2022)]%
        {xue2022openvpn}
\bibfield{author}{\bibinfo{person}{Diwen Xue}, \bibinfo{person}{Reethika
  Ramesh}, \bibinfo{person}{Arham Jain}, \bibinfo{person}{Michalis Kallitsis},
  \bibinfo{person}{J~Alex Halderman}, \bibinfo{person}{Jedidiah~R Crandall},
  {and} \bibinfo{person}{Roya Ensafi}.} \bibinfo{year}{2022}\natexlab{}.
\newblock \showarticletitle{{OpenVPN} is Open to {VPN} Fingerprinting}. In
  \bibinfo{booktitle}{\emph{USENIX Security Symposium}}.
\newblock


\end{thebibliography}
\iffull
        \appendix

\section{Standard Datastream Security Definitions} \label{app:dstream-sec}
We extend the datastream confidentiality definition from Fischlin et al.~\cite{DIAS} to include
channel closures:
\begin{definition} \label{def:ind-cpfa-cl}
    A channel satisfies $\textsf{IND-CPFA-CL}$ if, for any PPT adversary $\mathcal{A}$ and security parameter $\lambda$, $\left|P\left[\mathsf{Exp}^{\textsf{IND-CPFA-CL},b}_{\mathcal{A}}(1^{\lambda}) = 1 | b \overset{{\scriptscriptstyle \operatorname{R}}}{\leftarrow} \{0,1\}\right] - 1/2\right|$ is negligible in $\lambda$.
\end{definition}

In the security experiment for IND-CPFA-CL (Algorithm~\ref{alg:obfs-cpfa-cl-exp}), the adversary is
given a sending oracle (Algorithm~\ref{alg:CPFA-LOR}) and a receiving oracle
(Algorithm~\ref{alg:CPFA-RECV}). Note that the receiving oracle only returns the channnel close
flag, as only it would be observable to a passive adversary.

\begin{algorithm}[H]
    \caption{$\mathsf{Exp}^{\textsf{IND-CPFA-CL},b}_{\mathcal{A}}(1^{\lambda})$} \label{alg:obfs-cpfa-cl-exp}

    \begin{algorithmic}[1]
        \State $(\tok[S]{st}, \tok[R]{st}) \gets \tf{Init}(1^\lambda)$
        \State $C_S \gets \emptyList$
        \State $C_R \gets \emptyList$
        \State $b' \gets \mathcal{A}^{\mathcal{O}^b_{\tf{LoR-CPFA-CL}}(), \mathcal{O}^b_{\tf{Recv-CPFA-CL}}()}(1^\lambda)$

        \State \Return $b' = b$
    \end{algorithmic}
\end{algorithm}

\noindent\begin{minipage}[t]{.5\textwidth}

\begin{algorithm}[H]
    \caption{\label{alg:CPFA-LOR}$\mathcal{O}^b_{\tf{LoR-CPFA-CL}}(m_0,m_1,p,f)$}
    \begin{algorithmic}[1]
            \If{$|m_0| \neq |m_1|$}
                \State \textbf{return} $\bot$
            \EndIf
        \State $(\tok[S]{st}, c) \gets  \tf{Send}(\tok[S]{st},m_b,p,f)$
        \State $C_S \getsAppend{C_S}{c}$
        \State \Return $c$
    \end{algorithmic}
        \end{algorithm}
    \end{minipage}
\hfill
    \begin{minipage}[t]{.5\textwidth}

    \begin{algorithm}[H]

        \begin{algorithmic}[1]
            \If{$\left( \cat C_R\right) \Vert c \npreceq \cat C_S$}
                \State \textbf{return} $\bot$
            \EndIf
            
        \State $(\tok[R]{st}, m, \tok{cl}) \gets  \tf{Recv}(\tok[R]{st},c)$
        \State $C_R \getsAppend{C_R}{c}$
        \State \Return $(\emptySt, \tok{cl})$
    \end{algorithmic}
            \caption{\label{alg:CPFA-RECV}$\mathcal{O}^b_{\tf{Recv-CPFA-CL}}(c)$}

    \end{algorithm}

        \end{minipage}\\




\section{Standard Datagram Security Definitions} \label{app:dgram-sec}
We adapt standard security definitions in the atomic-message context to our stateful channel model.
These definitions provide security with respect to a probabilistic polynomial-time (PPT) adversary.
Note that they use the datagram channel functions $\tf{Init}$, $\tf{Send}$, and $\tf{Recv}$.

For confidentiality, we use the \indcpa{} and \indcca{} definitions:
\begin{definition} \label{def:ind-cxa-security}
    A channel satisfies \indcxa{}, $x\in \{\textrm{CPA, CCA}\}$ if, for a security parameter $\lambda$ and PPT adversary
    $\mathcal{A}$, $\left|P\left[\mathsf{Exp}^{\textsf{\indcxa},b}_{\mathcal{A}}(1^{\lambda}) =
    1 \big| b \overset{{\scriptscriptstyle \operatorname{R}}}{\leftarrow} \{0,1\}\right] - 1/2\right|$ is negligible in $\lambda$.
\end{definition}
In the related security experiment (Algorithm~\ref{alg:ind-cxa-exp}), the adversary is given
adaptive access to a sending oracle (typically called a left-or-right oracle,
Algorithm~\ref{alg:ind-cxa-lor}). With \indcca{} security, it also gets adaptive access to a
receiving oracle (Algorithm~\ref{alg:ind-cca-recv}).

\begin{algorithm}
    \caption{$\mathsf{Exp}^{\textsf{\indcxa}, b }_{\mathcal{A}}(1^{\lambda})$} \label{alg:ind-cxa-exp}
    \begin{algorithmic}[1]
        \State $(\tok[S]{st}, \tok[R]{st}) \gets \tf{Init}(1^\lambda)$
        \State $C \gets \emptyset$
        \State $b' \gets \left\{
            \begin{array}{ll}
                \mathcal{A}^{\mathcal{O}^b_{\tf{\indlor}}()}(1^\lambda) & \textrm{if $x = $ CPA}\\
                \mathcal{A}^{\mathcal{O}^b_{\tf{\indlor}}(), \mathcal{O}_{\tf{\indccarecv}}()}(1^\lambda) & \textrm{if $x =$ CCA}
            \end{array}\right.$
        \State \Return $b' = b$
    \end{algorithmic}
\end{algorithm}

\begin{algorithm}
    \caption{$\mathcal{O}^b_{\tf{\indlor{}}}(m_0,m_1,p)$} \label{alg:ind-cxa-lor}
    \begin{algorithmic}[1]
    		\If{$\nullsym\in \{m_0, m_1\} \land m_0 \neq m_1$}
    			\State \textbf{return} $\bot$
            \ElsIf{$|m_0| \neq |m_1|$}
                \State \textbf{return} $\bot$
    		\EndIf

        \State $(\tok[S]{st}, c) \gets  \tf{Send}(\tok[S]{st},m_b,p)$
        \If{$c \neq \bot$}
        \State $C \gets C \cup \{c\}$
        \EndIf

        \State \Return $c$
    \end{algorithmic}
\end{algorithm}

\begin{algorithm}
    \caption{$\mathcal{O}_{\tf{\indccarecv}}(c)$} \label{alg:ind-cca-recv}
    \begin{algorithmic}[1]
        \State $(\tok[R]{st}, m) \gets \tf{Recv}(\tok[R]{st}, c)$
        \If{$c \notin C \land m \neq \bot \land m \neq \nullsym$}
            \State \textbf{return} $m$
        \Else
            \State \textbf{return} $\bot$
        \EndIf
    \end{algorithmic}
\end{algorithm}

For integrity, we use the \intctxt{} definition:
\begin{definition}\label{def:int-ctxt-security}
    A channel satisfies \intctxt{} if, for a security parameter $\lambda$ and PPT adversary
    $\mathcal{A}$, $\left|P\left[\mathsf{Exp}^{\textsf{\intctxt}}_{\mathcal{A}}(1^{\lambda}) =
    1\right] \right|$ is negligible in $\lambda$.
\end{definition}
In the related security experiment~\ref{alg:int-ctxt}, the adversary gets access to a sending
oracle (Algorithm~\ref{alg:int-ctxt-send}) and a receiving oracle
(Algorithm~\ref{alg:int-ctxt-recv}).

\begin{algorithm}
    \caption{$\mathsf{Exp}^{\textsf{\intctxt}}_{\mathcal{A}}(1^{\lambda})$} \label{alg:int-ctxt}
    \begin{algorithmic}[1]
        \State $(\tok[S]{st}, \tok[R]{st}) \gets \tf{Init}(1^\lambda)$
        \State $C \gets \emptyset$
        \State $b \gets 0$
        \State $\mathcal{A}^{\mathcal{O}_{\tf{\intctxtsend}}(), \mathcal{O}_{\tf{\intctxtrecv}}()}(1^\lambda)$
        \State \Return $b$
    \end{algorithmic}
\end{algorithm}

\begin{algorithm}
    \caption{$\mathcal{O}_{\tf{\intctxtsend}}(m,p)$} \label{alg:int-ctxt-send}
    \begin{algorithmic}[1]
        \State $(\tok[S]{st}, c) \gets  \tf{Send}(\tok[S]{st},m,p)$
        \If{$c \neq \bot$}
        \State $C \gets C \cup \{c\}$
        \EndIf
        \State \Return $c$
    \end{algorithmic}
\end{algorithm}

\begin{algorithm}
    \caption{$\mathcal{O}_{\tf{\intctxtrecv}}(c)$} \label{alg:int-ctxt-recv}
    \begin{algorithmic}[1]
        \State $(\tok[R]{st}, m) \gets \tf{Recv}(\tok[R]{st}, c)$
        \If{$c \notin C \land m \neq \bot \land m \neq \nullsym$}
            \State $b \gets 1$
        \EndIf
        \State \textbf{return} $m$
    \end{algorithmic}
\end{algorithm}

\section{Proofs for Datastream Relations Between Notions}
\label{appendix:stream-relations}

We provide proofs for each theorem from Section \ref{sec:stream-relations}.\\


\noindent First, we prove Theorem \ref{prop:lenreg-fep-ind}. 

\begin{proof}
    We adapt the idea for the straightforward reduction from IND\$-CPA to IND-CPA.  Consider the contrapositive, and suppose that there is a CPFA adversary $A_{\text{CPFA}}$ who can win the IND-CPFA game with non-negligible advantage $\epsilon$, and that the channel is length regular. Then we define an adversary for the \passiveObfs game $A_O$. $A_O$ plays the \passiveObfs game and runs a copy of $A_{\text{CPFA}}$. $A_O$ picks a bit at random $b_0$ and plays as the challenger for $A_{\text{CPFA}}$ in the IND-CPFA game. Every time $A_{\text{CPFA}}$ calls $\mathcal{O}_{LOR}$ with input $(m_1, m_0, f)$,  $A_O$ sends $m_{b_0}$, $p=-1$, $f$ to $\mathcal{O}_\textsc{Send}$, and passes the oracle's return value on to $A_{\text{CPFA}}$. Finally, when $A_{\text{CPFA}}$ produces an output bit $b_1$, $A_O$ outputs $1$ iff $b_0 = b_1$.

We calculate $Adv(A_O) = |P(A_O \to 1 | b=1) - P(A_O \to 1 | b=0)|$.

First, we note that if $b=1$, $A_O$ is interacting with the \passiveObfs challenger, and so $A_{\text{CPFA}}$ receives uniformly random strings of the content-independent length, carrying no information about $b_0$. Thus, $P(A_O \to 1 | b=1)$ is the advantage of $A_{\text{CPFA}}$ when interacting with a LoR oracle that produces uniformly random bytes, and is $0$. 

Second, if $b=0$, $A_{\text{CPFA}}$ receives a stream of genuine ciphertexts, so the second term reduces to the probability that $A_{\text{CPFA}}$ correctly guesses $b_0$ in the IND-CPFA game, with $b_0$ selected at random. 
Thus $Adv(A_O) = |0 - P(A_{\text{CPFA}} \textrm{wins the IND-CPFA game})|$ which completes the reduction.

\end{proof}
We prove Theorem \ref{prop:corr-cpfa-cl}. 

\begin{proof}
Since the close function is secure, it can be evaluated by an adversary for the IND-CPFA game, which allows for a trivial reduction. 
\end{proof}

We prove Theorem \ref{prop:cst-ccfa}. 

\begin{proof}
Suppose there is an adversary $A_{\tok{INT}}$ that wins the INT-CST experiment with non-negligible probability $\epsilon$. We construct an adversary $A$ that has non-negligible advantage in the \activeObfs experiment. 

First we define $A$, an adversary for the \activeObfs experiment which runs a copy of $A_{\tok{INT}}$ and acts as its challenger in the INT-CST experiment, passing on inputs to \tf{Send}, \tf{Recv} from $A_{\tok{INT}}$ onto its own oracles \tf{Send}, \tf{Recv}. $A$ returns to its challenger the value of $\tok{win}$ at the end of the INT-CST experiment. 

We observe that if $b=1$, $A$ always outputs $0$. If $b=0$, the behavior of the oracles for \tf{Send} and \tf{Recv} in the \activeObfs experiment is, by construction, defined to be identical to the behavior of these oracles in the INT-CST experiment. Thus, $A_{\tok{INT}}$ wins (and therefore $A$ outputs $1$) with probability $\epsilon$. Therefore the advantage of $A$ is $\epsilon/2$, which is non-negligible whenever $\epsilon$ is. 
\end{proof}

We prove Theorem \ref{thm:new-general-comp}. 
\begin{proof}
	Consider an adversary $A$ with non-negligible advantage in the FEP-CCFA-$\mathscr{C}$ game. We make a series of transitions in order to show that the advantage of $A$ is equal to an advantage of some adversary in the FEP-CPFA game. First, by correctness the close output of \tf{Recv} is identically distributed to the output of $\mathscr{C}$, and so we transition to a hybrid execution $E_1$ where $A$ receives the close output of $\mathscr{C}$ rather than the genuine output of \tf{Recv}, which is just a view-change. Then we observe that ERR-FREE gives that the channel produces no errors, and so by INT-CST, the probability that $A$ produces some nonempty output from $\mathcal{O}^b_{\tf{Recv}}(c)$ in the case where $b=0$ is some negligible function $\epsilon_{\tok{INT}}$ (if $b=1$, the plaintext output of $\mathcal{O}^b_{\tf{Recv}}(c)$ is identically $\emptySt$ by construction). Then we transition to a hybrid execution $E_2$ where $\mathcal{O}^b_{\tf{Recv}}(c)$ is replaced with a function that calls $\mathscr{C}$ and responds with $\emptySt$ and its output. Finally, we define a new adversary $A_1$ for the FEP-CPFA game, which runs $A$ in $E_2$ as follows, initializing three lists $C_S$, $C_R$, $C_{\tok{cl}}$.  Every time $A$ queries its \tf{Send} oracle, $A_1$ passes this on to its \tf{Send} oracle and records the output, concatenating it with $C_S$. Every time $A$ queries \tf{Recv}, $A_1$ appends the ciphertext fragment to $C_R$ and calls $\tok{cl} \gets \mathscr{C}(C_S, C_R, C_{\tok{cl}}, c)$, and responds to $A$ with $(\emptySt,\tok{cl})$, appending $\tok{cl}$ to $C_{\tok{cl}}$. Since $\mathscr{C}$ is secure, $A_1$'s simulation of it is perfect and this experiment is exactly $E_2$, so $A_1$ wins the \passiveObfs game whenever $A$ wins $E_2$, completing the reduction.

\end{proof}

\section{FEP-CPFA Does Not Imply IND-CPA}
\label{appendix:noCPFA}
Assume the AEAD scheme from Section \ref{sec:aead-scheme}. Since the scheme is length-regular, the encryption of $1$ byte always produces a ciphertext of a fixed length $1 + \tagSize$, which we denote $\ell$.

\begin{enumerate}
    \item $\tf{Init}$ returns a symmetric key for our AEAD scheme defined in Section \ref{sec:aead-scheme}. 
    \item $\tf{Send}(\tok[S]{st}, m)$ parses the string of bytes $m$ as a binary encoding of a natural number $n_m$ and outputs $\tf{Rand}(n_m) \Vert \tf{Enc}(0)$ where $\tf{Rand}$ is a function that produces uniformly random byte strings of the given length. Since $\tf{Enc}$ is length regular, this output ciphertext always has length $n_m + \ell$.
    \item $\tf{Recv}(\tok[R]{st},c)$ buffers the bytes in $c$ exactly as the datastream channel in \cite{DIAS}, and applies $\tf{Dec}$ sequentially to every subsequence of bytes of length $\ell$ in its buffer. If decryption succeeds and the message text is $0$, all bytes in the buffer of $\tf{Recv}$ that precede this valid ciphertext are counted producing a natural number $n_m$, which is treated as the encoding of a byte string $m$. All bytes in the buffer up to and including this valid ciphertext are removed, and $\tf{Recv}$ outputs $m$. Otherwise, $\tf{Recv}$ outputs nothing.
\end{enumerate}

\begin{theorem} The channel defined above satisfies \passiveObfs and correctness but does not satisfy IND-CPFA.\end{theorem}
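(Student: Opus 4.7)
The plan is to establish three independent claims: (1) \passiveObfs{} holds by a direct reduction to the IND\$-CPA security of the underlying AEAD scheme; (2) IND-CPFA fails via an explicit length-based distinguisher on a single LoR query; and (3) correctness holds with overwhelming probability over the coins of \tf{Send} and the AEAD.

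For \passiveObfs{}, I would note that every output of \tf{Send} has the form $\tf{Rand}(n_m)\,\Vert\,\tf{Enc}_k(0)$, with total length $n_m + \ell$ depending only on $m$. The random prefix is uniform by construction, and the trailing block is computationally indistinguishable from a uniformly random $\ell$-byte string by IND\$-CPA of the AEAD. A standard hybrid over the polynomially many \tf{Send} queries replaces each AEAD block with fresh randomness in turn; the final hybrid is exactly the $b=1$ transcript of the FEP-CPFA experiment, so any FEP-CPFA adversary translates into an IND\$-CPA adversary against the AEAD with essentially the same advantage.

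To falsify IND-CPFA, it suffices to exhibit two equal-length byte strings whose binary encodings are distinct natural numbers. Take $m_0 = \mathtt{0x01}$ and $m_1 = \mathtt{0x02}$, each of length one byte, so $|m_0| = |m_1|$ and the LoR query is legal. Because $n_{m_0} = 1 \neq 2 = n_{m_1}$, the returned ciphertext has length $1+\ell$ when $b=0$ and $2+\ell$ when $b=1$. The distinguisher simply reads off the ciphertext length and wins with probability $1$, giving advantage $1/2$.

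The main obstacle is correctness, since the receiver's scanning strategy can in principle misfire: a correctness failure occurs exactly when some $\ell$-byte window of the buffer other than a genuine terminating block decrypts to $0$ under \tf{Dec}. Every byte in the buffer that is not part of a genuine AEAD block is sampled uniformly at random, and by IND\$-CPA the bytes inside the genuine blocks are also computationally indistinguishable from uniform, so it suffices to bound the probability that a uniformly random $\ell$-byte string decrypts successfully to $0$ under the (unknown) session key. INT-CTXT of the AEAD forces this probability to be negligible: a non-negligible bound would yield a trivial forger that submits random bytes to the verification oracle. A union bound over the polynomially many $\ell$-byte windows inspected during any PPT execution keeps the total failure probability negligible, which gives correctness (understood, as is standard, up to negligible error) and completes the three claims.
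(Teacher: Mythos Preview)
Your proposal is correct and follows the same tripartite structure as the paper: an explicit length-based distinguisher for the IND-CPFA failure, a reduction to IND\$-CPA of the AEAD for \passiveObfs, and a direct argument for correctness. The differences are cosmetic---you pick $m_1=\mathtt{0x02}$ where the paper uses $255$, and you phrase the \passiveObfs{} reduction as a hybrid over queries where the paper runs a single simulation that forwards each query to the IND\$-CPA challenger.

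On correctness you are actually more careful than the paper. The paper's proof simply asserts that ``the number of bytes between valid ciphertexts is exactly $|m|$'' and hence \tf{Recv} recovers the right message, without addressing the possibility that a spurious $\ell$-byte window in the random prefix or straddling a boundary happens to decrypt to $0$. You identify this as the real obstacle and bound it via INT-CTXT plus a union bound over windows, which is the honest argument. Your caveat that correctness here holds only up to negligible error is accurate and necessary; the paper's formal correctness definition is stated as a universal condition, so strictly speaking this counterexample channel meets it only with overwhelming probability---a point the paper glosses over.
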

\begin{proof}
First, it is clear that the channel does not satisfy IND-CPFA. We note that the
$\tf{Send}$ algorithm has no buffering behavior, and immediately leaks the
plaintext content via the ciphertext length. An adversary can simply submit one
of two bytes $m_0 = 1$, and $m_1 = 255$. Since these plaintexts have the same
length, the left-or-right oracle will produce a ciphertext using $\tf{Send}$ of
length either $\ell + 1$ or $\ell + 255$, allowing the adversary to determine
the oracle's secret hidden bit with advantage $1$.

Second, we consider correctness. We note that if a full ciphertext is received
by $\tf{Recv}$, and no bytes are added in the receiver stream $C_R$, then the
number of bytes between valid ciphertexts is exactly $|m|$ for each valid
message, meaning that $\tf{Recv}$ produces the correct plaintext result once it
receives the entire corresponding ciphertext.

Finally we turn to the \passiveObfs property. Consider an adversary $A_O$ in the
\passiveObfs game. We construct an adversary $A$ that has the same advantage in
the IND\$-CPA game for the underlying AEAD scheme. $A$ runs a copy of $A_O$ and
each time $A_O$ submits a message $m$, $A$ submits $0^{\inLength}$ to the
IND\$-CPA challenger, receiving a ciphertext $c$. $A$, acting as a challenger
for $A_O$, produces $\tf{Rand}(|m|) \Vert c$ and returns this to $A_O$. Then, if
$A$ is producing random results, $A_O$ is playing the \passiveObfs game against
an adversary with $b=1$, and otherwise, $A_O$ is playing the \passiveObfs game
with $b=0$. Thus, the advantage of $A_O$ in the \passiveObfs game and the
advantage of $A$ in the IND\$-CPA game are the same.
\end{proof}

\section{Datastream Construction Proofs}
\label{appendix:datastream-construction-proofs}
We prove Theorem \ref{thm:datastream-construction-all} by establishing each property individually in the results that follow. 

First we prove correctness, \shaping, and channel length regularity for our construction. 

\begin{theorem}
\label{thm:stream-correctness}
The channel construction in Figure \ref{fig:stream-construction} satisfies correctness for secure close function $\mathscr{C} \equiv 0$.
\end{theorem}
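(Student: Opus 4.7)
The secure close function $\mathscr{C} \equiv 0$ makes two of the four correctness conditions essentially vacuous: inspecting \tf{Recv} shows that its third output is always the literal constant $0$ (it returns $(\tok[R]{st}, \emptySt, 0)$ in the fail branch, $(\tok[R]{st}, m, 0)$ otherwise), so Close Coherence and Close Correctness hold by direct inspection. The real work is Stream Preservation and Flushing, which requires tracking how plaintext bytes flow through the sender's input buffer $\tok{buf}$, into payload-block ciphertexts accumulated in $\tok{obuf}$, across the wire, and back out of \tf{Recv} after decryption and padding-stripping. Before starting, I will verify termination of the recursive call at the end of \tf{Send}: each non-returning invocation either strictly shrinks $\tok{buf}$ (when $o > 0$) or strictly grows $\tok{obuf}$ by at least $\lenlen + (2+\aeadOverhead)$ bytes, and the returning branch fires once both $\tok{buf}$ is empty (needed when $f=1$) and $|\tok{obuf}| \geq p$, both of which must eventually hold.

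For Stream Preservation, the plan is to maintain the invariant that $\cat \tok{obuf}$ (across all sender states) together with the already-output prefix $\cat C$ is obtained by encrypting length/payload block pairs whose plaintext content, after stripping the leading $\ell_p$ padding-length byte and the $\ell_p$ zero bytes, is exactly a prefix of $\cat M$. This follows by induction on \tf{Send} calls: each new block pair encrypts $\tok{buf}[1..o]$, and $\tok{buf}$ is itself the tail of $\cat M$ that has not yet been encrypted. On the receive side, as long as $\cat C' \preceq \cat C$, each complete block pair available in the receiver's buffer is exactly one of the block pairs produced by \tf{Send}, so its length block decrypts correctly to $\ell_c$ and its payload block decrypts to the original $\ell_p \Vert 0^{\ell_p} \Vert \tok{buf}[1..o]$. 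The slice $m'[3+\ell_p..]$ therefore extracts exactly the plaintext bytes that were encrypted, and these are appended to $m$ in the same order they appeared in $\cat M$. Any undelivered tail causes the \tf{Recv} while loop to exit via the \textbf{break} without affecting what has already been output, giving $\cat M' \preceq \cat M$.

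For Flushing, observe that when $f=1$ the early-return guard on Line 3 of \tf{Send} becomes $(|\tok{obuf}| \geq p) \land (\tok{buf}=\emptySt)$, so the recursion cannot terminate until every plaintext byte buffered so far has been encrypted into $\tok{obuf}$; and when it does terminate, $c \gets \tok{obuf}[1 .. \max(p, |\tok{obuf}|)] = \tok{obuf}$, i.e.\ the entire output buffer is emitted. Hence after the final \tf{Send} call, $\cat C$ contains a complete, in-order sequence of length/payload block pairs covering all of $\cat M$. If $\cat C' = \cat C$, the invariant from the previous paragraph, together with the fact that \tf{Recv} processes every complete pair in its buffer, yields $\cat M' = \cat M$. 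The main subtlety I expect is bookkeeping the padding and the length-block/payload-block pairing so that the induction cleanly lines up the plaintext bytes encrypted by \tf{Send} with those recovered by \tf{Recv}; once that bookkeeping is in place, both properties follow, and Close Coherence/Correctness are immediate from $\tok{cl} \equiv 0$.
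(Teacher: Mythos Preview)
Your proposal is correct and follows essentially the same approach as the paper: handle closures trivially by inspection, establish termination of \tf{Send}'s recursion via a progress argument on $\tok{buf}$/$\tok{obuf}$, and prove Stream Preservation and Flushing by tracking plaintext bytes through the sender's buffers, the ciphertext in flight, and \tf{Recv}'s parsing loop. The paper packages the bookkeeping slightly more formally---it introduces an explicit inverse function $r$ on block pairs and a single concatenated invariant $I = M_r \Vert r(\tok[R]{buf}\Vert C \Vert \tok{obuf}) \Vert \tok[S]{buf} \Vert M_s = M_0$, and separates out as a lemma that the concatenated ciphertext buffers always parse as a sequence of complete length/payload block pairs---but your informal invariant and the paper's $I$ encode the same thing, and your flushing argument (the $f=1$ guard forces $\tok{buf}=\emptySt$ and then the whole of $\tok{obuf}$ is emitted) matches the paper's.
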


\begin{proof}
\textbf{Termination}. First we argue that \tf{Send} and \tf{Recv} always terminate. For \tf{Send}, consider the conditional on line $3$: \tf{Send} only fails to terminate when the conditions here are not met, and if this condition fails, the subsequent recursive call to \tf{Send} occurs only after progress has been made towards the condition. We note $\tok{buf} = \emptySt$ except for the first call to \tf{Send}, and consider $|\tok{obuf}|$, noting that it is incremented with each call by at least $2*\lenlen + 2> 0$ in Line $13$, $17$, giving the desired result. For \tf{Recv}, we examine the while loop, and observe that each conditional branch either returns, breaks, or reduces $|\tok{buf}|$ by at least $\lenlen > 0$, progressing towards the termination condition. 

\textbf{Stream Preservation}. We assume all preconditions for the correctness requirement, and consider an arbitrary sequence of \tf{Send} calls interleaved in some order with an arbitrary sequence of \tf{Recv} calls. We label the concatenation of the plaintext fragments that are ever passed to \tf{Send} calls as $M_0$, and consider the following buffers throughout the execution of this sequence of function calls:
\begin{itemize}
\item $M_s$, the concatenation of all plaintext fragments not yet passed to \tf{Send}
\item $\tok[S]{buf}$, the internal buffer $\tok{buf}$ within the state of \tf{Send}
\item $\tok{obuf}$
\item $C$, the concatenation of ciphertext fragments output by \tf{Send} but not yet passed to \tf{Recv}
\item $\tok[R]{buf}$, the internal buffer within the state of \tf{Recv}
\item $M_r$, the concatenation of the outputs of \tf{Recv}
\end{itemize}

\begin{lemma} \label{lemma:stream-construction-correctness-buffers} At every point in the the execution, the concatenation of all ciphertext buffers $(\tok[R]{buf}\Vert C \Vert \tok{obuf})$ can be parsed as a (possibly empty) sequence of complete, correct, length-block payload-block ciphertext pairs (with the payload blocks possibly equal to $\emptySt$).

\end{lemma}
\begin{proof}
We follow bytes as they pass through the three ciphertext buffers: first, every call of \tf{Send} either returns immediately or appends a length-payload block pair to $\tok{obuf}$. By the length regularity of the AEAD scheme, $\ell_c$ is the size of the ciphertext constructed in Line $17$, and the correctness of the padding clear by inspection. Then, every fragment $c$ removed from $\tok{obuf}$ is removed from the front of the buffer, and is returned by \tf{Send}, meaning that $C \Vert (c \Vert \tok{obuf}) = (C \Vert c) \Vert \tok{obuf}$ remains fixed. Similarly, bytes are removed from $C$ and passed into $\tok[R]{buf}$ by the same mechanism, fixing the concatenated ciphertext buffers. Finally, we show that \tf{Recv} only removes full length-payload block pairs from \tf{Recv}. But this is clear from the conditions in Lines 6, 7, 11: \tf{Recv} waits to receive a full length block, and removes only when the full following payload block has been received (in which case the payload block is removed as well).
\end{proof}

We then define a function $r$, which operates on ciphertext blocks by recording and tracing each block and its plaintext as it is constructed and appended to $\tok{obuf}$ within \tf{Send}. When applied to a sequence of ciphertext block pairs, $r$ reverses the encryption of each payload block, strips the padding length prefix byte and following padding bytes, and concatenates the resulting plaintext for its output, and we note that for two block pairs $b_1^0 \Vert b_1^1$, $b_2^0 \Vert b_2^1$, that $r(b_1^0 \Vert b_1^1 \Vert b_2^0 \Vert b_2^1) = r(b_1^0 \Vert b_1^1) \Vert r(b_2^0 \Vert b_2^1)$.

We define the value \[I = M_r \Vert r(\tok[R]{buf} \Vert C \Vert \tok{obuf}) \Vert \tok[S]{buf} \Vert M_s\] and define our \emph{buffer invariant} to be the property that $I = M_0$, under which it is immediately clear that $M_r \prec M_0$
and proceed by induction, showing $I$ remains unmodified throughout each execution, and  considering last the case where $f=1$ for the final call to \tf{Send}. 

First, as a base case, we observe that $I$ holds trivially before any \tf{Send} calls are executed, since in this case $M_0 = M_s$, and all other buffers are empty. 

Next consider a call $\tf{Send}(m,p,f)$. First, $m$ is concatenated to $\tok[S]{buf}$, which preserves our invariant since it is removed from the beginning of $M_s$ and placed in the end of $\tok[S]{buf}$, leaving $\tok[S]{buf} \Vert M_s$ fixed. If \tf{Send} returns via the conditional in Line 3, bytes are transferred from $\tok{obuf}$ to $C$, leaving $C \Vert \tok{obuf}$ fixed. Otherwise, a new fragment $c$ is produced and added to $\tok{obuf}$. We calculate the new value of $I$, labeled $I'$. $c$ consists of two ciphertext blocks, and $r(c) = \emptySt \Vert \tok[S]{buf}[1..o]$. Then we can write:
\[I' =  M_r \Vert r(\tok[R]{buf} \Vert C \Vert \tok{obuf})\Vert r(c) \Vert \tok[S]{buf}[o+1..] \Vert M_s\]
or
\[I' =  M_r \Vert r(\tok[R]{buf} \Vert C \Vert \tok{obuf})\Vert \tok[S]{buf}[1..o] \Vert \tok[S]{buf}[o+1..] \Vert M_s\]

which is equal to $I$. We observe that if \tf{Send} is called recursively, the arguments above apply again, except that every recursive call has $\emptySt$ as its plaintext fragment argument.

We have already shown \tf{Recv} removes block pairs from $\tok[R]{buf}$, so it remains to be shown that the decryptions succeed with in \tf{Recv} and that this process produces the same result on blocks as $r$. The nonce (sequence number) of every nonempty block matches the one assigned in its construction in \tf{Send}, and blocks are parsed as constructed, so decryption succeeds, and the padding is stripped by parsing the first payload block byte as $r$ does. 

\textbf{Flushing}. Finally, we show that if $f=1$ for the final \tf{Send} call, and that if all output bytes of \tf{Send} have been passed to \tf{Recv}, that $M_r = M_0 = I$, or equivalently, given our invariant property, that $M_S$, $\tok[S]{buf}$, and $r(\tok[R]{buf}\Vert C \Vert \tok{obuf})$ are all equal to $\emptySt$. $M_S$ is empty by assumption, as is $C$, since the final \tf{Send} call has been executed and all outputs of \tf{Send} have been passed to \tf{Recv}. Consider the final \tf{Send} call: first, we note \tf{Send} cannot terminate unless $\tok{buf} = \emptySt$ by the conditional in Line 3, and \tf{Send} always terminates. $\tok{obuf}$ is empty since in the last \tf{Send} execution, $f=1$ so in Lines 4-5, all of $\tok{obuf}$ is removed. Then since $C = \tok{obuf} = \emptySt$, it remains to show $r(\tok[R]{buf}) = \emptySt$. Since the other ciphertext buffers are empty and all ciphertext takes the form of correct block pairs by Lemma \ref{lemma:stream-construction-correctness-buffers}, we have that $\tok[R]{buf}$ is some sequence of valid block pairs. But \tf{Recv} has consumed all of the correct blocks inside of its buffer by construction, so $\tok[R]{buf}$ is empty. 

\textbf{Closures}. Close predictability and correctness are trivial, since our construction never closes the channel and therefore satisfies $\mathscr{C} \equiv 0$. 

 \end{proof}

Next, we show the channel satisfies channel length regularity and \passiveObfs, which are straightforward and follow primarily from the similar atomic properties for our AEAD encryption scheme. 

\begin{theorem} \label{prop:stream-construction-ch-reg} The channel construction in Figure \ref{fig:stream-construction} satisfies channel length regularity.
\end{theorem}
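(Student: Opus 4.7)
The plan is to prove by induction on the number of completed Send calls that, across the two executions with $M^0$ and $M^1$ (plus shared $P$ and $F$), the state-length quantities $|\tok[S]{buf}|$ and $|\tok[S]{obuf}|$ agree and the output fragments have matching lengths. The base case is immediate: after $\tf{Init}$, both executions have $\tok{buf} = \tok{obuf} = \emptySt$ and equal $\tok{seqno}$.

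The key enabling fact is length additivity of the underlying AEAD: $|\tf{Enc}_k(\tok{nonce}, m')| = |m'| + \aeadOverhead$ for any plaintext $m'$. This implies every length computation inside Send is a function of plaintext lengths alone. In particular, the initial value of $\ell_c$ on line 10 equals $2 + o + \aeadOverhead$, and the loop at lines 11--14 increments $\ell_c$ by 1 with each iteration, so $\ell_c$ and $\ell_p$ are determined entirely by $o$, $|\tok{obuf}|$, $p$, and $\aeadOverhead$.

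For the inductive step, I would trace through a single call $\tf{Send}(\tok[S]{st}, m, p, f)$ under the hypothesis that state lengths match and $|M^0_i| = |M^1_i|$. After line 2, $|\tok{buf}|$ is updated by $+|m|$ and thus still agrees between executions. The early-exit conditional on line 3 depends only on $|\tok{obuf}|$, $p$, $f$, and whether $\tok{buf}$ is empty (which is equivalent to $|\tok{buf}| = 0$); if it fires, the returned fragment is a prefix of $\tok{obuf}$ whose length is a function of $p$, $f$, and $|\tok{obuf}|$, and the updated $|\tok{obuf}|$ matches across executions. Otherwise, $o$, $\ell_c$, and $\ell_p$ are all determined by state lengths and parameters, and the constructed block pair has length $(\lenlen) + (2 + o + \ell_p + \aeadOverhead)$ by length additivity, again a function only of lengths. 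Appending this to $\tok{obuf}$ preserves equality of $|\tok{obuf}|$ across executions, after which Send recurses with identical inputs $(\emptySt, p, f)$.

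The only subtlety is the recursive call structure, which requires checking that termination and subsequent control-flow decisions are governed entirely by state lengths and the parameters $p, f$; since the invariant is preserved across each recursive invocation, both executions follow the same control-flow path and produce the same output length. No cryptographic reduction is required here, so there is no serious obstacle beyond carefully verifying that every branch condition in Send is a function of lengths rather than contents.
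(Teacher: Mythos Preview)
Your proposal is correct and follows essentially the same approach as the paper: both argue that every branch and length computation in \tf{Send} depends only on the lengths of the plaintext and buffers (via length additivity of the AEAD), never on their contents. The paper compresses this into a single sentence (``no logic in \tf{Send} depends on the content of any plaintext message or buffer, only their lengths''), whereas you spell out the induction on calls and track the invariant $(|\tok{buf}|,|\tok{obuf}|)$ explicitly through each line and the recursion; the underlying idea is identical.
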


\begin{proof}
By assumption \tf{Enc} is length regular, and no logic in \tf{Send} depends on the content of any plaintext message or buffer, only their lengths. The result follows.  
\end{proof}

\begin{theorem} \label{prop:stream-construction-fep-cpfa} The channel construction in Figure~\ref{fig:stream-construction} satisfies \passiveObfs.
\end{theorem}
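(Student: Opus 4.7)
My plan is to reduce \passiveObfs of the channel construction directly to the IND\$-CPA security of the underlying AEAD scheme. The core observation is that every byte that the construction ever emits is part of the output of some invocation of $\tf{Enc}_k$, so if those invocations produce bytes indistinguishable from uniform, then so will the channel.

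First I would fix a PPT adversary $A$ for the \passiveObfs experiment and build a PPT adversary $B$ against the IND\$-CPA security of (\tf{Gen}, \tf{Enc}, \tf{Dec}). $B$ obtains access to an encryption oracle that, depending on its hidden bit $b^\ast$, either returns $\tf{Enc}_k(\cdot,\cdot)$ under a freshly generated key or returns $\tf{Rand}(|\tf{Enc}_k(\cdot,\cdot)|)$. $B$ then simulates the \passiveObfs experiment for $A$ by running the construction of Figure~\ref{fig:stream-construction} internally and handling each $\mathcal{O}_{\tf{Send}}$ query by executing $\tf{Send}$ on the supplied $(m,p,f)$, but with two modifications: (1) every concrete call to $\tf{Enc}_k(\tok{seqno}, \cdot)$ that produces a length block or payload block is forwarded to the IND\$-CPA oracle with nonce $\tok{seqno}$, and (2) every length query of the form $|\tf{Enc}_k(\tok{seqno}, 0^{2+o+\ell_p})|$ used to compute $\ell_c$ is replaced by the arithmetic value $2 + o + \ell_p + \aeadOverhead$, which is equal by length additivity (Section~\ref{sec:aead-scheme}). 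The ciphertext fragments returned to $A$ are exactly the bytes that $\tf{Send}$ extracts from $\tok{obuf}$.

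Next I would verify the two view equivalences. When $b^\ast = 0$, the oracle returns genuine encryptions and $B$'s simulation of $\tf{Send}$ is identical to the real $\tf{Send}$; hence $A$'s view is identical to $\mathsf{Exp}^{\textsf{\passiveObfs},0}$. When $b^\ast = 1$, the bytes inserted into $\tok{obuf}$ for each length block and each payload block are independent uniform strings of exactly the lengths $\lenlen$ and $\ell_c$ that the real scheme would have produced; because the control flow of $\tf{Send}$ depends only on $m$, $p$, $f$, $\tok{seqno}$, and lengths (and not on the content of any encryption output), the fragments returned to $A$ are uniformly random strings of exactly the same lengths as in the $b=0$ world. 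A key subpoint here is that $\tok{seqno}$ is monotonically incremented after every encryption, so $B$ never repeats a nonce across its oracle queries, which is what IND\$-CPA requires. Hence $A$'s view is identically distributed to $\mathsf{Exp}^{\textsf{\passiveObfs},1}$.

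Finally, I would conclude by equating advantages: the distinguishing advantage of $B$ against IND\$-CPA is exactly that of $A$ against \passiveObfs, and since $B$ runs in polynomial time and IND\$-CPA holds by assumption, $A$'s advantage is negligible. The main subtle step to get right is justifying that the control flow of $\tf{Send}$ is independent of the ciphertext contents, which is the reason the replacement in step~(2) is sound and the length of every emitted fragment matches between the two worlds; this hinges on length additivity of the AEAD scheme and on the observation that all byte-producing branches in $\tf{Send}$ simply concatenate or slice previously-computed ciphertexts without inspecting them.
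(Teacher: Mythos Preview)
Your proposal is correct and follows essentially the same approach as the paper: the paper's proof is a one-liner observing that all outputs of \tf{Send} are produced by \tf{Enc}, which is IND\$-CPA, and you have written out the explicit reduction that substantiates this observation. Your added care about length additivity (for the $\ell_c$ computation), nonce non-repetition via the monotone $\tok{seqno}$, and the independence of \tf{Send}'s control flow from ciphertext contents are exactly the details one would need to make the paper's sketch rigorous.
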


\begin{proof}
Straightforward, since all outputs of \tf{Send} are produced by \tf{Enc}, which satisfies IND\$-CPA by assumption. 
\end{proof}
    
\begin{theorem} \label{prop:stream-construction-integrity} The channel construction in Figure \ref{fig:stream-construction} satisfies INT-CST.
\end{theorem}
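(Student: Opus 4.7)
My plan is to establish INT-CST via a standard reduction to the INT-CTXT property of the underlying AEAD scheme. Suppose, for contradiction, that there is a PPT adversary $A$ winning the INT-CST game with non-negligible probability $\epsilon$. I construct a PPT adversary $B$ against INT-CTXT of the AEAD that simulates the INT-CST experiment for $A$: it mirrors the sender and receiver states prescribed by the construction in Figure~\ref{fig:stream-construction}, but every call to $\tf{Enc}_k$ is replaced by a query to $B$'s encryption oracle, and every call to $\tf{Dec}_k$ is replaced by a query to its decryption/verification oracle. Because the channel construction never exposes the key $k$, this simulation of $A$'s view is perfect.

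Whenever $A$ wins the INT-CST game, the bytestream $\cat C_R$ received by $\tf{Recv}$ is not a prefix of $\cat C_S$ produced by $\tf{Send}$, yet $\tf{Recv}$ has produced some non-trivial output and is therefore not in the $\tok{fail}=1$ state on the block in question. I would then perform a case analysis on the first block where $\cat C_R$ deviates from $\cat C_S$. The critical structural fact is that the nonce sequence $0, 1, 2, \ldots$ assigned by both $\tf{Send}$ and $\tf{Recv}$ depends only on the number of successful block decryptions so far, not on their contents, and the sizes are fixed ($\lenlen$ for length blocks) or determined by the previously decrypted length block (for payload blocks). Thus, up through the first point of deviation, the sender and receiver agree on which nonce is associated with which block position; at the point of deviation, the receiver performs $\tf{Dec}_k(n, c_R)$ where $c_R$ differs from the corresponding sender ciphertext $c_S$ produced at nonce $n$ (either a length block at some even nonce, or a payload block at some odd nonce). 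If this decryption returns $\errDec$, $\tok{fail}$ is set and no further plaintext is emitted, so $A$ cannot win; therefore the decryption must have succeeded, and $B$ outputs $(n, c_R)$ as its INT-CTXT forgery.

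The main obstacle, and the step requiring the most care, is ensuring the nonce alignment between sender and receiver at the deviation point, because the receiver's payload block length is controlled by the preceding (possibly adversarial) length block. The argument is that before the first deviation, by definition $C_R$ is a prefix of $C_S$, so every block the receiver has decrypted up to that point is identical to the corresponding block produced by the sender; hence the receiver's $\tok{seqno}$ matches the sender's $\tok{seqno}$ at the deviating block, and the length field determining the payload block size is exactly the one the sender used. I also need to handle the edge case where the deviation occurs inside a length block (still a forgery at an even nonce) versus inside a payload block (forgery at an odd nonce), and the case where $\cat C_R$ extends strictly past $\cat C_S$ after all sender blocks have been faithfully delivered (here the first extra block $\tf{Recv}$ processes uses a nonce never used by the encryption oracle, so any successful decryption is trivially a forgery). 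In each case $B$'s simulation is perfect and $B$ wins INT-CTXT with probability at least $\epsilon$, contradicting the assumed INT-CTXT security of the AEAD scheme.
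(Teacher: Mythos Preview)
Your proof is correct and follows essentially the same reduction as the paper's: both have $B$ simulate the channel with the AEAD encryption and decryption oracles, exploit the deterministic $\tok{seqno}$ structure to argue nonce alignment between sender and receiver up to the first point where $\cat C_R$ deviates from $\cat C_S$, and exhibit the block at that point as the AUTH forgery. The paper frames the case split slightly differently---working backward from the payload block that yields the offending nonempty plaintext and asking whether the forgery lies in that block or its preceding length block---and defers the routine details to Fischlin~et~al., but the substance is identical.

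One small inaccuracy worth flagging: in your extension case you assert that the first extra block uses a nonce never queried to the encryption oracle, but this need not hold, because the construction's \tf{Send} can leave fully encrypted block pairs sitting in $\tok{obuf}$ that were produced via $\tf{Enc}$ (hence via $B$'s oracle) yet never returned to the adversary. The paper's proof does not dwell on this either; the resolution is that matching those buffered bytes would require the adversary to predict AEAD ciphertext it has never seen, and any mismatch at such a nonce is still a forgery since the oracle output at that nonce is unique.
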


\begin{proof}
Since our construction is structurally similar to the AEAD construction in Fischlin \etal~\cite{DIAS}, we adopt the same strategy for our proof and we reduce the INT-CST game to the AUTH game for our underlying AEAD encryption scheme by assuming there is some adversary $A$ which wins the INT-CST game against our construction with advantage $\epsilon$, and constructing an adversary $B$ that runs $A$ and simulates the behavior of \tf{Send} and \tf{Recv}, replacing the encryption and decryption calls with their corresponding oracles in the AUTH experiment. For the details, we refer to Fischlin \etal~\cite{DIAS}, and we note where our proof differs. We note that the sequence of blocks recorded by $B$ are entirely ciphertext blocks, in contrast to the previous approach which interleaved plaintext length fields between each ciphertext block, however the sequence numbers are structured identically, incrementing by one with each block. Thus, $B$ records ciphertext blocks when they are passed to \tf{Send} and ``decrypts'' each block when a genuine ciphertext fragment is passed to \tf{Recv}. The argument is the same up until the first non-genuine ciphertext fragment $c'$ is passed to \tf{Recv}. $B$ extracts the genuine ciphertext blocks from the beginning of $c'$ and "decrypts" them. We note that for our channel, blocks can produce empty plaintext upon decryption and indeed that an AUTH forgery can take place within a ciphertext block that causes \tf{Recv} to produce nothing (\eg, if a length block is forged, or a payload block full of padding), but that violating INT-CST requires that $A$ produce inputs to \tf{Recv} that cause it to generate a \emph{nonempty} plaintext fragment. This must, by construction of \tf{Recv}, occur within a fragment of ciphertext which is parsed by \tf{Recv} as a payload block with non zero length, \ie in Line 14. 
This means that the plaintext fragment to be produced by \tf{Recv} as a result of deviating ciphertext, $m'$ must be the result of decrypting a payload block, and so if a deviation occurs within the preceeding length block, this is already a forgery and $B$ can wait to collect $\lenlen$ bytes to produce the forged length block; by assumption, if \tf{Recv} is to produce a ciphertext, \tf{Dec} will produce some valid nonnegative integer when applied to the block or \tf{Recv} sets $\tok{fail} = 1$, precluding future outputs. If the first deviation occurs within a payload block, the preceeding length block is genuine and has length $\ell$ and so $B$ can produce $\emptySt$ until $\ell$ bytes have been received, producing the full ciphertext block forgery. Thus, $B$ can construct the forgery for our protocol and INT-CST is satisfied. 

\end{proof}

\begin{theorem} \label{corr:stream-construction-ccfa} The channel construction in Figure \ref{fig:stream-construction} satisfies \activeObfs and IND-CCFA.
\end{theorem}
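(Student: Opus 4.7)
The plan is to derive this corollary directly from the properties established in Theorems \ref{thm:stream-correctness}--\ref{prop:stream-construction-integrity} by invoking the compositional results of Section~\ref{sec:stream-relations}. No new cryptographic reductions are needed; the work consists of verifying one simple structural property of \tf{Recv} and then chaining existing implications, so this is essentially a bookkeeping result.

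For \activeObfs{}, I would first observe by inspection of Figure~\ref{fig:stream-construction} that the construction satisfies ERR-FREE: every return statement inside \tf{Recv} produces a plaintext component that is either the empty string $\emptySt$ (when the receiver is in its persistent fail state, when insufficient buffer is available, or when a length-block or payload-block decryption fails) or a concatenation of successfully decrypted and de-padded plaintexts, and in no case is the distinguished in-band error symbol emitted. Once ERR-FREE is established, the hypotheses of Theorem~\ref{thm:new-general-comp} are all satisfied for the secure close function $\mathscr{C}\equiv 0$: correctness comes from Theorem~\ref{thm:stream-correctness}, \passiveObfs{} from Theorem~\ref{prop:stream-construction-fep-cpfa}, ERR-FREE as just argued, and INT-CST from Theorem~\ref{prop:stream-construction-integrity}. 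Applying Theorem~\ref{thm:new-general-comp} directly yields \activeObfs[$\mathscr{C}\equiv 0$].

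For IND-CCFA, I would chain two results visible in Figure~\ref{fig:stream-relations}. Combining channel length regularity (Theorem~\ref{prop:stream-construction-ch-reg}) with \passiveObfs{} (Theorem~\ref{prop:stream-construction-fep-cpfa}) via Theorem~\ref{prop:lenreg-fep-ind} yields IND-CPFA. Then I invoke the composition result of Fischlin et al.\ (the three-input node in Figure~\ref{fig:stream-relations}), which states that IND-CPFA together with INT-CST and ERR-PRED imply IND-CCFA; ERR-FREE trivially implies ERR-PRED, and INT-CST is already in hand from Theorem~\ref{prop:stream-construction-integrity}, so IND-CCFA follows immediately.

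The only step requiring any genuine thought is the ERR-FREE verification, which reduces to a routine case analysis over the return paths of \tf{Recv} confirming that no branch ever emits $\bot$ in band. The rest is a mechanical application of Theorems~\ref{thm:new-general-comp} and~\ref{prop:lenreg-fep-ind} together with the Fischlin composition, so I do not expect any real obstacle.
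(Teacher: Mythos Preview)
Your proposal is correct and matches the paper's approach essentially verbatim: the paper also observes ERR-FREE by inspection, then invokes Theorem~\ref{thm:new-general-comp} with correctness (Theorem~\ref{thm:stream-correctness}), \passiveObfs{} (Theorem~\ref{prop:stream-construction-fep-cpfa}), and INT-CST (Theorem~\ref{prop:stream-construction-integrity}) to obtain \activeObfs. Your IND-CCFA derivation via Theorem~\ref{prop:lenreg-fep-ind} plus the Fischlin composition is in fact more explicit than the paper's appendix proof, which only spells out the \activeObfs{} half and leaves IND-CCFA to the relations diagram; both route through the same ingredients.
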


\begin{proof}
	We observe that since our construction never produces an error, it satisfies ERR-FREE. Then by Theorem \ref{thm:stream-correctness} (and noting the zero function is trivially a secure close function), Theorem \ref{prop:stream-construction-fep-cpfa}, and Theorem \ref{prop:stream-construction-integrity}, we apply Theorem \ref{thm:new-general-comp} and have that the construction satisfies FEP-CCFA. 
\end{proof}

\begin{theorem} \label{prop:stream-construction-length-shaping} The channel construction in Figure \ref{fig:stream-construction} satisfies \shaping.
\end{theorem}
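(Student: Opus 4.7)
The plan is to prove the property by directly inspecting the control flow of \tf{Send} and identifying that the only return path producing a ciphertext fragment is Line 6, so it suffices to show (i) that this return statement is reached in finitely many recursive calls, and (ii) that the $c$ constructed there has the length required by Definition~\ref{def:length-shaping}. The argument reduces essentially to tracing the code and is not cryptographic; it relies only on the length-additive property of \tf{Enc} assumed in Section~\ref{sec:aead-scheme}.

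For (ii), at the point Line 4 executes, the Line 3 guard guarantees $|\tok{obuf}| \geq p$. Thus $c = \tok{obuf}[1..\tf{Max}(p, f\cdot|\tok{obuf}|)]$. If $f=0$, then $\tf{Max}(p,0) = p$, so $|c| = p$. If $f=1$, then since $|\tok{obuf}| \geq p$ we have $\tf{Max}(p,|\tok{obuf}|) = |\tok{obuf}|$, giving $|c| = |\tok{obuf}| \geq p$. This is exactly the statement of \shaping{} in both cases.

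For (i), I would argue termination by a potential-function argument. Each call that does not return at Line 6 executes Lines 7--20, and by length-additivity of \tf{Enc}, the block pair appended to $\tok{obuf}$ on Line 19 has size $\lenlen + \ell_c \geq 2\lenlen > 0$, so $|\tok{obuf}|$ strictly increases at each recursive call. Moreover, Line 16 removes $o = \tf{Min}(|\tok{buf}|, \tok{il})$ bytes from $\tok{buf}$, so $|\tok{buf}|$ is non-increasing. For $f=0$ only the first conjunct of the Line 3 guard matters, and since $|\tok{obuf}|$ strictly increases by at least $2\lenlen$ per call, it exceeds $p$ after at most $\lceil p/(2\lenlen)\rceil$ recursions. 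For $f=1$ we additionally need $\tok{buf} = \emptySt$; since each call drains at least $\tf{Min}(|\tok{buf}|,\tok{il})$ plaintext bytes into the encrypted payload, $\tok{buf}$ becomes empty after at most $\lceil |\tok{buf}|/\tok{il}\rceil$ calls, after which the Line 3 guard fires.

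The one step requiring care is the adaptive padding loop on Lines 10--12, which chooses $\ell_p$ so that the new block pair fits within the shaping budget. The loop exits either because $|\tok{obuf}| + \ell_c + \lenlen \geq p$ (so one further recursive call satisfies the guard) or because $\ell_c$ has saturated at $\tok{ol}$ (so more iterations may be needed but each still strictly grows $\tok{obuf}$). Both cases are handled by the termination counting above, so I anticipate no real obstacle; the main work is the bookkeeping that relates $\ell_c$ to the plaintext length $2+o+\ell_p$ via length-additivity and verifies that $\ell_c \geq \lenlen$ always holds, ensuring the strict progress needed for termination.
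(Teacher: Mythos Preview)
Your proof is correct and follows essentially the same approach as the paper: both observe that \tf{Send} can only return at Line~6, so it suffices to check the length of the $c$ constructed on Line~4 under the guard on Line~3, and your case analysis for $f=0$ and $f=1$ matches the paper's. The only difference is that the paper dispatches termination by citing the correctness theorem (Theorem~\ref{thm:stream-correctness}), whereas you give a self-contained potential-function argument; this is more work than needed but not a different route.
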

\begin{proof}
First we observe that every recursive call of \tf{Send} passes on $f$ and $p$ unaltered, and that \tf{Send} only terminates on Line $6$, producing a value calculated in Line $4$. As argued in Thm.~\ref{thm:stream-correctness}, \tf{Send} always terminates. Therefore, it suffices to analyze the length of the value produced on Line $4$ with the values of $p$, $f$ provided to the initial call of Send. If $f=0$, $c$ is the first $p$ bytes of $\tok{obuf}$, which has at least $p$ bytes due to the conditional on Line $3$. If $f=1$, the length of $c$ is the larger of $p$ and $|\tok{obuf}|$ satisfying the requirement there as well. 
\end{proof}

\section{Proofs for Datagram Relations Between Notions}
\label{appendix:datagram-relations-proofs}
We prove Theorem~\ref{prop:datagram-cca-integrity}. 
\begin{proof}

Suppose a channel fails to satisfy INT-CTXT. Then there exists some PPT adversary for the INT-CTXT experiment $A$ that generates some non-suppressed, non-error, non-null output from  $\mathcal{O}_{\tf{\intctxtrecv}}$ with non-negligible probability. 
We then consider $A$'s behavior if instantiated as an adversary for the FEP-CCA experiment; if $b=1$ , the \tf{Recv} oracle always produces $\bot$ regardless of the behavior of $A$, but if $b=0$, the behavior of the \tf{Recv} oracle is identical to the \tf{Recv} oracle for the INT-CTXT game, thus by assumption $A$ produces some output $m \neq \bot$ from $\mathcal{O}^b_{\tf{\fepdgramrecv}}$ with non-negligible probability. Then it is clear that an adversary that runs $A$ in the FEP-CCA game and outputs $1$ exactly when non-error output is produced from $\mathcal{O}^b_{\tf{\fepdgramrecv}}$ has advantage equal to half of the advantage of $A$, which is also non-negligible.  
\end{proof}
We prove Theorem~\ref{prop:datagram-composition}. 

\begin{proof}
We sketch the proof as follows, adopting the same approach as Theorem \ref{prop:lenreg-fep-ind} and proof of the general composition theorem. Suppose a datagram channel $\mathcal{C}$ satisfies INT-CTXT. We play the FEP-CCA game, which we refer to as $E_0$. Suppose the probability that an adversary wins the INT-CTXT game is bounded by $p_{\text{INT}}$. Then we transition to a new hybrid experiment $E_1$ where $\mathcal{O}^b_{\tf{\fepdgramrecv}}$ simply returns $\bot$ to all queries. Consider in $E_0$, an adversary $A$ that produces some value $c \neq \bot$ from $\mathcal{O}^b_{\tf{\fepdgramrecv}}$. This is only possible if $b=0$, and in this case, $A$ if run as an adversary in the INT-CTXT game wins that game as well; in the $b=0$ case, the \tf{Send} oracles in the two games are identical. Thus, the transition from $E_0$ to $E_1$ is bounded by $p_{\text{INT}}$, which is negligible. Then, since in $E_1$ $\mathcal{O}^b_{\tf{\fepdgramrecv}}$ deterministically returns $\bot$, it easy to see that an adversary for $E_1$ can be run in $\mathsf{Exp}^{\textsf{\fepdgramsend},b}_{\mathcal{A}}(1^{\lambda})$ to win with with equal probability, completing the reduction. 
\end{proof}

We prove Theorem~\ref{prop:datagram-fepcca-indcca}
\begin{proof}
We observe that FEP-CCA trivially implies FEP-CPA, and implies INT-CTXT by Theorem~\ref{prop:datagram-cca-integrity}. Then we apply the argument from Theorem \ref{prop:lenreg-fep-ind} which gives that FEP-CPA and datagram channel length regularity imply IND-CPA, noting that datagram channel length regularity gives that any two equal-length inputs to the left-or-right oracle in the IND-CPA game have the same output length, just as in the datastream case.  Finally, since our definitions in the datagram setting are the traditional atomic security definitions, we apply the generic composition theorem from Bellare and Namprempre~\cite{bellare2008authenticated} which gives that INT-CTXT and IND-CPA imply IND-CCA. 
\end{proof}

\section{Proofs for Datagram Constructions}
\label{appendix:datagram-construction-proofs}
In this section we prove Theorem \ref{thm:datagram-construction-all} by establishing each individual property in the results below. 

\begin{theorem} \label{prop:datagram-construction-correctness} The channel construction in Figure \ref{fig:construction-datagram} satisfies correctness for Datagram channels.
\end{theorem}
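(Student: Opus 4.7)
My plan is to verify the three correctness properties of Definition~\ref{def:datagram-correctness} by direct case analysis on the branches of $\tf{Send}$ in Figure~\ref{fig:construction-datagram}, exploiting the stateless nature of the construction (both $\tf{Send}$ and $\tf{Recv}$ return the key $k$ unchanged as the new state) and the correctness of the underlying AEAD scheme. Because the state never evolves, every invocation on the same inputs produces the same output, which simplifies the ``first call'' requirement of Message Delivery to a purely deterministic statement.

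For Message Acceptance, I would exhibit a suitable $p_m$ for each message. For any $m \in \messageSpace$, setting $p_m = |m| + \aeadOverhead + 3$ suffices: this bound is at most $\outLength$ since $|m| \le \inLength = \outLength - \aeadOverhead - 3$, and for $p$ in the admissible range starting at $p_m$ the error condition on line~10 does not trigger. For $m = \nullsym$ I would set $p_m = 0$, since any $p \in [0,\nullLength)$ reaches the random-bytes branch and any $p \in [\nullLength, \outLength]$ reaches the null-encryption branch, neither of which returns $\bot$. The $p < 0$ cases are handled uniformly by the first two conditionals. Message Length then follows by inspecting each non-error branch and invoking length additivity of $\tf{Enc}$: the negative-$p$ branches produce outputs of length $1+\aeadOverhead$ or $3+|m|+\aeadOverhead$, the small-$p$ null branch produces $p < \nullLength$ bytes, and the remaining branches produce exactly $p \leq \outLength$ bytes.

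Message Delivery is the main obstacle and requires a branch-by-branch trace through $\tf{Recv}$. For the random-bytes branch, the output satisfies $|c| < \nullLength$, which triggers the first check in $\tf{Recv}$ and returns $\nullsym$, matching the input. For every branch that invokes $\tf{Enc}$, AEAD correctness guarantees that $\tf{Dec}$ recovers the exact plaintext, so the argument reduces to verifying the encoding logic: the leading indicator byte correctly distinguishes $\nullsym$ (byte value $0$) from data messages (byte value $1$), and for data messages the two-byte length field lets $\tf{Recv}$ compute the starting index $i = |\tok{buf}| - |m| + 1$ so that $\tok{buf}[i..]$ is precisely the trailing $|m|$ bytes of the padded plaintext, which by construction of $\tf{Send}$ equals $m$. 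Since no state changes between calls, the same $m$ is returned by every subsequent $\tf{Recv}$ call on the same ciphertext, which trivially satisfies the ``first call returns $m$, later calls return $m$ or $\bot$'' clause. The subtlety worth double-checking is the encoding of $|m|$ as a two-byte unsigned integer and the alignment of the padding block in the $p \geq p_m$ case; once this bookkeeping is confirmed, the three properties follow and the theorem is established.
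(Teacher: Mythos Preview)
Your proposal is correct and follows essentially the same approach as the paper: a direct case analysis on the branches of $\tf{Send}$, using the same choices $p_m = |m| + \aeadOverhead + 3$ and $p_\nullsym = 0$, and verifying Message Delivery by tracing the encoding through $\tf{Recv}$ with AEAD correctness. The only differences are that the paper outsources the $p \geq 0$ case of Message Length to the \shaping{} theorem rather than checking each branch directly, and that you make explicit use of statelessness to handle the ``first call versus later calls'' clause of Message Delivery, on which the paper is silent; neither difference changes the structure of the argument.
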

\begin{proof}
First we consider Message Length. In the case where $p \geq 0$, the result directly follows from Theorem \ref{prop:datagram-shaping}. Otherwise, if $m \leq \inLength$ or $m = \nullsym$, \tf{Send} returns immediately on Line 3 or 5 with a valid output length.

For Message Delivery, we consider cases. First, if $|c| < \nullLength$, we observe \tf{Send} can only produce output by returning on Line 7, meaning $m = \nullsym$. An output message of this length will always be parsed by \tf{Recv} as $\nullsym$. Second, if $m = \nullsym$, we have a ciphertext, produced in Line 3 or 9, which begins with $0$. This is parsed as a null message in Line 6 of \tf{Recv}. Finally, if $m \neq \nullsym$, \tf{Send} produces a ciphertext in Line 5 or 13. Each ciphertext contains the $1$ byte, a two-byte length header, some optional padding bytes, and the message. \tf{Recv} parses this plaintext by extracting the length field, and extracting the plaintext bytes from the end of the message. 

For Message Acceptance, we consider first the cases with positive lengths. first suppose $m = \nullsym$ and let $p_{\nullsym}=0$; Line 7 returns the empty byte string, which is interpreted as $\nullsym$ by \tf{Recv}, which remains true for $0 \leq p < \nullLength$. Then if $\nullLength \leq p \leq \outLength$, \tf{Send} returns $c \neq \bot$ on Line 8. 

Second, let $m \leq \inLength$, $m \neq \nullsym$.  First we identify an output length $p_m = \aeadOverhead + 3 + |m|$ which will permit $m$ to be sent. $p_m \leq \outLength$ since $m \leq \inLength$. Then, we observe the function call $\tf{Send}(\tok[S]{st}, m ,p_m)$: none of the conditionals trigger, and the plaintext is produced on Line 13 and encrypted as the result, which is not $\bot$. 

Finally, if $p < 1$, the first two conditionals cover all messages in the message space, and each returns a non-error ciphertext.  
\end{proof}

\begin{theorem}\label{prop:datagram-construction-fepcpa}The channel construction in Figure \ref{fig:construction-datagram} satisfies FEP-CPA. \end{theorem}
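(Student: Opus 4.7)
The plan is to reduce FEP-CPA to the IND\$-CPA security of the underlying AEAD scheme via a direct simulation argument. Suppose, for contradiction, that a PPT adversary $A$ has non-negligible advantage $\epsilon$ in the FEP-CPA experiment against the construction. I would build a PPT adversary $B$ that plays the IND\$-CPA game for the AEAD scheme and internally simulates the FEP-CPA experiment for $A$, forwarding $A$'s final guess as its own.

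For each sending-oracle query $(m, p)$ from $A$, $B$ inspects the conditionals of $\tf{Send}$ (which depend only on $m$ and $p$, both known to $B$) and emulates the chosen branch as follows. (1) For the four branches that return an AEAD ciphertext (Lines 4, 6, 9, and 14), $B$ constructs the exact plaintext string $\tf{Send}$ would encrypt and submits it to $B$'s own IND\$-CPA encryption oracle, forwarding the result to $A$; because the assumed AEAD scheme prepends its nonce to the ciphertext and satisfies IND\$-CPA, a real response is distributed as $\tf{Enc}_k(\tok{nonce}, \cdot)$ with a fresh $\tok{nonce}$, while an ideal response is a uniform string of the same length. (2) For the $\tf{Rand}(p)$ branch (Line 7), $B$ simply samples $p$ uniform bytes itself, which are identically distributed to the real and ideal FEP-CPA outputs in this case. (3) For the error branch (Line 11), $B$ returns $\bot$, matching the FEP-CPA sending oracle in both worlds.

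When $B$'s challenger is in the real world, $A$'s view coincides exactly with the $b=0$ branch of $\mathsf{Exp}^{\textsf{\fepcpa},b}_{A}$; when $B$'s challenger is in the ideal world, $A$'s view coincides with the $b=1$ branch, since the ideal IND\$-CPA response has length equal to $|\tf{Enc}_k(\tok{nonce}, \cdot)|$, which by length additivity is the same length as the true Send output. Hence $B$ inherits $A$'s advantage $\epsilon$, contradicting IND\$-CPA of the AEAD scheme.

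The only subtlety, and the place I expect the main (but minor) obstacle to appear, is handling the fact that $\tf{Send}$ samples nonces internally via $\tf{Rand}(\nonceSize)$ rather than relying on the IND\$-CPA oracle's nonce discipline. After $q$ queries, a nonce collision occurs with probability at most $q^2 / 2^{8\nonceSize}$, which is negligible for any polynomial $q$. A single preliminary game hop ruling out nonce collisions (or equivalently, invoking a variant of IND\$-CPA in which nonces are drawn uniformly by the experiment) cleans this up; after that hop, the reduction above is routine and incurs no further loss beyond the negligible collision term.
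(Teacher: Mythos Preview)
Your proposal is correct and follows essentially the same approach as the paper: both argue that every non-$\bot$ output of $\tf{Send}$ is either a direct call to $\tf{Rand}$ or an AEAD ciphertext under a freshly sampled nonce, so after a negligible-probability game hop ruling out nonce collisions the IND\$-CPA assumption yields indistinguishability from uniform. Your write-up is considerably more explicit (spelling out the reduction $B$, the per-branch case analysis, and the handling of the $\bot$ branch), whereas the paper's proof is a three-sentence sketch of exactly this argument.
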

\begin{proof}

The result follows from the fact that \tf{Send} in Figure \ref{fig:construction-datagram} always returns a direct call to \tf{Rand} or the output of our AEAD scheme, with a fresh nonce (generated at random when the \tf{Send} call is initiated). The probability of a nonce collision is negligible, and given no nonce collision, the IND-\$CPA property of our AEAD scheme guarantees that these outputs of \tf{Send} are computationally indistinguishable from random. 
\end{proof}

\begin{theorem}\label{prop:datagram-construction-chreg} The channel construction in Figure \ref{fig:construction-datagram} satisfies Datagram Channel Length Regularity \end{theorem}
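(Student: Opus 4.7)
The plan is to argue by direct case analysis on the control flow of $\tf{Send}$ in Figure~\ref{fig:construction-datagram}, leveraging the stateless design of the construction and the length-additive property of the AEAD scheme. Since the state is just the key $k$ and is never mutated, the behavior of the $i$-th $\tf{Send}$ call is determined solely by its inputs $(m, p)$, so I need not induct over the interleaving; it suffices to show that the length of the output of a single $\tf{Send}(k, m, p)$ call depends only on $p$, on whether $m = \nullsym$, and on $|m|$.

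First I would enumerate the branches of $\tf{Send}$ in the order they appear and record the output length in each: (1) $p < 0 \land m = \nullsym$ yields $|c| = \nullLength$; (2) $p < 0 \land m \in \messageSpace$ yields $|c| = |m| + 3 + \aeadOverhead$ by length additivity; (3) $m = \nullsym \land 0 \le p < \nullLength$ yields $|c| = p$; (4) $m = \nullsym \land \nullLength \le p \le \outLength$ yields $|c| = p$, again by length additivity applied to a plaintext of size $p - \aeadOverhead$; (5) $p > \outLength \lor |m| > p - \aeadOverhead - 3$ yields $c = \bot$; (6) the fall-through case encrypts a plaintext padded to length $p - \aeadOverhead$, so $|c| = p$. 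In each branch, both the branch condition and the output length are functions of $(|m|, [m=\nullsym], p)$ only.

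Next, I would invoke the hypothesis of Definition~\ref{def:length-regularity-datagram}: for each $i$, either $M^0_i = M^1_i = \nullsym$ or $|M^0_i| = |M^1_i|$ with neither equal to $\nullsym$. In either case the predicates $[m=\nullsym]$ and $|m|$ agree on $M^0_i$ and $M^1_i$, and $P_i$ is shared, so $\tf{Send}$ enters the same branch for both inputs, yielding outputs of identical length (or both $\bot$ in case (5)). Since the key $k$ is the only persistent state and is unchanged across calls, this conclusion holds independently at each index $i$.

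I do not expect any real obstacle: the construction is intentionally designed so that all length-determining logic is driven by the public inputs $p$ and $|m|$, and the AEAD's length additivity removes the only place where plaintext content could in principle affect ciphertext length. The only minor subtlety is being explicit that branch (5) returns $\bot$ symmetrically on both inputs (because the triggering inequality depends only on $|m|$ and $p$), so that the ``both $\bot$'' clause of length regularity is satisfied rather than accidentally producing a length mismatch.
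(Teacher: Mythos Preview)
Your proposal is correct and follows essentially the same approach as the paper: the paper's proof is a two-sentence sketch noting that the construction is stateless and that the output of $\tf{Send}$ depends only on $|m|$, the predicate $m=\nullsym$, and $p$, which is exactly what your case analysis makes explicit. Your treatment is simply a more detailed unpacking of the same observation, including the explicit handling of the $\bot$ branch that the paper leaves implicit.
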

\begin{proof}
We simply observe that our construction is stateless, and that for non-null messages, the output of the \tf{Send} function does not depend on the message content. For null messages, we note that in our definition either both messages are $\nullsym$, (and thus have the same output length in our construction given the same $p$) or neither are. 

\end{proof}

\begin{theorem} \label{prop:datagram-construction-int} The Datagram construction in Figure \ref{fig:construction-datagram} satisfies INT-CTXT. \end{theorem}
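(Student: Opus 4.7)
The plan is to reduce from an adversary against INT-CTXT for the channel directly to an adversary against the INT-CTXT (AUTH) security of the underlying AEAD scheme. Suppose toward contradiction that there is a PPT adversary $A$ that wins the channel's INT-CTXT experiment with non-negligible probability. I will build an adversary $B$ that, given access to AEAD encryption and decryption oracles under an unknown key $k$, simulates the channel for $A$ and forwards a forgery whenever one occurs.

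First, $B$ simulates $\mathcal{O}_{\tf{\intctxtsend}}(m, p)$ as follows. $B$ samples $\tok{nonce}\gets \tf{Rand}(\nonceSize)$ and then follows the case analysis of $\tf{Send}$ in Figure~\ref{fig:construction-datagram}. In the branches where the output is a direct call to $\tf{Rand}$ (namely $m=\nullsym$ with $0\le p<\nullLength$), $B$ generates the bytes itself; these ciphertexts have length strictly less than $\nullLength$, so as observed below they cannot contribute to a valid forgery. In every other branch where a non-$\bot$ ciphertext is produced, $B$ constructs the appropriate plaintext block (prefixed with the type byte and, when applicable, the length field and padding) and issues a query to its encryption oracle under the sampled nonce, forwarding the resulting ciphertext to $A$ and inserting it in the tracked set $C$.

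Second, $B$ simulates $\mathcal{O}_{\tf{\intctxtrecv}}(c)$ by mirroring $\tf{Recv}$: if $|c|<\nullLength$, return $\nullsym$; otherwise call its AEAD decryption oracle on $c$, and based on the output either return $\bot$, $\nullsym$, or the extracted plaintext, exactly as $\tf{Recv}$ does. If at any point $B$'s simulated $\tf{Recv}$ returns a value $m'\notin\{\bot,\nullsym\}$ for a ciphertext $c\notin C$, then by construction the AEAD decryption of $c$ succeeded and $c$ was never produced by $B$'s encryption oracle (since every AEAD-produced ciphertext was added to $C$, and every ciphertext in $C\setminus\{\text{AEAD outputs}\}$ has length strictly below $\nullLength$ and would have been rejected earlier). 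Thus $B$ submits $c$ as its AEAD forgery and wins with probability at least the advantage of $A$, minus the negligible probability of a nonce collision across the $\tf{Send}$ queries (which are independent uniform $\nonceSize$-byte strings).

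The main technical point to check is the bookkeeping of $C$ against $B$'s set of previously queried ciphertexts: one must verify that any ciphertext on which the simulated $\tf{Recv}$ produces a non-null plaintext and which lies outside $C$ is also outside the set of AEAD encryption outputs observed by $B$. This follows because every AEAD-produced send output is appended to $C$, and the only ciphertexts in $C$ that are not AEAD outputs are the short random strings of length $<\nullLength$, which are filtered out by $\tf{Recv}$'s first conditional before $\tf{Dec}$ is ever invoked. The rest of the argument is a routine simulation, so the reduction is tight up to the negligible nonce-collision term, contradicting INT-CTXT of the AEAD.
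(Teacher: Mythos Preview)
Your reduction is correct and follows essentially the same approach as the paper: both argue by case analysis on $\tf{Recv}$ inputs (short ciphertexts yield $\nullsym$, honest ciphertexts are suppressed, and fresh long ciphertexts require an AEAD forgery), invoking the AUTH property of the underlying scheme. Your version is simply more explicit---you spell out the simulator $B$ and account for the nonce-collision term, whereas the paper's proof is a three-line sketch that appeals to AUTH directly without constructing the reduction.
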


\begin{proof}
Since our protocol is stateless, we consider inputs to the \tf{Recv} oracle by message type. Each message falls into one of the following categories. First, it is smaller than $\nullLength$, in which case \tf{Recv} returns $\nullsym$ in Line 2, and the oracle converts this to $\bot$ which is given to the adversary. Second, it is produced by \tf{Send} and longer than $\nullLength$, in which case it is suppressed. Third, it was not produced by \tf{Send}, in which case \tf{Dec} fails except with negligible probability by the AUTH property of our AEAD scheme. 

Thus, every input to the \tf{Recv} oracle in the INT-CTXT game produces $\bot$ except with negligible probability. 
\end{proof}

\begin{theorem} \label{prop:datagram-shaping} The channel construction in Figure \ref{fig:construction-datagram} satisfies \shaping.
\end{theorem}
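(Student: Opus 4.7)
The plan is a direct case analysis on which return statement of \tf{Send} fires, using the length-additivity assumption on the AEAD scheme, namely $|\tf{Enc}_k(\tok{nonce},m')| = |m'| + \aeadOverhead$. Since the construction is stateless and each conditional is mutually exclusive (modulo the early-return structure), I would enumerate the reachable branches and compute output lengths in each.

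For clause (1) of Definition~\ref{def:length-shaping-datagram}, I restrict attention to $p \geq 0$ and to the branches that do not return $\bot$. The first two conditionals only fire when $p < 0$, so they are irrelevant here. That leaves three branches: (i) when $m = \nullsym$ and $0 \leq p < \nullLength$, \tf{Send} returns $\tf{Rand}(p)$, which has length $p$ by definition of \tf{Rand}; (ii) when $m = \nullsym$ and $\nullLength \leq p \leq \outLength$, \tf{Send} encrypts the plaintext $0 \Vert 0^{p - \nullLength}$ of length $1 + (p - \nullLength) = p - \aeadOverhead$ (using $\nullLength = 1 + \aeadOverhead$), and length additivity gives output length exactly $p$; (iii) in the final branch, the conditional on Line~10 ensures that the padding width $p - |m| - \aeadOverhead - 3$ is nonnegative, so the plaintext $1 \Vert |m| \Vert 0^{p - |m| - \aeadOverhead - 3} \Vert m$ has length $1 + 2 + (p - |m| - \aeadOverhead - 3) + |m| = p - \aeadOverhead$, and length additivity again gives output length $p$.

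For clause (2), I trace which branch is entered when $m = \nullsym$ and $p \leq \outLength$. If $p < 0$, Line~3 fires and returns a genuine AEAD ciphertext, which is not $\bot$. If $0 \leq p < \nullLength$, Line~7 fires and returns $\tf{Rand}(p) \neq \bot$. If $\nullLength \leq p \leq \outLength$, Line~9 fires and returns an encryption, which is not $\bot$. The $\bot$-returning branch on Line~11 cannot be reached in any of these subcases, since the earlier $m = \nullsym$ conditionals will have triggered first.

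There is no substantive obstacle; the proof is pure bookkeeping. The only thing to be careful about is verifying arithmetic consistency in branch (iii) (that the padding length is nonnegative exactly when Line~10's error branch is not taken) and ensuring the case split on $p$ for clause (2) is exhaustive across the three regimes $p < 0$, $0 \leq p < \nullLength$, and $\nullLength \leq p \leq \outLength$.
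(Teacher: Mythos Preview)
Your proposal is correct and takes essentially the same approach as the paper's own proof: a case analysis on which return statement of \tf{Send} fires, invoking length additivity of the AEAD scheme for the encryption branches and checking that the $m=\nullsym$ cases exhaust the range $p\leq\outLength$. Your version is slightly more explicit in the arithmetic (spelling out the plaintext-length computations for branches (ii) and (iii)), but the structure and key ingredient are identical.
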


\begin{proof}
For the first requirement, suppose $c \neq \bot$ and $p \geq 0$. Thus, \tf{Send} returns on Line 7, 9, or 13. Each of these outputs has length $p$: on Line 7 the result is trivial, and for the remaining two outputs we simply observe that for our AEAD scheme, for any two messages, $|\tf{Enc}_k(\tok{nonce},m_1\Vert m_2)| = |\tf{Enc}_k(\tok{nonce},m_1)| + |m_2|$. 

For  the second, suppose $m = \nullsym$, $p \leq \outLength$. Then if $p<0$, the \tf{Send} produces a non-error result immediately on Line 3. Otherwise, the remaining ranges for $p$ are covered in the cases defined on Line 6 and 8, returning a non-error result in each case. 

\end{proof}

\begin{theorem} \label{thm:datagram-construction-fepcca}
The channel construction in Figure \ref{fig:construction-datagram} satisfies FEP-CCA and IND-CCA.
\end{theorem}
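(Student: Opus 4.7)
The plan is to obtain this theorem purely by composing results already established earlier in the excerpt, with no new cryptographic argument required. The previous theorems have done all of the real work: Theorem~\ref{prop:datagram-construction-fepcpa} gives FEP-CPA for the construction, Theorem~\ref{prop:datagram-construction-int} gives INT-CTXT, and Theorem~\ref{prop:datagram-construction-chreg} gives Datagram Channel Length Regularity. The two meta-theorems about relations, Theorem~\ref{prop:datagram-composition} and Theorem~\ref{prop:datagram-fepcca-indcca}, then chain these into the desired conclusions.

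First I would invoke Theorem~\ref{prop:datagram-composition}, which states that FEP-CPA together with INT-CTXT implies FEP-CCA in the datagram setting. Plugging in Theorems~\ref{prop:datagram-construction-fepcpa} and~\ref{prop:datagram-construction-int} immediately yields that the construction of Figure~\ref{fig:construction-datagram} satisfies FEP-CCA. Second, I would invoke Theorem~\ref{prop:datagram-fepcca-indcca}, which states that FEP-CCA together with Datagram Channel Length Regularity implies IND-CCA. Combining the FEP-CCA result just obtained with Theorem~\ref{prop:datagram-construction-chreg} then gives IND-CCA, completing the proof.

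Because the argument is a direct two-step composition, there is no real obstacle. The only thing to double-check is that the hypotheses of each relation theorem exactly match the properties established for this construction (in particular, that the notion of Datagram Channel Length Regularity used in Theorem~\ref{prop:datagram-fepcca-indcca} is the same as in Theorem~\ref{prop:datagram-construction-chreg}, which it is by Definition~\ref{def:length-regularity-datagram}), and that the AEAD assumptions used to prove the three building-block theorems (IND\$-CPA, INT-CTXT, and length additivity) are precisely those assumed in the hypothesis of Theorem~\ref{thm:datagram-construction-all}. Given this, the proof can be written in just a few lines that cite the four prior results and apply the two composition theorems in sequence.
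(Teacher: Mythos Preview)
Your proposal is correct and follows essentially the same approach as the paper: it cites Theorems~\ref{prop:datagram-construction-fepcpa}, \ref{prop:datagram-construction-int}, and \ref{prop:datagram-construction-chreg} as the premises and then applies Theorems~\ref{prop:datagram-composition} and \ref{prop:datagram-fepcca-indcca} to obtain FEP-CCA and IND-CCA, respectively. The paper's proof is the same two-step composition, stated in a single sentence.
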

\begin{proof}
Theorems \ref{prop:datagram-construction-fepcpa},  \ref{prop:datagram-construction-chreg},  and \ref{prop:datagram-construction-int} establish the premises to apply Theorems \ref{prop:datagram-composition} and \ref{prop:datagram-fepcca-indcca}, giving the desired results. 
\end{proof}

\section{Detailed Descriptions of Analysis and Experiments on Real-World FEPs}
\label{appendix:experiments}

Here we fill in the details for the experiments we performed to determine the close behavior for real-world FEPs, and analysis of the specific message formats and reasoning we used to arrive at our results regarding minimum lengths. We first describe our general experimental design, and then give the details for each protocol. 

\subsection{Experimental Setup}
Unless otherwise stated, experiments were performed in a Pop! OS 22.04 Linux
environment running Linux Kernel 6.6.10, and messages were sent locally, via the
IPv4 loopback interface. We wrote custom software to run our experiments which
we make publicly available at~\url{https://github.com/efenske/bytes_to_schlep/}. In the datastream setting, each protocol consists of two applications, which we refer to as a Server Proxy and Client Proxy. We wrote a client, server, and an intermediate meddler-in-the-middle (MiTM) proxy to modify bytes in transit between the Client Proxy and Server Proxy. The MiTM proxy can be configured with the direction of the message to modify (towards the server, which we referred to as `out', `in' otherwise), the message to modify (specifying the ordinal $n$ to modify the $n$th message), and the byte within the message to change. By default, we modified the target byte by replacing it with its XOR by the value 1. We introduce two additional transparent MiTM proxies between the server and client themselves and their proxy applications  which relayed all connections and messages unchanged, to facilitate easy analysis of packet captures (thus, every connection in the experiment is connected to at least one static, bound port, making it trivial to later determine the sender and receiver of any message or fragment by its source and destination ports alone). Our experiments initialized all MiTM and proxy servers in reverse order (beginning with Server and ending with the MiTM between the Client and Client Proxy). Then we ran the client application, which sent a sequence of ten messages, each with a user-determined quantity of a distinct plaintext byte. The server responded to each client message by sending a message with the same repeated byte it received in a different quantity. By default to the client sent 600 bytes in each message, and the server sent 500. We chose these values since they were able to be encapsulated in a single message or ciphertext fragment for all of our analyzed protocols, were unlikely to be fragmented or merged by the operating system due to their size, and we chose the client and server length to be distinct to make packet captures easy to read. During experiments to validate minimum lengths, the client and server both sent instead messages of length 1. We ran TShark during all experiments listening on the loopback IPv4 interface writing to a file. The packet captures included the local timestamp for each frame, source and destination ports, message lengths, and the TCP message type (with specific fields for TCP FIN and TCP RST). After the experiment concluded, we wrote software to analyze the packet capture and experiment log, extract the packets within the valid time window for each individual experiment, and identify the sender and receiver of the first TCP FIN or RST frame by the destination and source port. Since many protocols include some handshake or initial state which transitions later to a data transport phase, we ran our experiments both modifying the initial few messages (n=1,2,3) and also messages later in the stream (n=5,10). Once we identified a close behavior, we ran the experiment multiple times, modifying different bytes in the fragment to confirm the result before indicating the close behavior on Table~\ref{tab:feps}. 

In the datagram setting and for kcptun, our experimental design varied, since these protocols required slightly different deployment environments. 

To assist the reader in following our reasoning and in reproducing our source code analysis, we often reference specific lines of source code. We give them by relative path from the root of the project repository (which is linked as a reference for each protocol), and we refer to the line numbers present in the version of repository as of February 6, 2024. 

\subsection{Protocols}

\textbf{Shadowsocks-libev}. Shadowsocks-libev~\cite{shadowsockslibev} is an implementation of Shadowsocks~\cite{shadowsocks} available in the official Debian repository. It is written in C and designed to be lightweight. Shadowsocks is an encrypted SOCKS5 proxy that requires a pre-shared key. The client begins the protocol by transmitting an IV, after which messages are AEAD encrypted, with each ciphertext block prepended by a two-byte length block, which is itself also AEAD encrypted. The Shadowsocks-libev source code explicitly includes logic in the design of the server to handle close behavior by way of a special state called STAGE\_STOP (see \eg Line 971 of src/server.c) which is entered after receiving a message that fails to decrypt, during which all messages are ignored and the connection is held open until terminated by the client. We deployed Shadowsocks-libev version 3.3.5 using the ChaChaPoly1305 cipher suite as recommended in the documentation.  For channel closures, we observed that in the outgoing direction, the server kept the connection open without terminating the connection throughout the length of the experiment, but in the incoming direction (\ie messages directed at the client), the client immediately terminates the TCP connection upon receiving a modified message (whether the modification occurs in the payload or tag of either block in a given message). Shadowsocks does produce error messages when messages fail to authenticate, and so its behavior in the outgoing direction does not technically satisfy FEP-CCFA, but with minor protocol modifications or an application that simply ignores errors, the definition is plausibly satisfied. Shadowsocks does not pad its messages. Since the authentication tag for this stream cipher has length 16, we expected the minimum length of an output fragment under this protocol to be 35 (two tags, two bytes of length data, and a one byte payload) and indeed observed this behavior in our experiments.

Since the protocol applies authenticated encryption to all messages using a cipher scheme which satisfies IND\$-CPA, it plausibly could satisfy FEP-CCFA, but the client's close behavior is incompatible with any secure close function. 

\textbf{V2Ray Framework}. The V2Ray framework~\cite{v2ray} implements a variety of protocols for the purposes of censorship circumvention including two FEPs: VMess~\cite{vmess}, and Shadowsocks. Shadowsocks is an alternate implementation of the same protocol described above, while VMess is a complex custom FEP which includes many anti-censorship features. It consists of an early phase, where initial and header messages are sent in both directions, followed by a data transport phase. We assessed both protocols as implemented in the V2Ray framework version 5.7.0. and configured both protocols using the sample configuration files. The V2Ray framework implements a so-called ``drain'' approach to frustrate active probing attempts, wherein a random number of bytes are selected for each connection, randomized on a per-user and per-connection basis, and the connection is not dropped until after this number of bytes have been read throughout the duration of the connection. In the course of our experiments on these V2Ray protocols we identified behavior that violates traditional security notions for encrypted protocols: they do not satisfy integrity. We observed that during the data transport phase of each protocol, if a byte within the ciphertext containing data is modified so that the authentication tag fails to validate, the payload is dropped but the connection is not closed, meaning that both protocols continue to parse new messages, so an active network adversary can by modifying appropriate bytes in transit drop certain messages from the stream. We disclosed our results and in particular this vulnerability to the framework maintainers on January 4, 2024 including a minimal working example of the attack. Thus, while both protocols satisfy FEP-CPFA, neither can satisfy FEP-CCFA since they do not satisfy INT-CST. 

Specific to Shadowsocks, we observed that the Shadowsocks protocol as implemented does not apply the draining behavior on the client side, and instead terminates the connection immediately if it receives an error. This has been confirmed to us as an known bug by the project maintainer. On the server side, the protocol drains as expected in its initial state, and in the data transport phase terminates the connection immediately when a length block fails to decrypt and authenticate. The V2Ray implementation of Shadowsocks uses the same message format as that of Shadowsocks-libev, and so produces the same 35 byte minimum length messages, which we observed in our experiments. 

Specific to VMess, we note that VMess includes a random string of random padding at the end of transmitted messages which is not authenticated, and so may be modified in transit without altering any protocol behavior. Finally, during its data transport phase, VMess prepends each encrypted ciphertext with a two byte encrypted but unauthenticated length field, meaning the length can be modified by an active adversary. Because of the integrity issues identified above, in practice modifying a length block will cause the protocol to hang indefinitely, since all following ciphertexts will fail to authenticate and be dropped silently. The protocol maintainer informed us that VMess sends keepalive messages during its data transport phase, which consist of a two byte length header along with a 16 byte authentication tag and an empty payload for a minimum length of 18 bytes, along some number of additional padding bytes, which is selected at random from 0-63 (see Line 114 of proxy/vmess/encoding/auth.go), making the minimum size 18 which we observed in our experiments.

\textbf{(lib)InterMAC}. (lib)InterMAC is a protocol proposed by Boldyreva \etal~\cite{boldyreva2012security} and implemented by Albrecht \etal~\cite{albrecht2019libintermac} which modifies the SSH Binary Packet Protocol in order to provide security properties associated with ciphertext lengths and fragmentation. The protocol transmits encrypted and authenticated fixed-sized chunks with the chunk size predetermined as a configuration parameter, padding to achieve this length if necessary. In response to our queries, the maintainer updated the libInterMAC repository with a complete implementation of the protocol on September 27, 2023, which is the version we tested. When a chunk fails to authenticate, libInterMAC terminates the connection in either direction as we verified in our experiments. We note that this protocol is not in use, and only encrypts the data transport phase of an SSH connection meaning that the existing InterMAC implementation is not properly a FEP, but we considered it valuable to include it here as an example of a distinct and reasonable approach to this problem from the others discussed here.

\textbf{Obfs4}. Obfs4~\cite{obfs4} is a fully encrypted transport protocol for Tor~\cite{dingledine2004tor}. It is a complex protocol that includes a key exchange and resists active probing attacks. The data transport phase of the protocol includes AEAD encrypted ciphertext blocks prepended by a two byte unauthenticated length field which is encrypted by a stream cipher. We installed Obfs4Proxy version 0.1.3 from the official Debian repository and used ptadapter~\cite{ptadapter} to run the transport as a standalone proxy. When a ciphertext fails to decrypt, the connection is immediately terminated by the receiver. This also occurs if an adversary modifies the length field in transit and the resulting expected ciphertext of an incorrect length fails to decrypt, which we were able to verify in our experiments. Thus while the protocol plausibly satisfies INT-CST (if a length field is modified, the protocol produces only channel closures/errors) it does not satisfy FEP-CCFA, since this close function is not secure (it requires knowledge of the encrypted lengths).

Obfs4 includes random padding, and selects a random target output length. This random target is understood to be a value modulo the maximum frame length for the protocol (a fixed parameter), and is selected from a set of output lengths according to a weighted distribution, where the weights and length values are selected based on a random seed specific to each Obfs4 server, constructed when the server is first initialized and saved for future use. If the target output length is too small to add any padding, the protocol instead pads to a full frame (~1500 bytes) before adding the desired padding amount (see the padBurst function: transports/obfs4/obfs4.go Line 607). Obfs4 messages include a 2 byte packet length, a 2 byte frame length, a 1 byte packet type token, and a 16 byte authentication tag, with a minimum payload of 1  byte (whether padding or data). Since the minimum plaintext size is 1 and Obfs4 always includes at least one padding packet, the minimum output length is 44 (two packets with payload length 1; one with a byte of padding and one with data). However, each server has \emph{its own individual minimum length}, which is the smallest value larger than 44 in its list of possible target output lengths. We were able to confirm this value is produced by the protocol under these circumstances by producing a modified version of the Obfs4 binary which replaces the function call sampling a value from the server-specific distribution with the constant 44, and subsequently observed messages of this length in our experiments, while replacing this with 43 produces only messages padded to at least a full frame. 

\textbf{OpenVPN-XOR}. OpenVPN~\cite{openvpn} is a VPN protocol that can be patched to encrypt all outputs by XOR with the repetition of a pre-shared key~\cite{xor}. We analyzed OpenVPN version 2.6.8 as provided from the official Debian repository without the XOR patch applied (and assume that the XOR encryption should not modify message lengths) using the parameters set in the example configurations provided with the software, which deploy the proxy using the AES256-GCM cipher suite and with TLS mode turned on, with other configuration options mostly remaining as default. The obfuscation method in place is trivially identifiable and other work~\cite{xue2022openvpn} has carefully studied its close behavior, which is to terminate the connection after an authentication failure. We explain the structure of OpenVPN messages as described in the official documentation~\cite{openvpn-network} under these default settings. OpenVPN messages are either data channel messages or control channel messages. Control channel messages are a minimum of 40 bytes if SHA-1 is used for the protocol's HMAC (which is the default) and if the 4-byte timestamps intended to combat replay attacks are switched off (in our installation, they are on by default). This means adopting the default settings, control channel messages are 44 bytes. Version 2 data channel messages under these settings include a 26 byte header, including a two-byte length field. We did not observe Version 1 data channel messages in our experiments under the default configuration. OpenVPN typically encapsulates other protocols in the data channel, including their headers, but also directly sends messages of its own, including a fixed 16-byte ping message (explicitly defined in src/openvpn/ping.c Line 42), which was the smallest we observed in our experiments. Thus, the minimum length we identified in the TCP setting was 42. 

\textbf{Obfuscated OpenSSH}. Obfuscated OpenSSH~\cite{obfsSSH} is a patch to OpenSSH wherein key material is transmitted in the clear (or a PSK is used), after which the handshake protocol is fully encrypted (the binary packet protocol for SSH is fully encrypted without any modification). The approach of Obfuscated OpenSSH is used in the Psiphon~\cite{psiphon} anti-censorship application, designed for Windows and mobile devices. We analyzed Obfuscated OpenSSH as downloaded from the official repository, which is a modification of OpenSSH 5.2\_p1 but since the protocol includes complex handshake logic we ran it directly through a MiTM proxy to modify bytes in transit through the loopback interface and analyzed the resulting packet captures with Wireshark. OpenSSH terminates the connection immediately in both directions when a binary packet protocol message fails to authenticate, and we found in our experiments that this is no different after the obfuscation patch is applied. OpenSSH does not pad message lengths though this is permitted in the SSH specification. According to the RFC for the binary packet protocol, "The minimum size of a packet is 16 (or the cipher block size, whichever is larger) bytes (plus 'mac')"~\cite{ssh-rfc}. For the initial messages, before the session key is negotiated, the MAC field is empty, meaning we expected the minimum to be 16 bytes. We ran the protocol in obfuscated and non-obfuscated configuration, and identified 16 byte messages appearing as part of the initial handshake in both circumstances. In the non-obfuscated captures, we identified this as a New Keys message, which indicates the switch to the recently negotiated session keys. 

\textbf{kcptun}. kcptun~\cite{kcptun} is a custom datastream transport protocol available in the official Debian repository, implemented via UDP datagrams and designed for very noisy networks. The protocol uses error correcting codes and sends many redundant messages to achieve a high probability that the sequence of messages can be reconstructed on the receiver side even in degraded network conditions. kcptun packets are by default encrypted using AES in Cipher Feedback mode, with a Cyclic Redundancy Check applied to the plaintext and prepended before encryption, providing non-cryptographic authentication, meaning that while the messages are fully encrypted, the protocol does not satisfy FEP-CCFA since it does not satisfy INT-CST. When the CRC is invalid, the packet is dropped, and kcptun measures the number of dropped packets and adjusts its internal parameters accordingly, significantly and predictably increasing the rate at which redundant messages are sent in direct proportion to the network packet loss. We wrote distinct software to proxy UDP messages, and did not notice a change in protocol behavior with increased induced packet loss beyond a predictable change in volume of packet transmissions. kcptun does not include padding and  uses empty messages as keepalives which are the smallest that arose in our experiments, which are 16 (nonce) + 4 (CRC) + 24 (KCP headers) + 8 (Error Correction metadata)  bytes for a total of 52 byte keepalive messages. The message format for kcptun messages is complex: a more thorough description of the headers, encapsulation format, and final message layout is given in the project page for the dependency kcp-go~\cite{kcp-go}.

\textbf{Shadowsocks-libev (UDP)}. We also analyzed Shadowsocks-libev in its UDP configuration using the same version, cipher suite and default settings as in the datastream setting. The protocol simply drops packets that fail to authenticate. The Shadowsocks datagram message format includes a fresh 32-byte nonce at the beginning of each datagram, and AEAD encrypts the payload with a 16 byte authentication tag. In order for a Shadowsocks datagram message to be constructed the sender must in addition supply a target address in the SOCKS5 format, which consists of a one byte address type (IPv4, IPv6, or a string), the address, and a two byte port. Addresses are 4 bytes for an IPv4 address, producing a minimum length of an empty datagram as 32 + 16 + 7 = 55 bytes, but we note that if the address is given as a string, the length of the ciphertext depends on domain name of the address and thus the proxy can produce messages of length smaller than 55. Using one byte for the string length and an empty string as the destination domain, we were able to induce the proxy to produce messages of length 52, though we note this is only possible under pathological conditions, i.e. the server is asked to transmit empty messages to an invalid domain. Thus we consider 55 the minimum under standard conditions.

\textbf{SWGP-go}. SWGP~\cite{swgp} is a proxy designed for the Wireguard VPN protocol, intended to transform Wireguard into a datagram FEP. It behaves in two modes: one called ``zero overhead'', where each Wireguard packet has its first 16 bytes encrypted via AES-ECB using a pre-shared key, and ``paranoid'', which pads and AEAD encrypts each message. We assess the protocol in its paranoid mode. We deployed SWGP Version 1.5.0 configured with the default settings suggested by the maintainers on the project landing page. Since we suspected that running both multiple Wireguard and SWGP instances locally would be cumbersome, we ran our experiments through Docker-Compose~\cite{docker-compose} and constructed three Docker images, a client image (including the client Wireguard instance and the client SWGP application), a router image which passed on messages transparently while recording them via TShark, and a server image. The client image also included a Python program to send an empty, raw IP packet through the interface every 5 seconds. Each experiment ran for 5 minutes and we assessed the resulting packet captures recorded by the router image. We observed in the source code that the paranoid protocol padded each message to a random length between one more than the packet's minimum length and the MTU, which is set as a configuration option and is 1500 as default. Thus we hypothesized the minimum lengths to be the smallest messages in Wireguard, which are 32 byte keepalives, encapsulated with a 24 byte nonce and a 16 byte authentication tag, and a 2 byte length field (the encapsulation format is given in packet/paranoid.go, Line 18) with one required minimum padding byte for a total of 75, which is the smallest we observed in our experiments.

\textbf{OpenVPN-XOR (UDP)}. This protocol is nearly identical to the datastream version of OpenVPN, and has been similarly analyzed in literature~\cite{xue2022openvpn}. Again, we identify the minimum lengths using the same approach, though in the UDP setting there is no two-byte length field prepended to every message, meaning that the length of the keepalives (and thus the minimum length) is 40, which we observed in our experiments. 

\fi
\end{document}